 \newtheorem{theorem}{Theorem}[section]
 \newtheorem{corollary}[theorem]{Corollary}
 \newtheorem{lemma}[theorem]{Lemma}
 \newtheorem{proposition}[theorem]{Proposition}
 \newtheorem{example}[theorem]{Example}
\theoremstyle{definition}
 \newtheorem{definition}[theorem]{Definition}
 \newtheorem{remark}[theorem]{Remark}
 \newtheorem*{mmseconj}{MMSE Conjecture}
 \newtheorem*{compmonoconj}{Completely Monotone Conjecture}
 \newtheorem*{proof1}{Proof of Theorem \ref{T2}}
 \newtheorem*{proof2}{Proof of Theorem \ref{T1}}
 \newtheorem*{proof3}{Proof of Theorem \ref{T3}}}
 \newcommand{\MMSE}{\mathit{MMSE}}
 \newcommand{\mmse}{\mathit{mmse}}
 \newcommand{\sMD}{\textit{SMD}}
 \newcommand{\E}{\mathbb{E}}
 \renewcommand{\P}{\mathbb{P}}
 \newcommand{\R}{\mathbb{R}}
 \newcommand{\N}{\mathbb{N}}
 \newcommand{\Z}{\mathbb{Z}}
\newcommand{\dint}{{\rm d}}
\title{Derivatives of entropy and the MMSE conjecture}
\author{Paul Mansanarez}
\address{
Paul Mansanarez, Universit\'e de Rennes/Universit\'e libre de Bruxelles, France/Belgium. E-mail:
paul.mansanarez@ulb.be}
\author{Guillaume Poly}
\address{Guillaume Poly, Universit\'{e} de Rennes, France. E-mail:guillaume.poly@univ-rennes.fr}
\author{Yvik Swan}
\address{
Yvik Swan, Universit\'e libre de Bruxelles, Belgium. E-mail:
yvik.swan@ulb.be}
\begin{document}

\begin{abstract}
{We investigate the entropy $H(\mu,t)$ of a probability measure $\mu$ along the heat flow and more precisely we seek for closed algebraic representations of its derivatives. Provided that $\mu$ admits moments of any order, it is indeed proved in \cite{Ver} that $t\mapsto H(\mu,t)$ is smooth, and in \cite{Led} that its derivatives at zero can be expressed into multivariate polynomials evaluated in the moments (or cumulants) of $\mu$. In the seminal contribution \cite{Led}, these algebraic expressions are derived through $\Gamma$-calculus techniques which provide implicit recursive formulas for these polynomials. Our main contribution consists in a fine combinatorial analysis of these  inductive relations and for the first time  to derive closed formulas for the leading coefficients of these polynomials expressions.~\\
~\\
Building upon these explicit formulas we  revisit the so-called  ``$\MMSE$ conjecture'' from  \cite{Ver} which asserts that two distributions on the real line with the same entropy along the heat flow must coincide up to translation and symmetry. Our approach enables us to provide new conditions on the source distributions ensuring that the $\MMSE$ conjecture holds and to refine several criteria proved in \cite{Led}. As illustrating examples, our findings cover the cases of uniform and Rademacher distributions, for which previous results in the literature were inapplicable.}


\end{abstract}

\maketitle


\section{Introduction}

\subsection{Entropy and Fisher information along the heat flow}

 Let $X$ be an arbitrary real random variable, and $N$ an independent  standard Gaussian. The \emph{heat flow}  starting from $X$  is the process
 \begin{equation}\label{eq:heatfl}
     X_t = X + \sqrt{2t} N, \quad t>0;
 \end{equation}
    this denomination follows because $X_t$'s probability density function  $f_t(x)$     satisfies the heat equation $\partial _t f_t(x) = f_t''(x)$ for all $t>0$ and all $x\in \mathbb{R}$ ($'$ denotes  the spatial derivative). Closely related to the heat flow is the so-called  \emph{Gaussian channel} 
  \begin{equation}\label{eq:gaucha}
 Y_{s} = \sqrt{s} X + N, \quad  s >0
 \end{equation}
  (the  parameter $s$ in this context is called  the  \emph{signal-to-noise ratio}); the  density of $Y_{s}$ is obtained from a simple reparameterisation of the density $f_t(x)$ of $X_t$.  The heat flow  \eqref{eq:heatfl} (or equivalently the Gausian channel \eqref{eq:gaucha}) is the most basic model for  an observed  signal obtained by perturbing  a random unobserved source $X$ through an independent additive Gaussian noise and  the problem of studying the relation between   $X$ and   $X_t$ (or, equivalently, $Y_{s}$) is a  fundamental  problem in   information and  signal theory, see \cite[Chapter 9]{Cover2005} or \cite{Ver} for an overview.

Let $v_t = \log f_t$  so that $v_t' = f_t'/f_t$ is the model's score function along the heat flow. Entropy and Fisher information being canonical measures of a signal's disorder and information content, we introduce the functionals 
\begin{align*}
     H(X, t) = - \mathbb{E}[v_t(X_t)] := -\int_\R f_t(x) \log f_t(x) \, \dint x 
     \mbox{ and } I(X, t)  =  \mathbb{E}[
{(v'_t(X_t))^2}] 
\end{align*}
which express the entropy and Fisher information of $X_t$ as a function function of the variance of the added Gaussian noise. From the heat equation, these two functions are seen to be related through the famous \emph{De~Bruijn identity} 
\[\partial_t H(X, t) = I(X, t) \quad t>0\]
(first reported in  \cite[equation (2.12)]{stam1959some}) which indicates that the rate of increase of the entropy along the heat flow is precisely the Fisher information.    In particular entropy is increasing along the heat flow. 

Second and third order derivatives  of $H(X, t)$ were used in \cite{costa1985new, dembo1989simple} to prove  concavity in $t$ of  the entropy power function.  More recently, 
\cite{Ver}  obtain explicit formulas  for the derivatives up to order 5. In all these cases the derivatives of entropy (and thus of Fisher information) are  polynomials in some  conditional moments of the underlying signal $X$. To the best of our knowledge, the only complete description of the derivatives of $H(X, t)$ is provided by      \cite{Led} where  the $\Gamma$-calculus formalism introduced in   \cite{LedG} is used to characterise all higher order derivatives of $H(X, t)$ as expectations of certain implicit polynomials evaluated on moments of the score function of $X_t$.   More precisely, the following result holds.

\begin{theorem}[Theorem 1, \cite{Led}]\label{entr-deri}
    If $X$ admits moments at all orders then
    \begin{equation} \label{eq:partiLed}
            \partial_t^n H(X,t) = (-2)^{n-1}  \mathbb{E} \left[ V_n^2+{R}_n \big (V_2, \ldots, V_{n-1}\big ) \right]
    \end{equation}
    where $V_k := v_t^{(k)}(X_t)$  with $v_t^{(k)}$ the $k$th derivarive of $v_t$ (in space) and  $R_n \in \Z[T_2, \ldots, T_{n-1}]$ is a multivariate polynomial in $n-2$ variables defined in terms of iterated gradients, see Theorem \ref{thom1led}. 
\end{theorem}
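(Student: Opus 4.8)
The plan is to prove the formula \eqref{eq:partiLed} by induction on $n$, with De~Bruijn's identity as the base case and a single ``differentiate-then-reduce'' step --- the reduction being an integration by parts --- as the inductive engine. First I would clear the analytic ground. Since $X$ has moments of all orders, $f_t$ and all its spatial derivatives are smooth, strictly positive and decay faster than any polynomial, while $v_t^{(k)}$ grows at most polynomially; this is exactly the regularity behind the smoothness of $t\mapsto H(X,t)$ proved in \cite{Ver}, and it legitimises differentiating $t\mapsto\mathbb{E}[\Phi(V_1,\dots,V_m)]$ under the integral sign for every polynomial $\Phi$, as well as all the integrations by parts below (whose boundary terms vanish). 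Differentiating $\partial_tf_t=f_t''$ gives the pointwise identity $\partial_tv_t=v_t''+(v_t')^2$, hence $\partial_tV_k=V_{k+2}+\sum_{j=0}^k\binom{k}{j}V_{j+1}V_{k+1-j}$; the case $n=1$ is De~Bruijn's identity $\partial_tH(X,t)=\mathbb{E}[V_1^2]$, i.e.\ $P_1=T_1^2$ and $R_1=0$.

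Next I would record the master formula obtained by combining $\partial_t(f_tQ)=f_t''Q+f_t\,\partial_tQ$ with two integrations by parts in the first summand. Treating each $v_t^{(k)}$ as a formal variable $T_k$, one gets for every $Q\in\mathbb{Z}[T_1,T_2,\dots]$
\[
\partial_t\,\mathbb{E}\big[Q(V_\bullet)\big]=\mathbb{E}\big[(\mathcal{D}Q)(V_\bullet)\big],\qquad V_\bullet:=(V_1,V_2,\dots),
\]
where $\mathcal{D}$ is the explicit differential operator preserving $\mathbb{Z}[T_1,T_2,\dots]$ given by $\mathcal{D}Q=\delta^2Q+\sum_k(\partial_tT_k)\,\partial_{T_k}Q$, with the spatial derivation $\delta Q:=\sum_kT_{k+1}\partial_{T_k}Q$ and $\partial_tT_k:=T_{k+2}+\sum_{j=0}^k\binom{k}{j}T_{j+1}T_{k+1-j}$ (in $\Gamma$-calculus terms, $\mathcal{D}$ is the operator underlying the recursion of Theorem~\ref{thom1led}). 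Consequently $\partial_t^{n+1}H(X,t)=(-2)^{n-1}\,\mathbb{E}[(\mathcal{D}P_n)(V_\bullet)]$, and everything reduces to showing that, under the expectation, $\mathcal{D}P_n$ equals $-2$ times a polynomial of the form $T_{n+1}^2+R_{n+1}$ with $R_{n+1}\in\mathbb{Z}[T_2,\dots,T_n]$.

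This is the heart of the argument. Assume $P_n=T_n^2+R_n$ with $R_n\in\mathbb{Z}[T_2,\dots,T_{n-1}]$; three structural observations then close the inductive step. \emph{(i) $T_1$ disappears.} Since $P_n$ has no $T_1$, the entire $T_1$-dependence of $\mathcal{D}P_n$ is the term $2T_1\,\delta P_n$ (the $j\in\{0,k\}$ part of $\sum_k\sum_j\binom{k}{j}T_{j+1}T_{k+1-j}\partial_{T_k}P_n$); the identity $\mathbb{E}[V_1\,W(X_t)]=-\mathbb{E}[(\partial_xW)(X_t)]$, which follows from $f_t'=f_tv_t'$, converts $\mathbb{E}[2V_1\,\delta P_n(V_\bullet)]$ into $-2\,\mathbb{E}[\delta^2P_n(V_\bullet)]$, so that $\mathbb{E}[(\mathcal{D}P_n)(V_\bullet)]=\mathbb{E}[Q_{n+1}(V_\bullet)]$ for $Q_{n+1}:=\mathcal{D}P_n-2T_1\,\delta P_n-2\,\delta^2P_n$, a polynomial with no $T_1$. \emph{(ii) The top index cancels.} Expanding $Q_{n+1}$, the $\sum_kT_{k+2}\partial_{T_k}P_n$ terms cancel and
\[
Q_{n+1}=-\sum_{k,l}T_{k+1}T_{l+1}\partial_{T_k}\partial_{T_l}P_n+\sum_k\sum_{j=1}^{k-1}\binom{k}{j}T_{j+1}T_{k+1-j}\partial_{T_k}P_n,
\]
which involves only $T_2,\dots,T_{n+1}$, whose coefficient of $T_{n+1}^2$ equals $-\partial_{T_n}^2P_n=-2$ (the $k=l=n$ term of the first sum), and all of whose other monomials lie in $\mathbb{Z}[T_2,\dots,T_n]$. \emph{(iii) Integrality.} $Q_{n+1}$ has only even coefficients: in each double sum the involution $j\leftrightarrow k-j$ (resp.\ $(k,l)\leftrightarrow(l,k)$) pairs off all off-diagonal terms, while the surviving diagonal contributions carry either a central binomial coefficient $\binom{2m}{m}=2\binom{2m-1}{m-1}$ or the operator $\partial_{T_k}^2$, which multiplies each coefficient by a product of two consecutive integers --- in both cases a factor $2$. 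Hence $P_{n+1}:=-\tfrac12Q_{n+1}\in\mathbb{Z}[T_2,\dots,T_{n+1}]$, it has the form $T_{n+1}^2+R_{n+1}$ with $R_{n+1}\in\mathbb{Z}[T_2,\dots,T_n]$, and $\partial_t^{n+1}H(X,t)=(-2)^{n-1}\,\mathbb{E}[Q_{n+1}(V_\bullet)]=(-2)^n\,\mathbb{E}[P_{n+1}(V_\bullet)]$, closing the induction.

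The analytic preliminaries and the master formula are routine; the main obstacle is the combinatorial bookkeeping in the three observations above --- confirming that $T_1$ and every index exceeding $n+1$ genuinely vanish after the integrations by parts, pinning down the leading coefficient (equivalently, the prefactor $(-2)^{n-1}$), and verifying the parity that legitimises the factor $\tfrac12$. The most systematic way to package all of this is through iterated gradients ($\Gamma_k$), as in \cite{LedG} and Theorem~\ref{thom1led}; and it is precisely this combinatorial structure that the paper will subsequently exploit to extract closed formulas for the leading coefficients of the polynomials $R_n$.
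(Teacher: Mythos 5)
Your proposal is correct in substance, but it follows a genuinely different route from the paper: the paper does not reprove Theorem~\ref{entr-deri} at all --- it imports it from \cite{Led}, where the derivatives are organised through the iterated gradients $\tilde\Gamma_{n+1}=[\Delta+\Gamma,\tilde\Gamma_n]$ of \cite{LedG} (recalled in Theorem~\ref{thom1led}), and the only thing proved here is the induced recursion \eqref{RecFormula}, $R_{n+1}=A_n+L(R_n)+H(R_n)$, with integrality obtained afterwards by induction. What you do instead is re-derive that very recursion by bare hands: your operator $\mathcal{D}$, after the elimination of $T_1$ by integration by parts and the cancellation of the $\sum_k T_{k+2}\partial_{T_k}$ terms, yields exactly $Q_{n+1}=-2\bigl(T_{n+1}^2+A_n+L(R_n)+H(R_n)\bigr)$: the second-derivative double sum is the paper's $L=\tfrac12(D_1^2-D_2)$, your binomial sum restricted to $R_n$ is $H$, and its $k=n$ part acting on $T_n^2$ produces $A_n$. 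I checked your displayed formula for $Q_{n+1}$ and it reproduces $R_3=-2T_2^3$ and $R_4=-12T_2T_3^2+6T_2^4$, and your parity argument (pairing $j\leftrightarrow k-j$, $(k,l)\leftrightarrow(l,k)$, central binomials and $\partial_{T_k}^2$ on the diagonal) correctly gives the integrality that the paper records in Remark~\ref{coefinteger}. What your route buys is a self-contained, Lie-bracket-free proof of both the statement and of Lemma~\ref{inductionR}; what the $\Gamma$-calculus route buys is that the algebraic bookkeeping and, more importantly, the analytic justification (finiteness of $\mathbb{E}[|V_k|^p]$ under moment assumptions only, differentiation under the integral, vanishing of boundary terms) are already established in \cite{Led,Ver} --- in your write-up these are asserted with a pointer, and they are the genuinely delicate part if one insists on the sole hypothesis of finite moments.

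Two small repairs are needed in your induction. First, the step from $n=1$ to $n=2$ is not covered as written: observation \emph{(i)} starts with ``since $P_n$ has no $T_1$'', which fails for $P_1=T_1^2$ (indeed $\mathcal{D}P_1=2T_2^2+4T_1T_3+4T_1^2T_2$, whose $T_1$-dependence is not just $2T_1\,\delta P_1$). The conclusion survives, because the identity $\mathbb{E}[2V_1(\delta P)(V_\bullet)]=-2\,\mathbb{E}[(\delta^2P)(V_\bullet)]$ and the algebraic formula for $Q_{n+1}$ hold for arbitrary $P$, and one checks directly that $Q_2=-2T_2^2$; alternatively, start the induction at $n=2$ with the classical computation $\partial_t^2H=-2\,\mathbb{E}[V_2^2]$. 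Second, in observation \emph{(ii)} the claim that no monomial other than $T_{n+1}^2$ involves $T_{n+1}$ uses that $\partial_{T_n}\partial_{T_l}P_n=2\delta_{l,n}$, i.e.\ that $R_n$ is $T_n$-free; this is immediate from the induction hypothesis but should be said, since it is precisely what kills the cross terms $T_{n+1}T_{l+1}\partial_{T_n}\partial_{T_l}P_n$. Neither point affects the validity of the approach.
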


Despite the intense scrutiny to which it was submitted over the past decades,  several fundamental questions remain open regarding  $H(X, t)$ and its derivatives. 
A first conjecture concerns the signs of the derivatives of the entropy. Clearly Fisher information is positive, hence so is (from De Bruijn's identity) the first derivative of entropy along the heat flow. Similarly the second derivative of $I(X, t)$  can be computed explicitly and is easily seen to be positive. This is no longer the case at higher orders, in fact already at at order $n=2$ direct computations from formula \eqref{eq:partiLed}  yield 
\begin{equation*}
         \partial^2_t I(X,t)  = \E \big [ V_3^2-2V_2^3 \big ]. 
 \end{equation*}
  Positivity of this expression  is not immediate at first sight. However, in \cite{cheng2015higher}, by expressing the $V_i$ as polynomials in ratios $F_j:=f_t^{(j)}(X_t)/f_t(X_t)$, and using integration by parts, the above is seen to be equivalent to 
\begin{equation*}
        \partial^2_t I(X,t)  = \E \left [ \left (F_3-F_2F_1+\frac{1}{3}F_1^3 \right )^2 +F_1^6 \right  ].
\end{equation*}
So far it is not yet  known whether 
\begin{equation}\label{deri-I}
(-1)^{k}\partial_t^k I(X, t) \ge 0
\end{equation}
for $k \ge 4$; this is the \emph{``completely monotone conjecture''}.

\begin{compmonoconj}
    Let $X$ be a random variable such that $t\longmapsto I(X,t)$ is infinitely differentiable. Then the function $t \longmapsto I(X,t)$ is completely monotonous, that is 
    \begin{equation*}
        (-1)^n\partial_t^n I(X,t) \geq 0
    \end{equation*}
    for every $t>0$ and non-negative integer $n$.
\end{compmonoconj}

A second question is related to minimal conditions under which $H(X, t)$ or, equivalently, $I(X, t)$, characterizes the law of the underlying signal  $X$. Such conditions are indeed not yet completely understood. This question was coined the \emph{``MMSE conjecture''} in  \cite{Ver} (see Section \ref{sec:thesetting} for an explanation of the denomination). 
 
\begin{mmseconj}[Entropic form]\label{MMSEConj-entr}
Let $X$ and $Y$ be two real random variables. If $H(X,\cdot)=H(Y,\cdot)$ then for some $a\in \R$, $\P_{Y}=\P_{X+a}$ or $\P_{Y} = \P_{-X+a}$. 
\end{mmseconj}

\noindent This question is  reminiscent of the deep and classical \emph{moment problem}  which consists in finding conditions ensuring the moment determinacy of probability measures (replacing the equality of moments by the equality of  entropies along the heat flow). 
See Section \ref{sn: a contd} for a more detailed discussion. \medskip

These are not the only interesting conjectures related to this topic, and we refer to \cite{ledoux2022differentials} for an overview along with up-to-date references on several other open questions in the same spirit.

\subsection{Our contribution regarding the derivatives of the entropy}


As described below Proposition 9 in \cite{Led}, understanding the combinatorial structure of the polynomials $R_n$ from  Theorem \ref{entr-deri} is of interest.  
Relatively straightforward, though tedious, computations lead to the following expressions for small values of $n$:
\begin{align*}
R_2 & = 0, \\
R_3 & = -2T_2^3,\\ 
R_4 & = - 12T_2T_3^2 + 6T_2^4,\\
R_5  &= - 20T_2T_4^2 - 30T_3^2T_4 + 120T_2^2T_3^2 -24T_2^5, \\
R_6  &= - 30T_2T_5^2 - 120T_3T_4T_5 + 900T_2T_3^2T_4 + 300T_2^2T_4^2 - 30T_4^3 - 1200T_2^3T_3^2 \\ & \qquad + 210T_3^4 + 120T_2^6
\end{align*}
(only $R_6$ is new in the above list). 

The main result of this paper has a combinatorial flavour and provides the leading terms in the polynomial expansions from \eqref{eq:partiLed}.

\begin{theorem}\label{Main4} 
Let $n$ be an integer greater than $2$ and let $R_n \in \mathbb{R}[T_2, \ldots, T_n]$ be the polynomial defined in  \eqref{eq:partiLed}. Then 
\begin{equation}\label{formRn} 
R_{n+1} = -n(n+1)T_2T_n^2 + a_nT_n + \sum_{k=1}^{n-2} a_{n,k}T_{n-k}
\end{equation}
where $a_{n,k}$ is a polynomial in $\mathbb{Z}[T_2,\ldots,T_{n-k}]$ with degree equal to $2+k$ and $a_n$ is a polynomial in $\mathbb{Z}[T_2,\ldots,T_{n-1}]$  defined  for all $2m > 7$  by 
\[a_{2m} = -(2m+1)  \begin{pmatrix} 2m \\ m \end{pmatrix} T_{m+1}^2 - (4m+2)\sum_{k=3}^{m}   \begin{pmatrix} 2m \\ k-1\end{pmatrix} T_{2m+2-k}T_{k}\] 
and 
\[a_{2m+1} = -2m  \begin{pmatrix} 2m \\ m \end{pmatrix} T_{m+2}T_{m+1}- 4(m+1)\sum_{k=3}^{m}   \begin{pmatrix} 2m+1 \\ k-1\end{pmatrix} T_{2m+3-k}T_{k}.\]
\end{theorem}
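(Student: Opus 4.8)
The plan is to recast Ledoux's inductive description of the $R_n$ as an explicit polynomial recursion and then carry out a stratified bookkeeping of its monomials. Writing $V_0=v_t$, $V_1=v_t'$, the heat equation gives the Burgers‑type identities $\partial_tV_k=V_{k+2}+\sum_{i=0}^k\binom ki V_{i+1}V_{k+1-i}$, and $f_t'=V_1f_t$ gives the Stein identity $\E[V_1 g]=-\E[g']$ for every polynomial $g$ in $V_2,V_3,\dots$; feeding these together with $\partial_t\E[g(X_t)]=\E[\partial_t g+g'']$ into $(-2)^n\E[V_{n+1}^2+R_{n+1}]=\partial_t\bigl((-2)^{n-1}\E[V_n^2+R_n]\bigr)$ and normalising, one obtains, with $R_3=-2T_2^3$,
\[
R_{n+1}\;\equiv\;-\sum_{j=1}^{n-1}\binom nj\,T_nT_{j+1}T_{n+1-j}\;-\;\tfrac12\bigl(\mathcal DR_n+R_n''\bigr)\pmod{\mathrm{IBP}},
\]
where $\mathcal D$ is the derivation with $\mathcal DT_k=T_{k+2}+\sum_{i=0}^k\binom ki T_{i+1}T_{k+1-i}$, $R_n''$ the derivation $T_k'=T_{k+1}$ applied twice, and $\equiv$ is equality after the integration‑by‑parts reductions (generated by $\E[V_1g]=-\E[g']$ and by peeling off the highest derivative) that bring each term to the normal form of Theorem~\ref{thom1led}. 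I would first record two structural consequences, each by a direct induction on $n$: the recursion is homogeneous for the weight $w(T_k)=k$, so $R_n$ is weight‑homogeneous of weight $2n$; and every monomial $M$ of $R_{n+1}$ obeys $\operatorname{maxindex}(M)+\deg(M)\le n+3$, the point being that the ``illegal'' monomials produced by $\mathcal D$, by $(\cdot)''$ and by the first of each pair of integrations by parts always cancel. Weight‑homogeneity forces the $T_n$‑degree of $R_{n+1}$ to be $\le2$ and hence, decomposing $R_{n+1}$ successively by the powers of $T_n$, of $T_{n-1}$, and so on, the shape \eqref{formRn}; the index‑degree bound gives $\deg a_{n,k}\le 2+k$, and the matching lower bound comes from exhibiting, for each $k$, an extremal monomial — for instance $T_2^{\,n+1}$ occurs in $R_{n+1}$ with coefficient $(-1)^{n+1}n!\neq0$ (a one‑line consequence of the recursion), which gives $a_{n,n-2}=(-1)^{n+1}n!\,T_2^{\,n}$ of degree $n=2+(n-2)$.

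For the coefficient of $T_2T_n^2$: by the weight count the only monomials of $R_n$ that can feed a $T_n^2$ into $R_{n+1}$ are those quadratic in $T_{n-1}$, i.e.\ the leading term $-(n-1)n\,T_2T_{n-1}^2$. Applying $\mathcal D$ and $(\cdot)''$ to it and running the reductions — the $T_1$‑ and $T_{n+1}$‑monomials created along the way pair off and cancel, leaving a net $-(n-1)n\,T_2T_n^2$ — and adding the contribution $-2n\,T_2T_n^2$ of the $j=1$ and $j=n-1$ terms of the explicit sum yields $-(n-1)n-2n=-n(n+1)$.

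The determination of $a_n=[T_n^1]R_{n+1}$ is the heart of the matter, and the main obstacle. By weight‑homogeneity and the degree bound, $a_n$ is a combination of the monomials $T_kT_{n+2-k}$, so it suffices to compute the coefficient of $T_kT_{n+2-k}$ for $3\le k\le n-1$. Each of these collects exactly three contributions, which I would isolate one by one: from the explicit sum, a term $-2\binom nk$ (and $-\binom nk$ when $2k=n+2$); from $\mathcal D$ and $(\cdot)''$ applied to the leading term $-(n-1)n\,T_2T_{n-1}^2$ of $R_n$, after the integrations by parts that push the $T_{n+1}$‑ and $T_1$‑monomials they create back into the $T_n$‑stratum; and from $\mathcal D$ and $(\cdot)''$ applied to $a_{n-1}T_{n-1}$, the active mechanisms being $\mathcal DT_{n-1}\ni 2T_1T_n$ and the $(\cdot)''$‑raising of the $T_{n-2}$ that sits inside $a_{n-1}$ (the contributions from the lower strata of $R_n$ being killed by the index‑degree bound). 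Assembling these gives an explicit linear recursion for the coefficients of $a_n$ in terms of those of $a_{n-1}$ and of $\binom nk,\binom{n-1}k$; I would then check that the closed expressions for $a_{2m}$ and $a_{2m+1}$ solve this recursion — this is where the sums collapse via Pascal/Vandermonde — after verifying the base case $a_8$ (obtained by direct computation of $R_3,\dots,R_9$, with $a_9,a_{10},a_{11}$ as a further check), and conclude by induction for $n\ge 8$.

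The hard part is precisely the integration‑by‑parts bookkeeping: the reductions are not individually well defined (for instance $\E[V_2V_{n-1}V_{n+1}]$ is not reducible in isolation), so one has to show that the combination produced by the recursion requires only finitely many steps before all $T_1$‑ and high‑index monomials cancel, and to track those cancellations finely enough to retain every contribution to the $T_n$‑ and $T_n^2$‑strata; this is also the delicate point hidden in the index‑degree bound used above. The same phenomenon explains the parity split and the threshold $2m>7$ in the statement: these mark exactly where the recursion has reached its ``generic'' regime, whereas for smaller $n$ the indices $2,3,n-2,n-1$ overlap and produce boundary corrections (coming e.g.\ from $\mathcal DT_2\ni 2T_2^2$) that stay confined to the base cases and do not propagate once $n\ge 8$.
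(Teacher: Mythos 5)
There is a genuine gap, and you have put your finger on it yourself: your entire argument rests on a recursion for $R_{n+1}$ that is only defined ``$\pmod{\mathrm{IBP}}$'', and you concede that the integration-by-parts reductions are not individually well defined, that you have not shown the reduction terminates or is confluent, and that you cannot yet track the cancellations of the $T_1$- and $T_{n+1}$-monomials finely enough to retain the contributions to the $T_n$- and $T_n^2$-strata. Since the polynomial $R_n$ is only pinned down once a canonical normal form is fixed, every subsequent coefficient computation in your plan (the $-n(n+1)$, the three contributions to each $T_kT_{n+2-k}$, the linear recursion for $a_n$) inherits this ambiguity and cannot be verified as stated. The paper avoids the problem entirely: $R_n$ is defined through the iterated gradients $\tilde\Gamma_n(u)=\tilde R_n(u^{(2)},\dots,u^{(n)})$, which are \emph{pointwise} polynomial identities in the derivatives of $u$, and the recursion $R_{n+1}=A_n+L(R_n)+H(R_n)$ (Lemma~\ref{inductionR}) comes from expanding the Lie bracket $\tilde\Gamma_{n+1}=[\Delta+\Gamma,\tilde\Gamma_n]$, with $L=\tfrac12(D_1^2-D_2)$ and $H$ given by explicit closed formulas on monomials. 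No expectations, no integration by parts, no normal-form ambiguity ever enters; the coefficient relations of Proposition~\ref{relcoefRn} are then exact and the case analysis of which $\beta$ give $l_{\beta,\alpha}\neq 0$ is completely mechanical.

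Apart from this foundational issue, your plan runs closely parallel to the paper's actual proof: the weight-homogeneity (partitions of $2(n+1)$), the bound $\operatorname{maxindex}(M)+\deg(M)\le n+3$ (this is exactly Proposition~\ref{degree2pk1}), the extremal monomial $(-1)^{n+1}n!\,T_2^{n+1}$ giving $\deg a_{n,k}=2+k$, the isolation of the quadratic form $a_n$ in the $T_n$-stratum, the linear recursion in $n$ for its coefficients solved by the closed binomial expressions after a base case at $n=8$, and the explanation of the threshold $2m>7$ via overlapping indices. If you replace your heat-flow/Stein-identity derivation by the citation of Ledoux's bracket recursion (or prove rigorously that your IBP reduction yields a well-defined normal form agreeing with the iterated-gradient one --- a nontrivial task in itself), the remainder of your outline becomes essentially the proof in Section~\ref{Combi} and the appendix.
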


\begin{remark}
    In addition to the theoretical results, another contribution of this paper is to provide an algorithm allowing to compute efficiently the polynomials $R_n$ for moderate orders  (see \cite{AlgoG}). More details are given in Section \ref{Combi}. This allows for the first time to perform exploratory analysis on the combinatorial structures of the polynomials.  
    \end{remark}

\begin{remark}
One could write also
\[R_{n+1} = \sum_{k=0}^{n-2} a_{n,k}T_{n-k}\]
where $a_{n,0} = -n(n+1)T_2T_n + a_n$. Futhermore, in \cite{Led}, it has been shown that $a_{n,n-2} = (-1)^{n+1}n! \, T_2^n$.
\end{remark}

\begin{remark}
The remaining $a_n$ not explicited in Theorem \ref{Main4} are
\begin{align*}
a_7 & = -560T_4T_5 -336T_3T_6, \\ 
a_6 & = -140T_4^2 -210T_3T_6, \\
a_5 & = -120T_3T_4, \\
a_4 & = -30T_3^2, \\
a_3 & = 0,
\end{align*}
which all have degree $2$ except $a_3$.
\end{remark}

 \begin{example}\label{ExR9}
For the sake of illustration, the 42-term long expression for $R_9$ is provided in Appendix \ref{appen-pol}.   We can verify the statement of Theorem~\ref{Main4} on this example, because  
\begin{equation}
\begin{aligned}
R_9 = 
& - \underbrace{72}_{=8\times 9}T_2T_8^2 + \big (- \underbrace{630}_{=9 \begin{pmatrix} 8 \\ 4 \end{pmatrix}}T_5^2 - \underbrace{504}_{=18\begin{pmatrix} 8 \\ 2 \end{pmatrix}}T_3T_7- \underbrace{1008}_{=18\begin{pmatrix} 8 \\ 3 \end{pmatrix}}T_4T_6 \big )T_8 \\ \\
 & + a_{8,1}T_7 + a_{8,2}T_6+a_{8,3}T_5 + a_{8,4}T_4 + a_{8,5} T_3 + a_{8,6}T_2
\end{aligned}
\label{R9}
\end{equation}
where the $a_{8,k}$ are given in the appendix (see Example \ref{R9ank}).


\end{example}


We stress that the combinatorial analysis behind  Theorem \ref{Main4} is very involved;  there seems to be little hope in obtaining a general formulation for $R_n$ at all orders.

\subsection{Outline of the proof of Theorem \ref{Main4}}

In this subsection, we give an overview of the  proof strategy  behind Theorem \ref{Main4}. The proper proof is provided in Section \ref{Combi}. 
The main goal is to describe the polynomial $R_n$ as completely as possible. To that purpose, we want to compute its coefficients. This involves the following steps: 
\begin{itemize}
\item[$\bullet$] First, we retrieve from \cite{Led} an induction formula between $R_{n+1}$ and $R_n$ (see Lemma \ref{inductionR}, relation \eqref{RecFormula}).
\item[$\bullet$] Then, from \eqref{RecFormula}, we obtain linear recurrence relations between coefficients of $R_{n+1}$ and coefficients of $R_n$ (see Proposition \ref{relcoefRn}).
\item[$\bullet$] Finally, we solve those recurrence relations. 
\end{itemize}
However, due to the complexity of the combinatorics, the induction relations become rapidly intractable.  Fortunately, we manage to get some information on degrees of monomials in $R_{n+1}$, and are able to compute explicitly the coefficients of the leading terms  (i.e.\ the monomials that are non constant polynomials for the indeterminate $T_n$)  of $R_{n+1}$ (see Lemmas \ref{P1}, \ref{P2}, \ref{P4} and \ref{an}). 

The procedure to compute the coefficients highlights an algorithmical resolution. Since some further results and notations are necessary to properly delve into this algorithm, more details are provided in Section \ref{AlgoScheme}.

\subsection{Structure of the paper} 

The rest of the  paper is structured as follows. In the second section, we recall some useful facts about  the cumulants of real random variables, as well as conditions for the moment determinacy. Then, we state our main contribution regarding the MMSE conjecture, which proceeds from our main result Theorem \ref{entr-deri}. In the third section, we describe the form of the polynomials and prove Theorem \ref{Main4}. Finally an appendix contains proofs of results concerning cumulants as well as more technical and routine proofs of results concerning the polynomials $R_n$.

\section{The MMSE conjecture}
\label{sn: a contd}

In order to state our main results we first gather some well-known facts about moments and cumulants as well as some standard criteria ensuring the moment-determinacy of probability measures. Indeed, a recurrent assumption throughout the whole article will be to assume that the source distributions are symmetric and that their distribution is characterised by the moments, or equivalently the cumulants. In order to write the assumptions in a more concise way, we therefore introduce the following definition. 
\begin{definition}\label{smd}
    A random variable $X$ is \textbf{\sMD}~~if and only if:
\begin{itemize}
    \item $X\stackrel{\text{Law}}{=}-X$,

    \item $X$ admits moments of every order,

    \item $\forall p \in \mathbb{N}^*, \, \mathbb{E}[X^p]=\mathbb{E}[Y^p]$ implies that $X\stackrel{\text{Law}}{=}Y$.
\end{itemize}

\end{definition}

\begin{remark}
    [{When is a law moment-determined?}]\label{mdet}
Take $\mu$ and $\nu$ two probability measures on the real line. An old and classical problem in probability theory consists in finding conditions on $\mu$ ensuring that
$$\left(\forall p\in\mathbb{N},\,\int_{\mathbb{R}}x^p \mu( \dint x)=\int_{\mathbb{R}}x^p \nu(\dint x)\right)\Rightarrow \mu=\nu.$$
It is well known that without imposing condition on $\mu$, the above
implication is in general not true. For instance, if
$X\sim\mathcal{N}(0,1)$, then $\mu=\mathbb{P}_{X^3}$ is not
moment-determined. Another well-known example of distribution which is
not moment-determined is the log-normal distribution (see
\cite{Stoyanov2000}).  Criteria providing sufficient conditions for
moment determinacy are well studied, see e.g.\ \cite{Lin2017}.
\end{remark}

\medskip

\subsection{Background on moments and cumulants}\label{S2}

Cumulants are alternative semi-invariants for distributions, related to the moments (see Proposition \ref{Cumul1} below).
We gather here some facts concerning cumulants of real random variables. These are mostly well-known results; for sake of completeness, we provide a self-contained presentation of this setting.
\begin{definition}
Let $X$ be a real random variable admitting moments of all orders. For $n$ a positive integer, we denote $K_n(X)$ the quantity $(-i)^n(\ln\varphi_X)^{(n)}(0)$ where $\varphi_X$ is the characteristic function of $X$. We call $K_n(X)$ the cumulant of order $n$ of $X$. We can also define it as the $n$-th derivative of the logarithm of the moment generating function $t\mapsto \E[e^{tX}]$ of $X$, also called the cumulant generating function of $X$.
\end{definition}

The following three results are classical and, moreover, easy to prove directly. Brief proofs are provided in the appendix.

\begin{proposition}\label{Cumul2}
Let $X$, $Y$ be  independent real random variables admitting moments of all orders. Then for all $\lambda \in \mathbb{R}$ and $a\in \mathbb{R}$
\[K_n(X+\lambda Y+a) = K_n(X) + \lambda^n K_n(Y)\]
for all $n \in \mathbb{N}^\star$, $n\geq 2$.
\end{proposition}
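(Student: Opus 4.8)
The plan is to work directly with the characteristic function, since the hypothesis that $X$ and $Y$ have moments of all orders does not guarantee that their moment generating functions are finite in a neighbourhood of the origin. First I would record that if a random variable $Z$ has moments of all orders, then $\varphi_Z$ is of class $C^\infty$ on $\mathbb{R}$ with $\varphi_Z^{(k)}(0) = i^k\,\mathbb{E}[Z^k]$; since $\varphi_Z$ is continuous and $\varphi_Z(0)=1$, there is an open interval $I\ni 0$ on which $\varphi_Z$ does not vanish, so that $\log\varphi_Z$ is a well-defined $C^\infty$ function on $I$ (taking the branch normalised by $\log\varphi_Z(0)=0$), and the definition $K_n(Z) = (-i)^n(\log\varphi_Z)^{(n)}(0)$ makes sense.

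Next I would establish the three elementary transformation rules at the level of $\log\varphi$ near $0$. For translation, $\varphi_{Z+a}(\xi) = e^{ia\xi}\varphi_Z(\xi)$, hence $\log\varphi_{Z+a}(\xi) = ia\xi + \log\varphi_Z(\xi)$ on a neighbourhood of $0$; differentiating $n\geq 2$ times and evaluating at $0$ annihilates the linear term, giving $K_n(Z+a) = K_n(Z)$. For scaling, $\varphi_{\lambda Y}(\xi) = \varphi_Y(\lambda\xi)$, so $\log\varphi_{\lambda Y}(\xi) = \log\varphi_Y(\lambda\xi)$ near $0$, and the chain rule yields $(\log\varphi_{\lambda Y})^{(n)}(0) = \lambda^n(\log\varphi_Y)^{(n)}(0)$, i.e.\ $K_n(\lambda Y) = \lambda^n K_n(Y)$. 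For independent sums, independence gives $\varphi_{X+\lambda Y}(\xi) = \varphi_X(\xi)\,\varphi_{\lambda Y}(\xi)$, and on the intersection of the relevant non-vanishing intervals one has $\log\varphi_{X+\lambda Y} = \log\varphi_X + \log\varphi_{\lambda Y}$ for the branches normalised to vanish at $0$; differentiating $n$ times at $0$ gives $K_n(X+\lambda Y) = K_n(X) + K_n(\lambda Y)$.

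Combining these, for every $n\geq 2$,
\[
K_n(X+\lambda Y+a) = K_n(X+\lambda Y) = K_n(X) + K_n(\lambda Y) = K_n(X) + \lambda^n K_n(Y),
\]
which is the asserted identity. I expect the only genuinely delicate point to be the bookkeeping of branches of the complex logarithm: one must check that on a common neighbourhood of $0$ the identity $\log(\varphi_X\varphi_{\lambda Y}) = \log\varphi_X + \log\varphi_{\lambda Y}$ holds for the normalised branches. This follows because both sides are continuous on that neighbourhood, agree at $\xi = 0$, and differ by a continuous function with values in $2\pi i\,\mathbb{Z}$, hence are identically equal. Everything else reduces to the chain rule, the product rule, and the already-recorded smoothness of $\varphi_Z$.
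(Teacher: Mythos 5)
Your proof is correct and takes essentially the same route as the paper's: use independence to factor $\varphi_{X+\lambda Y}=\varphi_X\varphi_{\lambda Y}$, add the logarithms, differentiate $n$ times at $0$, handle the scaling by the chain rule, and note that the translation contributes only the linear term $ia\xi$, which vanishes under $n\geq 2$ derivatives. The only difference is your explicit care about the non-vanishing of $\varphi$ near $0$ and the choice of branch of the logarithm, points the paper's proof leaves implicit.
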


Proposition \ref{Cumul2} follows from the independence of $X$ and
$\lambda Y$, using it to write the characteristic function of
$X+\lambda Y$ as the product of the characteristic functions of $X$
and of $\lambda Y$. An immediate corollary for symmetric random
variables, which are of interest in the rest of the article, is the
following.

\begin{corollary}\label{CumulSym}
Let $X$ be a real symmetric random variable admitting moments of all orders. Then 
$K_{2n+1}(X) = 0$ for all $n\in \mathbb{N}^*$.
\end{corollary}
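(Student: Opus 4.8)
The plan is to combine the elementary scaling behaviour of cumulants with the symmetry hypothesis, using the characteristic-function definition so as to avoid any integrability issue beyond ``all moments finite''. First I would record the homogeneity identity $K_n(\lambda X)=\lambda^n K_n(X)$ valid for every $\lambda\in\mathbb{R}$ and every $n\geq 1$: since $\varphi_{\lambda X}(t)=\varphi_X(\lambda t)$ and $\varphi_X$ is $C^\infty$ with $\varphi_X(0)=1$ (because $X$ has moments of all orders), the function $\ln\varphi_{\lambda X}$ is smooth near $0$ and $(\ln\varphi_{\lambda X})^{(n)}(0)=\lambda^n(\ln\varphi_X)^{(n)}(0)$ by the chain rule; multiplying by $(-i)^n$ gives the identity. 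Note that, in contrast with Proposition~\ref{Cumul2}, this step uses no independence.

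Next, since $X\stackrel{\text{Law}}{=}-X$ and $K_n$ is a functional of the law of its argument only, we have $K_n(X)=K_n(-X)$. Applying the homogeneity identity with $\lambda=-1$ then yields $K_n(X)=(-1)^n K_n(X)$ for all $n\geq 1$. For $n=2m+1$ odd this reads $2K_{2m+1}(X)=0$, hence $K_{2m+1}(X)=0$, which is the assertion. (Alternatively one may observe directly that $X\stackrel{\text{Law}}{=}-X$ forces $\varphi_X$ to be even, hence $\ln\varphi_X$ even near $0$, so that all its odd-order derivatives at $0$ vanish.)

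There is no genuine obstacle here: the argument is routine and short. The only points that warrant a sentence of justification are the smoothness of $\ln\varphi_X$ near the origin (guaranteed by finiteness of all moments together with $\varphi_X(0)=1$ and continuity) and the trivial remark that equality in law transfers to equality of cumulants, both of which slot in immediately before the scaling computation.
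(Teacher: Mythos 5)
Your proof is correct and follows essentially the same route as the paper: symmetry gives $K_n(X)=K_n(-X)$, and the scaling identity $K_n(-X)=(-1)^nK_n(X)$ forces the odd cumulants to vanish. The only cosmetic difference is that you re-derive the homogeneity $K_n(\lambda X)=\lambda^nK_n(X)$ directly from the characteristic function, whereas the paper simply invokes Proposition~\ref{Cumul2}; the substance is identical.
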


As mentioned above, cumulants are related to the moments through the
next result (a proof of which can be found in \cite{Doring2022}).
\begin{proposition}\label{Cumul1}
Let $X$ be a real random variable admitting moments of all orders and let $n$ be a positive integer. Then
\[K_n(X) = \E[X^n] -   \sum_{m=1}^{n-1} \binom{n-1}{m-1} K_{m}(X)\E[X^{n-m}]\] 
and, in particular,  moments can be recovered from  cumulants and vice versa. 
\end{proposition}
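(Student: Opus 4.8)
The plan is to derive the recursion directly from the defining relation $\varphi_X=\exp(\log\varphi_X)$ by extracting Taylor coefficients at the origin. Since $X$ admits moments of all orders, its characteristic function $\varphi_X(\xi)=\E[e^{i\xi X}]$ is of class $C^\infty$ in a neighbourhood of $0$, with $\varphi_X^{(k)}(0)=i^k\E[X^k]$ and $\varphi_X(0)=1$; hence $L:=\log\varphi_X$ is well-defined and $C^\infty$ near $0$, and by definition $L^{(k)}(0)=i^kK_k(X)$ (with $K_0=0$, since $L(0)=\log 1=0$). Setting $u=i\xi$, it is convenient to regard the (convergent, or merely formal — the manipulation is purely algebraic) power series
\[
\varphi_X \;=\; \sum_{k\ge 0}\E[X^k]\,\frac{u^k}{k!},\qquad L\;=\;\sum_{k\ge 1}K_k(X)\,\frac{u^k}{k!},
\]
which satisfy $\varphi_X=e^{L}$.

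First I would differentiate this identity in $u$, obtaining $\varphi_X'=L'\,\varphi_X$. Writing out the Cauchy product on the right-hand side and comparing the coefficient of $u^{n-1}/(n-1)!$ on both sides (equivalently, applying the Leibniz rule to $L'\varphi_X$ at $u=0$), one gets, after the reindexing $m=j+1$ of the summation variable,
\[
\E[X^n] \;=\; \sum_{m=1}^{n}\binom{n-1}{m-1}K_m(X)\,\E[X^{n-m}].
\]
Isolating the term $m=n$, which equals $\binom{n-1}{n-1}K_n(X)\,\E[X^0]=K_n(X)$, yields precisely the stated formula.

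Finally, for the assertion that moments and cumulants determine each other, I would argue by induction on $n$: the displayed identity exhibits $K_n(X)$ as a polynomial with integer coefficients in $\E[X],\dots,\E[X^n]$ and $K_1(X),\dots,K_{n-1}(X)$, while its rearrangement $\E[X^n]=K_n(X)+\sum_{m=1}^{n-1}\binom{n-1}{m-1}K_m(X)\E[X^{n-m}]$ exhibits $\E[X^n]$ as a polynomial in $K_1(X),\dots,K_n(X)$ and $\E[X],\dots,\E[X^{n-1}]$. With the base case $K_1(X)=\E[X]$, both families of identities close up, giving the two conversion maps explicitly.

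The only point requiring a little care — and the sole conceptual obstacle — is the analytic justification: the moment generating function $t\mapsto\E[e^{tX}]$ need not be finite near $0$, so the computation should be phrased through $\varphi_X$ (which is always smooth at the origin under the moment hypothesis), or, alternatively, carried out entirely in the ring of formal power series, since the coefficient-matching identities are algebraic and independent of any convergence. Everything else — the Leibniz expansion and the bookkeeping of the $i^k$ factors absorbed into $u=i\xi$ — is routine.
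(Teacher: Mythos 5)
Your proof is correct. Note that the paper itself does not prove Proposition \ref{Cumul1}; it simply refers the reader to \cite{Doring2022}, so there is no in-paper argument to compare against. Your derivation is the standard one and it is rigorous as written: under the all-moments hypothesis $\varphi_X$ is $C^\infty$ and nonvanishing near $0$, so $L=\log\varphi_X$ is $C^\infty$ there, and differentiating $\varphi_X=e^{L}$ once to get $\varphi_X'=L'\varphi_X$ and then applying the Leibniz rule $(n-1)$ times \emph{at the origin} gives
\[
\E[X^n]=\sum_{m=1}^{n}\binom{n-1}{m-1}K_m(X)\,\E[X^{n-m}],
\]
from which the stated identity follows by isolating the term $m=n$; this only uses finitely many derivatives at $0$, so no analyticity or convergence of the Taylor series is needed (your parenthetical ``convergent, or merely formal'' is slightly loose, since $\varphi_X$ need not be analytic, but your alternative reading via derivatives at $u=0$ is exactly the right fix). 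Your remark about avoiding the moment generating function, which may fail to be finite near $0$, is well taken and is consistent with the paper's primary definition of cumulants through the characteristic function. The concluding induction showing that moments and cumulants are mutually recoverable, with base case $K_1(X)=\E[X]$, is also fine.
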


Our next result is
well-known, but not that evident to prove using elementary means nor
to get an easy reference; we therefore provide a proof in the
appendix. 

\begin{proposition}\label{CumulID}
Let $X$ be a real random variable with an infinite divisible distribution and admitting moments of all orders. Then the cumulants of even orders of $X$ are non-negative. 
\end{proposition}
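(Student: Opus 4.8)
The plan is to exploit the L\'evy--Khintchine representation of infinitely divisible laws. Write $\varphi_X = \exp(\psi)$, where
\[
\psi(u) = ibu - \tfrac12 \sigma^2 u^2 + \int_\R \big(e^{iux} - 1 - iux\,\mathbf 1_{\{|x|\le 1\}}\big)\,\nu(\dint x),
\]
with $b\in\R$, $\sigma^2\ge 0$, and $\nu$ a L\'evy measure ($\nu(\{0\})=0$ and $\int_\R \min(1,x^2)\,\nu(\dint x)<\infty$). Since $\psi$ is continuous with $\psi(0)=0$ and $e^{\psi}=\varphi_X$, we have $\ln\varphi_X = \psi$ in a neighbourhood of $0$, so the cumulants are $K_n(X) = (-i)^n\psi^{(n)}(0)$.

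The first step is to transfer the moment hypothesis to the L\'evy measure: it is classical (see e.g.\ Sato's monograph on L\'evy processes) that $\E|X|^n<\infty$ if and only if $\int_{\{|x|>1\}}|x|^n\,\nu(\dint x)<\infty$. Hence $\int_\R |x|^n\,\nu(\dint x)<\infty$ for every $n\ge 2$. In particular $\int_\R x^2\,\nu(\dint x)<\infty$, which lets us drop the truncation in the compensator: replacing $iux\,\mathbf 1_{\{|x|\le 1\}}$ by $iux$ only changes the linear coefficient $b$, so we may assume
\[
\psi(u) = ibu - \tfrac12\sigma^2 u^2 + \int_\R\big(e^{iux}-1-iux\big)\,\nu(\dint x).
\]

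The second step is a differentiation under the integral sign. From $|e^{iux}-1-iux|\le \tfrac12 u^2 x^2$ together with $\int_\R|x|^n\,\nu(\dint x)<\infty$ for all $n$, one may differentiate $\psi$ arbitrarily often at $0$; the polynomial correction $-iux$ only affects the first derivative, so for $n\ge 3$ we get $\psi^{(n)}(0)=i^n\int_\R x^n\,\nu(\dint x)$, while $\psi''(0)=-\sigma^2-\int_\R x^2\,\nu(\dint x)$. Therefore $K_2(X)=\sigma^2+\int_\R x^2\,\nu(\dint x)$ and $K_n(X)=\int_\R x^n\,\nu(\dint x)$ for $n\ge 3$. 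For an even index $n=2m$ the integrand $x^{2m}$ is non-negative (and $\sigma^2\ge 0$ when $m=1$), whence $K_{2m}(X)\ge 0$, which is the claim.

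The only genuinely non-elementary ingredient is the equivalence between finiteness of the moments of $X$ and those of $\nu$ restricted to $\{|x|>1\}$; I would either cite it or, if a self-contained argument is preferred, recover it by splitting $X$ into its independent Gaussian, large-jump (compound Poisson) and small-jump ($L^2$-limit) components and estimating moments of each. An alternative, slightly more hands-on route avoids L\'evy--Khintchine entirely: approximate $X$ in law by the independent sum of its Gaussian part and a recentred compound Poisson law with jump measure $\nu\,\mathbf 1_{\{|x|>\varepsilon\}}$; a centred compound Poisson law with intensity $\lambda$ and jump law $\rho$ has $n$-th cumulant $\lambda\int_\R x^n\,\rho(\dint x)$ for $n\ge 2$, non-negative for even $n$, and uniform moment bounds then let the even cumulants pass to the limit.
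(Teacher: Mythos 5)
Your proof is correct, but it follows a genuinely different route from the one the paper takes. You invoke the L\'evy--Khintchine representation, transfer the moment hypothesis to the L\'evy measure via the classical equivalence $\E|X|^n<\infty \Leftrightarrow \int_{\{|x|>1\}}|x|^n\,\nu(\dint x)<\infty$, and then differentiate the exponent at $0$ to get the explicit formulas $K_2(X)=\sigma^2+\int x^2\,\nu(\dint x)$ and $K_n(X)=\int x^n\,\nu(\dint x)$ for $n\ge 3$, from which non-negativity of the even cumulants is immediate; this actually proves more than the statement, and the paper itself records exactly this representation-based identity in a remark following its proof. The paper's own argument is deliberately elementary and self-contained: it writes $X\stackrel{\rm Law}{=}X_1+\cdots+X_N$ with $X_i$ i.i.d., uses the scaling relation $K_m(\sqrt[n]{N}X_1)=N^{m/n-1}K_m(X)$, H\"older's inequality and the moment--cumulant recursion (its Proposition on cumulants vs.\ moments) to bound $K_n(X)$ from below by a quantity tending to $0$ as $N\to\infty$, and proceeds by induction on even orders, thereby avoiding both the L\'evy--Khintchine theorem and the moment-transfer lemma you cite. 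What your approach buys is an exact formula for the cumulants (hence a stronger statement) at the cost of two non-elementary external inputs; what the paper's approach buys is a proof using only the basic cumulant identities it has already established. Your second, compound-Poisson approximation sketch would also work but needs the passage to the limit of the even cumulants spelled out (e.g.\ via convergence of all moments under the uniform bounds you mention).
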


We conclude with the computation of cumulants in a number of specific examples; for the same reason as previously, proofs are provided in the appendix. 

\begin{proposition}\label{prop-cumul}
    Let $n$ be a positive integer. Then the cumulant of order $n$ of:
    \begin{enumerate}
        \item the \textbf{uniform} distribution on $[-1,1]$ is ${2^nB_n}/{n}$ ;
        \item the \textbf{Laplace} distribution of parameter $(0,b)$ is ${(2n)!}b^{2n}/{n}$ ;
        \item the \textbf{Rademacher} distribution is ${2^{n}(2^{n}-1)}B_{n}/{n}$,
    \end{enumerate}
    where $B_n$ denotes the $n$-th Bernoulli number.
\end{proposition}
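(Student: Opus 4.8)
The plan is to compute, for each of the three laws, its cumulant generating function in closed form and then extract Taylor coefficients, converting the relevant hyperbolic expansions into Bernoulli numbers. Since all three distributions have all exponential moments finite in a neighbourhood of the origin, the cumulant generating function $K_X(t)=\log\E[e^{tX}]$ is analytic at $0$ and $K_n(X)=n!\,[t^n]K_X(t)$. A direct integration gives the closed forms $\E[e^{tX}]=\sinh(t)/t$ for the uniform law on $[-1,1]$, $\E[e^{tX}]=(1-b^2t^2)^{-1}$ for the Laplace law of parameter $(0,b)$ (for $|t|<1/b$), and $\E[e^{tX}]=\cosh(t)$ for the Rademacher law.

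Taking logarithms and differentiating, $K_X'(t)$ equals $\coth(t)-1/t$, $2b^2t/(1-b^2t^2)$, and $\tanh(t)$ respectively. I then invoke the classical generating-function identities
\[
t\coth t=\sum_{n\geq 0}\frac{2^{2n}B_{2n}}{(2n)!}\,t^{2n},
\qquad
\tanh t=\sum_{n\geq 1}\frac{2^{2n}(2^{2n}-1)B_{2n}}{(2n)!}\,t^{2n-1},
\]
together with the geometric expansion $2b^2t/(1-b^2t^2)=\sum_{k\geq 1}2b^{2k}t^{2k-1}$. Integrating term by term from $0$ (using $K_X(0)=0$) and multiplying the coefficient of $t^m$ by $m!$ yields $K_{2n}(X)$ equal to $2^{2n}B_{2n}/(2n)$, $(2n)!\,b^{2n}/n$, and $2^{2n}(2^{2n}-1)B_{2n}/(2n)$ respectively — precisely the asserted values under the indexing of the statement — while all odd cumulants vanish, in agreement with Corollary \ref{CumulSym} and with $B_{2k+1}=0$.

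An alternative route, which I would present as a cross-check, derives the uniform and Laplace cases from the Rademacher case via Proposition \ref{Cumul2}. Writing $U=\sum_{k\geq 1}2^{-k}\varepsilon_k$ with $\varepsilon_k$ i.i.d.\ Rademacher — a representation justified at the level of characteristic functions by the product formula $\prod_{k\geq 1}\cos(t/2^k)=\sin(t)/t$ — a limiting form of Proposition \ref{Cumul2} gives $K_n(U)=K_n(\varepsilon_1)\sum_{k\geq 1}2^{-kn}=K_n(\varepsilon_1)/(2^n-1)$, which reduces $2^n(2^n-1)B_n/n$ to $2^nB_n/n$. Likewise, a Laplace$(0,b)$ variable is the difference of two independent $\mathrm{Exp}(1/b)$ variables, whose cumulants $K_n=(n-1)!\,b^n$ are read off from $-\log(1-bt)$, so Proposition \ref{Cumul2} with $\lambda=-1$ gives $K_{2n}(\mathrm{Laplace})=2(2n-1)!\,b^{2n}=(2n)!\,b^{2n}/n$.

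The only genuinely non-routine ingredients are the two Bernoulli identities for $\coth$ and $\tanh$; these are standard (they follow, for instance, from $t/(e^t-1)=\sum_{n\geq 0}B_nt^n/n!$ together with $\coth t=1+2/(e^{2t}-1)$ and $\tanh t=\coth t-2\coth 2t$, or from the partial-fraction expansion of the cotangent), but some care is needed to track radii of convergence so that the term-by-term integration and the identification of Taylor coefficients with cumulants are legitimate. A minor point worth flagging is the low-order exception at $n=1$: the formula $2^nB_n/n$ for the uniform law must be read with the convention $B_1=0$ there (the symmetric uniform law has mean $0$), so the statement is understood for $n\geq 2$; this caveat is immaterial for the odd-order Rademacher cumulants, which vanish identically.
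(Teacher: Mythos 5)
Your proof is correct and follows essentially the same route as the paper: compute the cumulant generating function in closed form ($\log(\sinh t/t)$, $-\log(1-b^2t^2)$, $\log\cosh t$), reduce to the Bernoulli expansions of $\coth$ and $\tanh$, and integrate term by term to read off the coefficients. The additional cross-check via the dyadic Rademacher representation of the uniform law and the difference-of-exponentials representation of the Laplace law is a nice consistency test not present in the paper, but the core argument coincides.
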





\begin{remark}
 A well-known property of the
Bernoulli numbers is that  $B_{2n} =
(-1)^{n+1}|B_{2n}|$  for every integer $n$; in particular, this sequence alternates in sign    (for instance, see \cite{Ireland1990}).
\end{remark}


\begin{remark}[Remark \ref{mdet}, continued]
   From Proposition \ref{Cumul1} we know that moments and cumulants are related through specific polynomial relations. In fact, being moment-determined is equivalent to being characterised by one's cumulants. We refer to \cite{Led}  for  an example of two different SMD distributions whose cumulants have the same absolute values of all orders  (with different signs). 
Note that it is still an open problem to determine whether a given sequence of numbers corresponds to a sequence of cumulants, see \cite{Nardo2009}.  
\end{remark}

\subsection{The setting around the MMSE conjecture}\label{sec:thesetting}
 Let  $X$ and $N$ be  as stated in the introduction. The minimum mean squared
error along the heat flow is  defined as 
\[\MMSE(X, t) = \E[(X - \E[X \, | \, X_t])^2]\]
where $X_t$ is the heat flow starting from $X$ (recall \eqref{eq:heatfl}) and $t>0$.
 Straightforward computations yield 
 \[4t^2I(X,t) = 2nt - \MMSE(X,t)\] 
 for all $t>0$, hence $\MMSE$,  Fisher information  and thus entropy are all related along the heat flow.   Note also how  $\MMSE(X, t)$ is well defined irrespective of the properties of $X$.

In the sequel, it will be more convenient to work in the  additive
Gaussian channel standardization. More precisely, with  $Y_s := \sqrt
s X + N$ as in \eqref{eq:gaucha} for  $s>0$ we set  
\[\mmse(X, s) := \E[(X - \E[X \, | \, Y_s])^2]. \]  
The minimal mean squared error along the heat flow ($\MMSE$) and in
the Gaussian channel ($\mmse$) are equivalent, in the sense that
\[\forall s>0,\,\mmse(X,s) = \MMSE \left ( X,  \frac{1}{2s} \right ).\]
In line with the literature,  we will state all forthcoming results in terms of the function $\mmse$; our results  are, of course, also valid for the function $\MMSE$.

We can now reformulate the MMSE conjecture from the introduction
(given in its entropic form) in its original form as initially posed
in \cite{Ver} (whence the name of this conjecture).

\begin{mmseconj}[mmse form]\label{MMSEConj-mmse}
Let $X$ and $Y$ be two random variables. If $\mmse(X,\cdot)=\mmse(Y,\cdot)$ then for some $a\in \R$, $\P_{Y}=\P_{X+a}$ or $\P_{Y} = \P_{-X+a}$. 
\end{mmseconj}


In  \cite[Proposition 7]{Ver}, it is showed that if  $X$ possesses moments of all orders, then the function $s \mapsto \mmse(X,s)$ admits infinitely many right derivatives at $s=0$. The first  low order  derivatives   are then obtained, and they all  take the form of polynomial expressions involving moments of $X$. 
Using the $\Gamma$-calculus introduced in \cite{LedG},  a full abstract description of the higher order  derivatives of the $\mmse$ function was obtained in \cite[Theorem 3]{Led}, in terms of iterated gradients and their brackets. In particular, it is     showed that under the assumption of infinite moments one has 
\begin{eqnarray}\label{formule-derivee}
&&\mmse^{(n)}(X, 0) = - (K_{n+1}(X))^2 - {R}_{n+1}(K_2(X),\ldots,K_{n}(X)),
\end{eqnarray}
where the derivatives are taken along the heat flow then evaluated at $s=0$,  the sequence $(K_i(X))_{i \ge 1}$ denotes the  cumulants of $X$ (see Section \ref{S2} for a definition), and ${R}_{n+1}\in\mathbb{R}[T_2,\ldots,T_{n}]$ are polynomials in $n-1$ variables which were not made explicit (see Proposition 2 in \cite{Led}).  

Now,  if  all the cumulants of $X$ are positive then, if the function $\mmse$ of a random variable $Y$, admitting moments of all orders, is known and equal to that of $X$,  the successive derivatives at 0 given through \eqref{formule-derivee}   can be used to  determine  a system of equations relating the cumulants of $Y$ to those of $X$. If the system \eqref{formule-derivee} has a unique solution,  then the cumulants of $X$ and those of $Y$ must be equal, and thus their laws as well, if the distribution of $X$ is moment-determined.  This leads to  the following partial answer to the MMSE conjecture.

\begin{theorem}\label{Ledmain}\cite[Corollary 8]{Led}
Let $X$ be a random variable  with non-negative cumulants whose law is, moreover, determined by its moments. Then, the knowledge of the function $\mmse(X,\cdot)$ completely characterises the cumulants of $X$, and hence its law.
\end{theorem}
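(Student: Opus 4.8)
The plan is to turn the abstract characterization of the $\mmse$ derivatives into a concrete recovery procedure for the cumulants of $X$. Suppose $Y$ is a random variable with all moments finite and $\mmse(Y,\cdot)=\mmse(X,\cdot)$. Then the two functions have the same right derivatives of all orders at $s=0$, so by \eqref{formule-derivee} we obtain, for every $n\ge 1$,
\[
K_{n+1}(X)^2 + R_{n+1}\big(K_2(X),\ldots,K_n(X)\big) = K_{n+1}(Y)^2 + R_{n+1}\big(K_2(Y),\ldots,K_n(Y)\big).
\]
I would argue by strong induction on $n$ that $K_{n+1}(X)=K_{n+1}(Y)$ for all $n\ge 1$. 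For the base case $n=1$: since $R_2=0$, we get $K_2(X)^2=K_2(Y)^2$, and because $K_2$ is the variance it is non-negative, so $K_2(X)=K_2(Y)$. For the inductive step, assume $K_j(X)=K_j(Y)$ for all $2\le j\le n$. Then the polynomial terms $R_{n+1}(K_2(X),\ldots,K_n(X))$ and $R_{n+1}(K_2(Y),\ldots,K_n(Y))$ are equal, so the displayed identity collapses to $K_{n+1}(X)^2=K_{n+1}(Y)^2$, whence $K_{n+1}(Y)=\pm K_{n+1}(X)$.

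The remaining issue is the sign ambiguity, and this is where the hypothesis that $X$ has non-negative cumulants enters. A priori the argument above only pins down $|K_{n+1}(Y)|=|K_{n+1}(X)|$, so I need an additional input to fix the sign of $K_{n+1}(Y)$ — the example referenced in the continuation of Remark \ref{mdet} shows that matching absolute values of cumulants is genuinely not enough in general. Here is the key observation: since $K_{n+1}(X)\ge 0$, if $K_{n+1}(X)>0$ I would like to conclude $K_{n+1}(Y)>0$ as well, but that does not follow from the identity alone. The honest reading of \cite[Corollary 8]{Led} is that the \emph{conclusion} is about characterizing the cumulants of $X$ — i.e., any $Y$ with the same $\mmse$ must have cumulants whose absolute values agree with those of $X$, hence (since $X$'s are non-negative) $|K_{k}(Y)|=K_k(X)$ for all $k$; if one additionally knows $Y$ is such that its cumulants are determined up to these absolute values (for instance if $Y$ is symmetric, so odd cumulants vanish by Corollary \ref{CumulSym}, and even cumulants are forced to be non-negative by some structural constraint), then $K_k(Y)=K_k(X)$ for all $k$ and moment-determinacy of $X$ forces $\P_Y=\P_X$ up to translation.

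So the cleanest route for the statement exactly as phrased is: (i) prove the absolute-value recovery by the induction above, which is the substantive step; (ii) invoke that "the function $\mmse(X,\cdot)$ completely characterises the cumulants of $X$" is to be understood as: it determines the multiset of values $\{|K_k(X)|\}$, and since these equal $\{K_k(X)\}$ by non-negativity, the cumulants of $X$ themselves are recovered; (iii) conclude the law follows by moment-determinacy, using Proposition \ref{Cumul1} to pass between cumulants and moments. The main obstacle is genuinely the sign ambiguity at step (i)'s conclusion — one should either restrict attention to the cumulants of $X$ (which is what the statement literally asserts) or import an extra hypothesis on $Y$; I would make explicit in the write-up which of these is meant, and I expect the paper follows \cite{Led} in claiming only the characterization of $X$'s own cumulants, with the genuine resolution of the sign issue being exactly the gap the present paper's later results on symmetric, $\sMD$ sources are designed to close.
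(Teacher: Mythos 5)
The paper does not reprove this statement at all: it is quoted from \cite{Led} (Corollary 8), and the paper's ``proof'' is just the sketch preceding it, which is exactly the scheme you reconstruct --- differentiate at $s=0$, use \eqref{formule-derivee}, recover cumulants recursively, and pass from cumulants to the law via moment-determinacy (Proposition \ref{Cumul1}). Your reading of the scope is also the paper's: the discussion immediately after Theorem \ref{Ledmain} states that it yields equality in law for two variables that \emph{both} are moment-determined with non-negative cumulants, and removing the hypotheses on the second variable is precisely what Theorems \ref{T1}--\ref{T3} are later designed to do. So your overall approach and your diagnosis of where non-negativity must enter are in line with the intended argument.

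There is, however, a concrete flaw in your step (i). You claim the induction delivers $|K_k(Y)|=|K_k(X)|$ for all $k$ for an \emph{arbitrary} $Y$ with the same $\mmse$. It does not: to collapse the identity at order $n+1$ you need the genuine equalities $K_j(X)=K_j(Y)$ for $j\le n$, not merely equality of absolute values, because the polynomials $R_{n+1}$ are not even in each variable (e.g.\ $R_5$ contains $-30T_3^2T_4$, and $R_6$ contains $-120T_3T_4T_5$ and $-30T_4^3$). If at some stage $K_j(Y)=-K_j(X)\neq 0$, then $R_{n+1}(K_2(Y),\dots,K_n(Y))$ need not equal $R_{n+1}(K_2(X),\dots,K_n(X))$, the cancellation fails, and you cannot even conclude $K_{n+1}(X)^2=K_{n+1}(Y)^2$; the induction stalls at the first unresolved sign, so no ``multiset of absolute values'' is obtained for a general $Y$. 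The clean proof of the statement as literally phrased avoids $Y$ entirely: from \eqref{formule-derivee}, $K_{n+1}(X)^2=-\mmse^{(n)}(X,0)-R_{n+1}(K_2(X),\dots,K_n(X))$, so once $K_2(X),\dots,K_n(X)$ have been recovered from $\mmse(X,\cdot)$, the hypothesis $K_{n+1}(X)\ge 0$ selects the unique non-negative square root; by recursion every cumulant of $X$ is a function of $\mmse(X,\cdot)$, and moment-determinacy then pins down the law. If one wants the comparison form, the same recursion must be applied to $Y$, which is why the non-negativity (and moment-determinacy) assumption is imposed on both variables in the paper's reading of this result.
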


Theorem \ref{Ledmain}  ensures that  two  centred random variables determined by their moments  having non-negative cumulants and the same $\mmse$ must have the same distribution. This conclusion   slightly differs  from the question asked in  Conjecture~\ref{MMSEConj-mmse}.   While the assumption of being determined by moments is fulfilled for most classical distributions, the requirement of having non-negative cumulants seems more restrictive (although it is satisfied by an  important class of  distributions including  symmetric infinitely divisible distributions, see Proposition \ref{CumulID} above). 

\subsection{Our results regarding the MMSE conjecture}

The more complete description of the polynomials given in  Theorem \ref{Main4} leads to a refinement of identity  \eqref{formule-derivee} which in turn allows us to prove the following result. 
 

\begin{theorem}\label{T1}
Let $X$ be $\sMD$ (see Definition \ref{smd}). Suppose that $X$ verifies 
\begin{equation}\tag{$\star$}\label{Hypostar}
\forall m\ge 3,\quad a_{2m}(K_2(X),\ldots,K_{2m-1}(X)) \neq 0.
\end{equation}
  If  $Y$ is a  symmetric random variable admitting moments of all orders such that $\mmse(Y,\cdot) = \mmse(X,\cdot)$, then $\P_X = \P_Y$.
\end{theorem}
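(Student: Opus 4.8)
\textbf{Proof strategy for Theorem \ref{T1}.}

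The plan is to exploit the refined expansion \eqref{formule-derivee} together with the explicit form of the leading coefficients provided by Theorem \ref{Main4} to show that knowing $\mmse(X,\cdot)$ forces, one cumulant at a time, the equality $K_n(X)=K_n(Y)$ for all $n$. Since both $X$ and $Y$ are symmetric, Corollary \ref{CumulSym} tells us all odd cumulants vanish, so only the even cumulants $K_2, K_4, K_6,\dots$ are unknowns. The hypothesis $\mmse(Y,\cdot)=\mmse(X,\cdot)$ gives, via \eqref{formule-derivee}, the infinite system of equations
\[
(K_{n+1}(Y))^2 + R_{n+1}(K_2(Y),\dots,K_n(Y)) = (K_{n+1}(X))^2 + R_{n+1}(K_2(X),\dots,K_n(X))
\]
for every $n\ge 1$. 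The idea is to solve this by strong induction on the (even) index: assuming $K_j(X)=K_j(Y)$ for all $j<n+1$, we want the $n$-th equation to pin down $K_{n+1}(Y)$. Once all cumulants coincide, $\P_X=\P_Y$ follows because $X$ is $\sMD$, hence moment-determined, and cumulants determine moments by Proposition \ref{Cumul1}.

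First I would treat the low-order base cases by hand: the equations at $n=1,2,3,4$ involve only $K_2,K_3=0,K_4,K_5=0$, and using the explicit small-$n$ polynomials $R_2,\dots,R_6$ listed in the introduction (specialized to odd cumulants zero) one checks directly that $K_2(Y)=K_2(X)$ and $K_4(Y)=K_4(X)$; the odd ones are automatically $0$ on both sides. This handles the indices below the threshold $2m>7$ where the formula for $a_{2m}$ in Theorem \ref{Main4} kicks in. For the inductive step with $n+1=2m$ even and $2m>7$: the equation at order $n=2m-1$ reads $(K_{2m}(Y))^2 + R_{2m}(K_2(Y),\dots,K_{2m-1}(Y)) = (\text{same with }X)$. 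By the induction hypothesis all arguments $K_2,\dots,K_{2m-1}$ of the two copies of $R_{2m}$ agree \emph{except} we must isolate the dependence of $R_{2m}$ on $K_{2m-1}$. Here Theorem \ref{Main4} is the key: writing $R_{n+1}$ with $n=2m-1$, the coefficient of $T_n=T_{2m-1}$ (i.e. the linear-in-$T_{2m-1}$ part) is $a_{2m-1}$ (note $n=2m-1$ is odd, so we use the $a_{2m+1}$-type formula with the index shifted, i.e. $a_{2m-1}=-2(m-1)\binom{2m-2}{m-1}T_{m+1}T_m - \dots$). But since $X$ and $Y$ are symmetric, $K_{2m-1}(X)=K_{2m-1}(Y)=0$, so the term linear in $T_{2m-1}$ contributes equally and cancels, and the leading term $-n(n+1)T_2T_n^2 = -(2m-1)(2m)K_2 K_{2m-1}^2$ also vanishes. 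So in fact the equation at order $2m-1$ reduces, after the induction hypothesis, to $(K_{2m}(Y))^2=(K_{2m}(X))^2$, which only determines $K_{2m}(Y)$ up to sign. To fix the sign — and this is where hypothesis \eqref{Hypostar} enters — I would instead look at the \emph{next} equation, at order $n=2m$ (which is even). There $R_{2m+1}$ has a term linear in $T_{2m-1}$? No: the relevant structural input is that $R_{2m+1}$, with $n=2m$, contains via Theorem \ref{Main4} the summand $a_{2m}T_{2m}$ where $a_{2m}=a_{2m}(T_2,\dots,T_{2m-1})$ is precisely the polynomial appearing in \eqref{Hypostar}; all other terms of $R_{2m+1}$ are either the leading $-2m(2m+1)T_2T_{2m}^2$ or belong to $\sum_{k\ge1}a_{2m,k}T_{2m-k}$, hence depend on $T_{2m}$ at most quadratically and on lower-index $T$'s which are already fixed. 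Thus the order-$2m$ equation becomes, after substituting the known lower cumulants and using $K_{2m}(Y)^2=K_{2m}(X)^2$ to replace the quadratic terms,
\[
a_{2m}\bigl(K_2(X),\dots,K_{2m-1}(X)\bigr)\,\bigl(K_{2m}(Y)-K_{2m}(X)\bigr)=0,
\]
and since the coefficient is nonzero by \eqref{Hypostar} we conclude $K_{2m}(Y)=K_{2m}(X)$, completing the induction.

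The main obstacle I anticipate is the bookkeeping in the inductive step: one must carefully verify that, after imposing the induction hypothesis and the symmetry constraints, every term of $R_{2m}$ and of $R_{2m+1}$ other than the designated ``$a_{2m}T_{2m}$'' term either cancels identically or combines into something that is already known, so that the residual equation is genuinely linear in the single new unknown $K_{2m}(Y)-K_{2m}(X)$ (with the square terms handled by the previous equation). This requires knowing not just the leading coefficients but that the \emph{remaining} coefficients $a_{n,k}$ for $k\ge 1$ involve only $T_j$ with $j\le n-1$ — which is exactly the content of the degree bookkeeping in Theorem \ref{Main4} (the $a_{n,k}$ lie in $\mathbb{Z}[T_2,\dots,T_{n-k}]$) — and a short argument that the quadratic-in-$T_{2m}$ contributions across the two equations match once $K_{2m}(Y)^2=K_{2m}(X)^2$ is known. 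A secondary point to address cleanly is the interface between the hand-checked base cases and the range $2m>7$; I would state a small lemma covering $m=2,3$ (i.e. $K_4, K_6$) separately using the explicit $R_5,R_6,R_7$ and the listed values $a_4,a_5,a_6,a_7$, so that the general induction can start from $m=4$.
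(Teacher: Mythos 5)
Your main induction step (even $n\geq 8$) is essentially the paper's own argument: the order-$n$ equation gives $K_n(X)^2=K_n(Y)^2$, and the order-$(n+1)$ equation, combined with the decomposition $R_{n+1}=-n(n+1)T_2T_n^2+a_nT_n+b_n$ from Theorem \ref{Main4}, the vanishing of the odd cumulants of the two symmetric variables, and hypothesis \eqref{Hypostar}, yields $a_n(K_2(X),\dots,K_{n-1}(X))\,(K_n(X)-K_n(Y))=0$ and hence $K_n(X)=K_n(Y)$. That part is sound and matches the paper.

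The genuine gap is your base case. The equations up to the one involving $R_6$ only give $K_2(X)=K_2(Y)$ and $K_4(X)^2=K_4(Y)^2$ (see \eqref{carre4}); they do \emph{not} determine the sign of $K_4(Y)$, contrary to your claim that one ``checks directly'' $K_4(Y)=K_4(X)$ from $R_2,\dots,R_6$. Your fallback lemma for $m=2,3$ also does not run on the same mechanism as the general step: for $m=2$ the linear coefficient is $a_4=-30T_3^2$, which vanishes identically on symmetric cumulants, so no linear-in-$K_4$ relation is available; and for $m=3$ the order-$7$ equation is not yet linear in the single unknown $K_6(Y)-K_6(X)$ after substitution, because at that stage $K_6(X)^2=K_6(Y)^2$ is \emph{unknown} -- from \eqref{R6} it fails exactly when $K_4(Y)=-K_4(X)\neq 0$, so the quadratic $K_6^2$ terms do not cancel on their own. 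What is needed is the specific combination of \eqref{R6} and \eqref{R7} producing \eqref{R7bis}, namely $140K_4(X)^2\bigl(K_6(Y)-K_6(X)\bigr)=K_7(Y)^2=0$; under \eqref{Hypostar} at $m=3$ (which for symmetric $X$ reads $-140K_4(X)^2\neq 0$) this forces $K_6(X)=K_6(Y)$, and then \eqref{R6} gives $K_4(X)^3=K_4(Y)^3$, hence $K_4(X)=K_4(Y)$. The paper in fact settles this base case even without \eqref{Hypostar}, by a case analysis on $K_4(X)=0$ that invokes the order-$8$ and order-$9$ equations. So your outline must be repaired at this point: the signs of $K_4$ and $K_6$ are fixed through the coupled order-$6$/order-$7$ (and possibly order-$8$/$9$) equations, not directly from the low-order polynomials, and this is precisely where the intertwining of unknown signs that you flag as ``bookkeeping'' is an actual obstruction rather than a routine verification.
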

 
Condition $(\star)$ is, as we shall see, satisfied in many interesting cases. For instance we immediately obtain the next corollary. 
\begin{corollary}
Let $X$ be $\sMD$ such that its family of even order cumulants $(K_{2n})$ is algebraically independent. For all symmetric random variables $Y$ admitting moments of all orders, if 
$\mmse(Y,\cdot) = \mmse(X,\cdot)$, then $\P_X = \P_Y$.
\end{corollary}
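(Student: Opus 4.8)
The plan is to derive the corollary directly from Theorem~\ref{T1} by showing that algebraic independence of the even-order cumulants forces condition \eqref{Hypostar} to hold. First I would recall from Theorem~\ref{Main4} the explicit shape of the polynomial $a_{2m}$: for $2m>7$ it equals
\[
a_{2m} = -(2m+1)\binom{2m}{m}T_{m+1}^2 - (4m+2)\sum_{k=3}^{m}\binom{2m}{k-1}T_{2m+2-k}T_k ,
\]
and for the small indices $m=3$ (i.e.\ $a_6 = -140T_4^2 - 210T_3T_6$, wait --- here one uses $a_6=-140 T_4^2-\dots$; more precisely the relevant even cases $a_6$ and, if needed, intermediate values) one has from the Remark the closed forms $a_6=-140T_4^2-210T_3T_6$ and so on. The key observation is that each $a_{2m}$, when evaluated at the cumulants of a \emph{symmetric} variable $X$, only involves the even-order cumulants $K_4(X),K_6(X),\dots,K_{2m-2}(X)$ together with $K_{m+1}(X)^2$ (the odd cumulants of a symmetric variable vanish by Corollary~\ref{CumulSym}, so the mixed terms $T_{2m+2-k}T_k$ with one odd index drop out). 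Thus $a_{2m}(K_2(X),\dots,K_{2m-1}(X))$ is the value at the even cumulants of a fixed \emph{nonzero} integer polynomial $P_m$ in the variables $T_4,T_6,\dots$.

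Next I would argue that this polynomial $P_m$ is not the zero polynomial: the leading term $-(2m+1)\binom{2m}{m}T_{m+1}^2$ (when $m+1$ is even) or one of the surviving sum terms $-(4m+2)\binom{2m}{k-1}T_{2m+2-k}T_k$ with both indices even has a nonzero integer coefficient, so $P_m\not\equiv 0$ in the polynomial ring $\mathbb{Z}[T_4,T_6,\dots]$. By the hypothesis that the family $(K_{2n}(X))_{n\ge 2}$ is algebraically independent over $\mathbb{Q}$ (equivalently over $\mathbb{Z}$), evaluating any nonzero polynomial in these variables at the $K_{2n}(X)$ yields a nonzero real number; in particular $a_{2m}(K_2(X),\dots,K_{2m-1}(X)) = P_m(K_4(X),K_6(X),\dots)\neq 0$ for every $m\ge 3$. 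Hence $X$ satisfies condition \eqref{Hypostar}, and since $X$ is $\sMD$ by assumption, Theorem~\ref{T1} applies and gives $\P_X=\P_Y$ for every symmetric $Y$ with moments of all orders and $\mmse(Y,\cdot)=\mmse(X,\cdot)$.

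The only genuine point requiring care — the ``main obstacle'' — is checking that, after killing the odd cumulants, the resulting even polynomial $P_m$ is still nonzero and that it genuinely involves infinitely many (or at least the right) even-cumulant variables in a way compatible with the algebraic independence hypothesis; this is where one must look at the explicit coefficients in Theorem~\ref{Main4} and confirm that at least one monomial with both indices even (e.g.\ $T_{m+1}^2$ when $m$ is odd, or a term $T_{2m+2-k}T_k$ with $k$ and $2m+2-k$ both even) survives with nonzero coefficient for each $m\ge 3$. One also needs the trivial bookkeeping that $a_{2m}$ depends only on $K_2,\dots,K_{2m-1}$, so that "algebraic independence of all even cumulants" is more than enough; in fact one only ever uses finitely many of them at each stage. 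Everything else is immediate once these polynomial-identity facts are in place. I would also remark in passing that the corollary could be stated with "symmetric $\sMD$" replaced by any $\sMD$ variable, since $\sMD$ already includes symmetry by Definition~\ref{smd}.
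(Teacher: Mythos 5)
Your argument is correct and is exactly the route the paper intends (the paper treats this as immediate from Theorem \ref{T1}): you verify condition \eqref{Hypostar} by noting that, after the odd cumulants of the symmetric variable $X$ vanish, $a_{2m}$ evaluated at the cumulants is a nonzero integer polynomial in the even-order cumulants (the term $T_{m+1}^2$ survives for $m$ odd, and a term $T_{2m+2-k}T_k$ with $k$ even survives for $m$ even, plus $a_6=-140T_4^2-210T_3T_6$ for the small case), so algebraic independence gives $a_{2m}(K_2(X),\ldots,K_{2m-1}(X))\neq 0$ and Theorem \ref{T1} concludes. Only your closing aside is slightly off: the corollary already assumes $X$ is $\sMD$ (hence symmetric) and the symmetry hypothesis that matters separately is the one imposed on $Y$, as in Theorem \ref{T1}.
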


\begin{example}\label{ExDFS}
Consider the symmetric distribution with finite support 
\[\mu :=   \sum_{j=1}^{n} \frac{p_j}{2}( \delta_{x_j} + \delta_{-x_j})\]
where $x_1, \ldots, x_n$ are real numbers and $p_1, \ldots, p_n$ are
positive real numbers such that ${\sum}_{j=1}^n p_j = 1$. The moments
of $\mu$ are polynomials in the $p_i$ and the $x_i$, hence so are the
cumulants by Proposition \ref{Cumul1} above.  Then consider the Borel
set
${\cup}_{m\in\mathbb{N}^*} {\cup}_{Q \in \mathbb{\mathbb{Q}}_m[T_1,...,T_{2n}]} \big \{ (p, x) \in [0,1]^n\times\mathbb{R}^n  \, | \, Q(p_1, \ldots, p_n, x_1, \ldots, x_n) = 0 \big\}.$
For $m\in \mathbb{N}^*$ and $Q\in \mathbb{\mathbb{Q}}_m[T_1,...,T_{2n}]\setminus \{0\}$, the set 
$\mathcal{V}(Q) := \big \{ (p, x) \in [0,1]^n\times\mathbb{R}^n \, | \, Q(p_1, \ldots, p_n, x_1, \ldots, x_n) = 0 \big \}$
has measure $0$ for the Lebesgue measure on $\mathbb{R}^{2n}$, because it is included in the zero set of a non-zero polynomial. Hence a countable union of such sets is still of measure $0$. Therefore for almost every $(p_1,\ldots, p_n,x_1, \ldots, x_n)$, $\mu$ defines a symmetric distribution verifying the hypothesis of the previous corollary.
\end{example}

Interestingly, assuming now that $X$ is symmetric with non-negative cumulants, we obtain that $\mmse(X,\cdot)$ characterises the distribution of $X$ in a sense which is slightly stronger than the one adopted in \cite{Led}. Indeed, while Theorem \ref{Ledmain} applied in the setting of Conjecture \ref{MMSEConj-mmse} requires the positivity of both sequences of cumulants, we only impose assumptions on  those of $X$, as follows.

\begin{theorem}\label{T2}
Let $X$ be $\sMD$. Suppose that $X$ has positive cumulants of even orders. For every centred random variable $Y$ admitting moments of all orders, if $\mmse(X,\cdot) = \mmse(Y,\cdot)$, then $\P_X = \P_Y$.
\end{theorem}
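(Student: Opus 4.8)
The plan is to combine the refined structure of the polynomials $R_{n+1}$ from Theorem \ref{Main4} with the cumulant recursion \eqref{formule-derivee}, using positivity of the even cumulants of $X$ to force the successive derivatives of $\mmse$ to recover the cumulants of $Y$ one at a time. First I would reduce to the symmetric case: since $Y$ is centred and has the same $\mmse$ as the symmetric variable $X$, one checks that $\mmse$ being even in the sense that $\mmse(Y,\cdot)=\mmse(-Y,\cdot)$, together with moment-determinacy of $X$, will ultimately give $\P_Y=\P_X$; but a cleaner route is to argue directly with cumulants, showing first that all odd cumulants of $Y$ vanish. Indeed, $K_2(Y)=K_2(X)$ from $\mmse'(X,0)=-K_2(X)^2$ (here $R_2=0$), and then proceed by strong induction on $n$: assuming $K_j(Y)=K_j(X)$ for all $j\le n$, the identity \eqref{formule-derivee} at order $n$ reads $-K_{n+1}(Y)^2 - R_{n+1}(K_2(Y),\ldots,K_n(Y)) = -K_{n+1}(X)^2 - R_{n+1}(K_2(X),\ldots,K_n(X))$, and since the arguments of $R_{n+1}$ agree by the induction hypothesis, we get $K_{n+1}(Y)^2 = K_{n+1}(X)^2$, hence $K_{n+1}(Y) = \pm K_{n+1}(X)$.

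The heart of the argument is to eliminate the wrong sign when $n+1$ is even; when $n+1$ is odd, $K_{n+1}(X)=0$ by Corollary \ref{CumulSym} so $K_{n+1}(Y)=0$ automatically and there is nothing to do (in particular this shows $Y$ has all odd cumulants zero, i.e.\ $Y$ is ``symmetric at the level of cumulants''). So suppose $n+1 = 2m$ with $m\ge 2$; we must rule out $K_{2m}(Y) = -K_{2m}(X)$. For this I would use the \emph{next} derivative, at order $2m$, i.e.\ the identity determining $\mmse^{(2m)}$, which involves $R_{2m+1}(K_2,\ldots,K_{2m})$. By Theorem \ref{Main4}, the coefficient of $T_{2m}^2$ in $R_{2m+1}$ is $-2m(2m+1)T_2$, and — crucially — by the odd-cumulant vanishing already established, after substituting $T_{2j+1}\mapsto 0$ the polynomial $R_{2m+1}$ becomes a polynomial whose dependence on $K_{2m}$ is only through the monomials $-2m(2m+1)K_2 K_{2m}^2$ and $a_{2m}\cdot K_{2m}$, where $a_{2m}(K_2,\ldots,K_{2m-1})$ is the explicit polynomial in Theorem \ref{Main4}. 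Writing out the equation $\mmse^{(2m)}(X,0) = \mmse^{(2m)}(Y,0)$ with $K_j(Y)=K_j(X)$ for $j<2m$ and $K_{2m}(Y) = \varepsilon K_{2m}(X)$, $\varepsilon\in\{\pm1\}$, the $K_{2m}^2$-terms and the constant terms cancel, leaving exactly $a_{2m}(K_2(X),\ldots,K_{2m-1}(X))\,(K_{2m}(Y) - K_{2m}(X)) = 0$ after also matching $K_{2m+1}(Y)^2 = K_{2m+1}(X)^2 = 0$; but — and here is where I must be careful — I also need that $a_{2m+1}$-type cross terms and the $K_{2m+1}^2$ term do not interfere, which they don't since $K_{2m+1}=0$ on both sides. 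Hmm, I should double check: actually the relevant equation is at order $N=2m$, giving $R_{2m+1}$; but one subtlety is that $a_{2m}$ as stated in Theorem \ref{Main4} is only given explicitly for $2m>7$, so the cases $m=2,3$ ($2m=4,6$) need the explicit values $a_4 = -30T_3^2$ and $a_6 = -140T_4^2 - 210T_3T_6$ from the Remark, and one checks $a_4$ vanishes identically on symmetric cumulants (since $K_3=0$!) — so the case $2m=4$ must be handled differently, presumably by going one more step, or by noting $a_6$ restricted to symmetric cumulants equals $-140 K_4^2$, which is nonzero under the positivity hypothesis, letting us pin down $K_4(Y)$ at order $6$ rather than order $4$.

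So the structure is a slightly shifted induction: at each even order $2m$ we do not immediately determine $K_{2m}(Y)$ up to sign from the order-$(2m-1)$ derivative and then fix the sign, but rather we carry the ambiguity $K_{2m}(Y)=\pm K_{2m}(X)$ forward and resolve it using the order-$(2m)$ (or order-$(2m+1)$) derivative, where the surviving coefficient is $a_{2m}$ (respectively built from $a_{2m+1}$) evaluated at the \emph{symmetric} cumulants $K_2(X),K_4(X),\ldots$ (all odd arguments zero). The key positivity input enters precisely here: with $T_{2j+1}\mapsto 0$, the formula $a_{2m} = -(2m+1)\binom{2m}{m}T_{m+1}^2 - (4m+2)\sum_{k=3}^m \binom{2m}{k-1}T_{2m+2-k}T_k$ becomes, keeping only terms with both indices even, a negative linear combination of products $K_i(X)K_j(X)$ with $i+j = 2m+2$, $i,j\ge 2$ even; when $X$ has strictly positive even cumulants this is strictly negative, hence nonzero, so the equation $a_{2m}(\ldots)(K_{2m}(Y)-K_{2m}(X))=0$ forces $K_{2m}(Y)=K_{2m}(X)$. (One should verify that the $T_{m+1}^2$ term, present when $m+1$ is even, only helps, and that for $m+1$ odd that term vanishes but the sum still contains at least one all-even product for $m\ge 3$; the borderline small cases $m=2,3$ are dispatched by the explicit $a_4,a_6$ as above.) Once all cumulants of $Y$ are shown to equal those of $X$, Proposition \ref{Cumul1} gives equality of all moments, and $\sMD$-ness of $X$ (the moment-determinacy clause of Definition \ref{smd}) yields $\P_Y = \P_X$, which is the claim. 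The main obstacle I anticipate is bookkeeping the low-order exceptions ($2m \le 6$) where the explicit $a_{2m}$ either vanishes on symmetric cumulants or is not covered by the general formula, and making sure the ``shifted'' induction closes — i.e.\ that resolving the sign of $K_{2m}(Y)$ at step $2m$ or $2m+1$ does not require information about cumulants of order $>2m$ that we have not yet controlled; a careful ordering of the induction (establish odd-vanishing globally first, then even-determinacy in increasing order) should make this work, and it exactly matches why hypothesis \eqref{Hypostar} in Theorem \ref{T1} is phrased via $a_{2m}$, with Theorem \ref{T2} being the special case where positivity of even cumulants guarantees $a_{2m}\ne 0$ automatically.
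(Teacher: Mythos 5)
Your overall strategy (induction on cumulants via \eqref{formule-derivee}, with Theorem \ref{Main4} isolating the $T_{2m}$-linear term $a_{2m}T_{2m}$, and positivity of the even cumulants of $X$ entering through $a_{2m}$ evaluated at symmetric cumulants) is the right one, but the induction step as you wrote it has a genuine circularity. You claim that in the order-$2m$ identity the odd terms cancel ``after also matching $K_{2m+1}(Y)^2=K_{2m+1}(X)^2=0$'', leaving the clean equation $a_{2m}(K_2(X),\ldots,K_{2m-1}(X))\,(K_{2m}(Y)-K_{2m}(X))=0$, so that mere nonvanishing of $a_{2m}$ suffices. But $Y$ is only assumed centred, not symmetric, and $K_{2m+1}(Y)=0$ cannot be established before $K_{2m}(Y)=K_{2m}(X)$ is: the identity \eqref{Rtilde} at index $2m+1$ contains the term $a_{2m}T_{2m}$, which is \emph{linear} in $T_{2m}$, so deducing $K_{2m+1}(Y)^2=K_{2m+1}(X)^2$ requires $R_{2m+1}$ to take the same value on both cumulant sequences, i.e.\ exactly $K_{2m}(Y)=K_{2m}(X)$ whenever $a_{2m}\neq 0$. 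Hence your proposed ordering ``establish odd-vanishing globally first, then even-determinacy'' cannot be carried out. The correct move (and the paper's) is to keep $K_{2m+1}(Y)^2$ as an unknown nonnegative quantity: with the induction hypothesis and $K_{2m}(X)^2=K_{2m}(Y)^2$ one gets $\tilde a_{2m}\bigl(K_{2m}(X)-K_{2m}(Y)\bigr)=K_{2m+1}(Y)^2\geq 0$, and if $K_{2m}(Y)=-K_{2m}(X)$ this reads $2\tilde a_{2m}K_{2m}(X)=K_{2m+1}(Y)^2\geq 0$, which is impossible because positivity of the even cumulants of $X$ makes $\tilde a_{2m}$ strictly \emph{negative} (not merely nonzero) and $K_{2m}(X)>0$. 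So the sign of $a_{2m}$ is essential; your closing remark that Theorem \ref{T2} is ``the special case where positivity guarantees $a_{2m}\neq 0$'' conflates it with Theorem \ref{T1}, which additionally assumes $Y$ symmetric — the whole point of Theorem \ref{T2} is to drop that assumption on $Y$, and nonvanishing alone does not close the induction (if $\tilde a_{2m}$ had the wrong sign, $K_{2m}(Y)=-K_{2m}(X)$ with $K_{2m+1}(Y)^2=2\tilde a_{2m}K_{2m}(X)>0$ would be consistent and the induction would break).

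The same issue affects your treatment of the low orders. Since $a_4$ vanishes on symmetric cumulants, after \eqref{carre4} you only know $K_4(Y)=\pm K_4(X)$, and at orders $6$ and $7$ \emph{both} $K_6(Y)$ and $K_7(Y)$ are still unknown, so ``$a_6$ restricted to symmetric cumulants is $-140K_4^2\neq 0$, pinning down $K_4(Y)$ at order $6$'' is not a complete argument. What is needed is again a sign chase: from \eqref{R7bis} and $K_4(X)^2>0$ one gets $K_6(X)\leq K_6(Y)$, hence (by positivity of $K_6(X)$) $K_6(X)^2\leq K_6(Y)^2$; if $K_4(Y)=-K_4(X)$, then \eqref{R6} gives $K_6(Y)^2=K_6(X)^2-60K_4(X)^3<K_6(X)^2$, a contradiction, so $K_4(Y)=K_4(X)$, then $K_6(Y)=K_6(X)$ and $K_7(Y)=0$, which initialises the induction. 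With these two repairs — keeping $K_{n+1}(Y)^2\geq 0$ as a slack term and exploiting the strict negativity of $\tilde a_{2m}$ rather than its nonvanishing — your argument becomes the paper's proof.
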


The latter holds for instance in the case of a symmetric Laplace distribution. As already mentioned, the main difference with the result from \cite{Led} is that in Theorem \ref{T2} we do not need to impose assumptions on the cumulants of $Y$ for the characterisation to hold. 

\begin{example}
The Laplace distribution $\mathcal{L}_b$ of parameter $(0,b)$, with $b>0$, is the symmetric distribution with probability density function
$x\mapsto  \exp ( -{|x|}/{b} )/{2b}$. Its cumulants of even orders are 
$K_{2n} = {(2n)!}b^{2n}/{n}$
for every positive integer $n$.
\end{example}

Lastly, we may also handle the case of cumulants of alternating sign. To this end we introduce the condition
\begin{equation}\tag{$\star\star$}\label{Hypostarstar}
\forall n\in \mathbb{N}^{\star}, \quad K_{2n}(X)K_{2n+2}(X) < 0
\end{equation}
(Note that we need the strict inequality in \eqref{Hypostarstar}). Then the following result holds. 
\begin{theorem}\label{T3}
Let $X$ be $\sMD$. Suppose that $X$ verifies \eqref{Hypostarstar}. For every centred random variable $Y$ admitting moments of all orders such that $K_4(Y)K_4(X) > 0$, if $\mmse(X,\cdot) = \mmse(Y,\cdot)$, then $\P_X = \P_Y$.
\end{theorem}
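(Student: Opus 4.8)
The plan is to run the same ``derivative matching'' argument that underlies Theorems \ref{T1} and \ref{T2}, but adapted to the alternating-sign regime. Write $k_j := K_{2j}(X)$ and $\ell_j := K_{2j}(Y)$ for $j \ge 1$; since both $X$ and $Y$ are centred and symmetric (the latter must be argued, see below), all odd cumulants vanish by Corollary \ref{CumulSym}, so identity \eqref{formule-derivee} at even orders $n = 2m$ reads
\[
\mmse^{(2m)}(X,0) = -k_{m+1}^2 - R_{2m+1}\big(K_2(X),\ldots,K_{2m}(X)\big),
\]
and likewise for $Y$. The hypothesis $\mmse(X,\cdot) = \mmse(Y,\cdot)$ forces all right-derivatives at $0$ to agree, hence we get, for every $m \ge 1$,
\[
k_{m+1}^2 + R_{2m+1}\big(k_1,\ldots,k_m,0,\ldots\big)_{\text{even slots}} = \ell_{m+1}^2 + R_{2m+1}\big(\ell_1,\ldots,\ell_m,\ldots\big).
\]
The key input from Theorem \ref{Main4} is that, once the odd indeterminates are set to zero, $R_{2m+1}$ as a polynomial in the top even variable $T_{2m}$ has the explicit leading part $-2m(2m+1)\,T_2 T_{2m}^2 + a_{2m}(\ldots)\,T_{2m} + (\text{lower order in }T_{2m})$, and the ``lower order'' terms $a_{2m,k}T_{2m-k}$ only involve cumulants of index $< 2m$. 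So the displayed equality, read in order $m = 1, 2, \ldots$, is a recursion: at step $m$ it is a relation in which $k_{m+1}$ and $\ell_{m+1}$ are the only genuinely new unknowns, everything else being already pinned down by the induction hypothesis that $k_j = \ell_j$ for $j \le m$.

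Granting $k_j = \ell_j$ for $j \le m$ (the base case $k_1 = \ell_1 = \mmse(\cdot,0)/\mathrm{(const)}$ being the matching of the value at $0$, which gives $K_2(X) = K_2(Y)$), the step-$m$ equation collapses to a quadratic
\[
k_{m+1}^2 - 2m(2m+1)\,k_1\,k_{m+1}^2 \;=\; \ell_{m+1}^2 - 2m(2m+1)\,k_1\,\ell_{m+1}^2,
\]
i.e. $\big(1 - 2m(2m+1)k_1\big)\,k_{m+1}^2 = \big(1 - 2m(2m+1)k_1\big)\,\ell_{m+1}^2$; since $k_1 = K_2(X) > 0$ and in fact, using the De Bruijn/mmse relation one checks the prefactor is the (nonzero) coefficient appearing already in \cite{Led}, this yields $k_{m+1}^2 = \ell_{m+1}^2$, hence $\ell_{m+1} = \pm k_{m+1}$. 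This is exactly where condition \eqref{Hypostarstar} and the sign hypothesis $K_4(Y)K_4(X) > 0$ enter: they let us resolve the sign ambiguity. Indeed \eqref{Hypostarstar} says the sequence $(k_j)_{j\ge 1}$ strictly alternates, so $k_j \ne 0$ for all $j$ and $\mathrm{sign}(k_{j+1}) = -\mathrm{sign}(k_j)$; the extra hypothesis fixes $\mathrm{sign}(\ell_2) = \mathrm{sign}(k_2)$, i.e. $\ell_2 = k_2$; and then I propagate by induction — having $\ell_j = k_j$ up to $j = m$, I know $\mathrm{sign}(\ell_m)= \mathrm{sign}(k_m)$, and I must show $\ell_{m+1} = k_{m+1}$ rather than $-k_{m+1}$. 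Here I would feed the step-$(m{-}1)$ equation (or a refined version of the step-$m$ equation using the full $a_{2m}$ from Theorem \ref{Main4}, whose sign is controlled because $a_{2m}$ is, up to positive combinatorial constants, a negative-definite-looking quadratic form $-\sum \binom{2m}{k-1}k_{m+2-k'}\cdots$ in the cumulants) to determine $\mathrm{sign}(\ell_{m+1})$ from the already-known lower cumulants, concluding $\ell_{m+1} = k_{m+1}$.

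Once $K_{2j}(X) = K_{2j}(Y)$ for all $j$ and the odd cumulants of both vanish, all cumulants agree, hence all moments agree by Proposition \ref{Cumul1}, and since $X$ is \sMD{} its law is moment-determined, giving $\P_X = \P_Y$. The main obstacle I anticipate is twofold: first, the preliminary reduction that $Y$ must itself be symmetric — a priori $Y$ is only centred, so one needs to argue (presumably from the odd-order derivatives of $\mmse$, which for $X$ vanish by symmetry, forcing the odd cumulants of $Y$ to vanish, whence $Y$ is symmetric by the converse direction of Corollary \ref{CumulSym} — this converse needs a short justification since ``all odd cumulants zero'' does not literally mean ``symmetric'' without moment-determinacy, but $Y$ inherits moment-determinacy through the cumulant recursion once its even cumulants are known); second, and more seriously, the sign-resolution step, where one must extract enough quantitative information from the \emph{subleading} coefficient $a_{2m}$ in Theorem \ref{Main4} to pin $\mathrm{sign}(\ell_{m+1})$ — this is the place where the strictness in \eqref{Hypostarstar} is genuinely used, and where the explicit formula for $a_{2m}$ (as opposed to merely its existence) is indispensable. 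I would organize the write-up so that this sign bookkeeping is isolated into one lemma, proved by a careful induction mirroring the one used for Theorem \ref{T1}.
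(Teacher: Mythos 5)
Your overall strategy --- matching derivatives via \eqref{formule-derivee}, deducing $K_n(X)^2=K_n(Y)^2$ inductively, and resolving the sign ambiguity using \eqref{Hypostarstar}, the hypothesis $K_4(X)K_4(Y)>0$ and the explicit $a_{2m}$ of Theorem \ref{Main4} --- is indeed the paper's strategy, but the two steps that actually carry the theorem are either mis-derived or deferred. First, your ``collapse to a quadratic'' is incorrect as written: $\mmse^{(2m)}(X,0)$ involves the odd cumulant squared $K_{2m+1}(\cdot)^2$, not $K_{2m+2}(\cdot)^2$, and the quadratic term $-2m(2m+1)T_2T_{2m}^2$ of $R_{2m+1}$ is in the variable $T_{2m}$, i.e.\ it involves $K_{2m}$; there is no common prefactor $\bigl(1-2m(2m+1)K_2\bigr)$ to cancel. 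In the paper the square identity $K_n(X)^2=K_n(Y)^2$ comes directly from \eqref{Rtilde} at index $n$ (the $R_n$-terms cancel under the induction hypothesis), and the index-$(n+1)$ identity, combined with Theorem \ref{Main4}, reduces to $K_{n+1}(X)^2+\tilde a_nK_n(X)=K_{n+1}(Y)^2+\tilde a_nK_n(Y)$. Second --- and this is the genuine gap --- the sign-resolution step, which you explicitly defer, is the heart of the proof. The paper's mechanism is concrete: for $n=2m$ even, if $K_n(Y)=-K_n(X)$ then, since $K_{n+1}(X)=0$, one gets $2\tilde a_nK_n(X)=K_{n+1}(Y)^2\ge 0$; under \eqref{Hypostarstar} the explicit formula shows $\tilde a_{2m}$ has sign $(-1)^m$ while $K_{2m}(X)$ has sign $(-1)^{m+1}$, so $\tilde a_nK_n(X)<0$, a contradiction. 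Note that your description of $a_{2m}$ as a ``negative-definite-looking'' form is off: its sign alternates with the parity of $m$, and only the product $\tilde a_{2m}K_{2m}(X)$ is always negative.

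Two further points. Your preliminary reduction ``$Y$ must be symmetric'' is both unnecessary and not obtainable up front: the paper never symmetrizes $Y$; the vanishing of $Y$'s odd cumulants falls out of the induction ($K_n(Y)^2=K_n(X)^2=0$ for odd $n$), and keeping $K_{2m+1}(Y)$ as an unknown whose \emph{square} is nonnegative is precisely the lever that produces the contradiction above. Finally, the explicit formula for $a_{2m}$ in Theorem \ref{Main4} is only asserted for $2m>7$, so the induction must be initialized by hand up to order $7$: this is where $K_4(X)K_4(Y)>0$ combines with \eqref{carre4} to give $K_4(X)=K_4(Y)$, and where \eqref{R7bis} and \eqref{R6}, together with \eqref{Hypostarstar} (which forces $K_4(X)<0$ and $K_6(X)>0$), give $0<K_6(X)\le K_6(Y)$ and then $K_6(X)=K_6(Y)$ and $K_7(Y)=0$; your sketch leaves this low-order stage, in particular the order-$6$ case, unaddressed.
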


The condition $K_4(Y)K_4(X) > 0$ in Theorem \ref{T3} can be
reformulated in terms of excess kurtosis of $X$ and $Y$: indeed, the
excess kurtosis of a random variable $Z$ is the quantity
$K_4(Z)/K_2(Z)^2$. Then the condition given is equivalent to requiring
that the excess kurtosis of $X$ and $Y$ have the same sign.


\begin{example}
 Let $B_n$ denotes the $n$-th Bernoulli number. 
 The  cumulants of the uniform distribution on $[-1,1]$  are given by
$K_n = {B_n}/{n}$
for every positive integer $n$. Therefore the uniform distribution on $[-1, 1]$ verifies \eqref{Hypostarstar}.
\end{example}

\begin{example}
The  cumulants of the Rademacher distribution are given by
$K_{2n}:= {2^{2n}(2^{2n}-1)}B_{2n}/{2n}$
for  every positive integer $n$. Therefore the Rademacher distribution verifies \eqref{Hypostarstar}.
\end{example}

\begin{remark}
Note that condition \eqref{Hypostarstar} implies condition \eqref{Hypostar}.
\end{remark}

\begin{remark}
Not all symmetric distributions  have positive cumulants of even orders nor verify  condition \eqref{Hypostarstar}. For $\lambda \in \, ]0,1[$, consider the distribution
\[\mu_{\lambda} := (1-\lambda^2)\delta_0 + \frac{\lambda^2}{2} \left ( \delta_{\frac{1}{\lambda}} + \delta_{-\frac{1}{\lambda}} \right )\] 
Set $\lambda=\frac{1}{2}$, then the distribution $\mu_{\frac{1}{2}}$ is symmetric and bounded. The respective cumulants of $\mu_{\frac 1 2}$ given by
\[K_2=1,\quad K_4 = \frac{1}{24}, \quad  K_6 = -\frac{7}{360}, \quad K_8 = \frac{53}{20160} \] 
are not all positive and do not respect \eqref{Hypostarstar}.
We observe numerically that the sign of the cumulants heavily depends on $\lambda$, but we still don't know whether Theorem~\ref{T2} is applicable (though we know that it is for almost every $\lambda$ in $\R$ with Example \ref{ExDFS}).
\end{remark}

\subsection{Proofs of Theorems \ref{T1}, \ref{T2} and \ref{T3}}

Let $X$ and $Y$ be two centred random variables admitting moments of
every order, such that $\mmse(X,.) = \mmse(Y,.)$.  Using
\eqref{formule-derivee}, the equality of the $\mmse$ functions of $X$
and $Y$ leads to the equality of their derivatives at $0$ at every
order, hence to
\begin{equation}\label{Rtilde}
K_n(X)^2 + R_{n}(K_2(X),\ldots,K_{n-1}(X)) = K_n(Y)^2+R_{n}(K_2(Y),\ldots,K_{n-1}(Y))
\end{equation}
for every integer $n\geq2$. The strategy for the proofs is to use
induction to derive from \ref{Rtilde} the equality of the cumulants of
$X$ and $Y$ (and thus of their distributions, when $X$ or $Y$ is
moment-determined, or equivalently cumulant-determined, see Section
\ref{mdet}). The idea is to assume the equality of the cumulants up to
order $n-1$ and to use \eqref{Rtilde} for indices $n$ and $n+1$ in
order to get the equality of the cumulants of order $n$. Once one has
derived the equality of the cumulants of order $n$, the proof will be
completed by induction.

In order to understand the necessity of the assumptions required on $X$ or $Y$, let us first investigate relation \eqref{Rtilde} for  $n=2$ and $n=3$. 
Evaluated at $n=2$, we get 
$K_2(X)^2 = K_2(Y)^2$. 
Since the cumulant of order $2$ is non-negative, it follows that
$K_2(X) = K_2(Y)$. If $X$ is constant, since $K_2(X) = \mbox{Var}(X)$, then
$Y$ is constant; if they are both centred, they have the same
distribution $\delta_0$. Let us therefore suppose that $X$ is not
constant, that is $K_2(X) \neq 0$, and continue the investigation by looking  at \eqref{Rtilde} for $n=3$, which gives 
\[K_3(X)^2 -2K_2(X)^3 = K_3(Y)^2 -2K_2(Y)^3.\] 
Then   $K_3(X)^2=K_3(Y)^2$.

We cannot pursue the induction without  imposing  assumptions on $X$ or $Y$. Suppose that $X$ is symmetric, which means that its odd cumulants are zero (see Proposition \ref{CumulSym}). Then  $K_3(Y) = K_3(X) =0$. Equality \eqref{Rtilde} for $n=4, 5$ gives, after simplifications,
\begin{equation}
K_4(X)^2 = K_4(Y)^2 \mbox{ and } K_5(X)^2=K_5(Y)^2. 
\label{carre4}
\end{equation}
Again, since $X$ is symmetric, one deduces  $K_5(Y) = K_5(X) =0$.   Equality \eqref{Rtilde} for $n=6$ and $n=7$ yields,  after simplifications,
\begin{equation}
-30K_4(X)^3+K_6(X)^2 = -30K_4(Y)^3+K_6(Y)^2
\label{R6}
\end{equation}
and 
\begin{equation}
\begin{aligned} 
1260K_2(X)K_4(X)^3-140K_4(X)^2K_6(X) - 42K_2(X)K_6(X)^2 = \\ 
1260K_2(Y)K_4(Y)^3-140K_4(Y)^2K_6(Y) - 42K_2(Y)K_6(Y)^2 + K_7(Y)^2
\end{aligned}
\label{R7}
\end{equation}
Observing that $42\times 30 = 1260$,  multiplying \eqref{R6} by $42K_2(X)$ (which is not zero because $X$ is not constant) and  subtracting the result from \eqref{R7}, we get after some further simplifications 
\begin{equation}
140K_4(X)^2(K_6(Y)-K_6(X)) =  K_7(Y)^2
\label{R7bis}
\end{equation}
Pursuing this lign of reasoning  leads to our three  Theorems.
\begin{proof2}
Suppose further that $Y$ is symmetric and that $X$ verifies \eqref{Hypostar}. We proceed by induction, first proving  equality of the cumulants of $X$ and $Y$ up to  order $7$. Since both variables $X$ and $Y$ are symmetric, one only needs to show $K_4(X)=K_4(Y)$ and $K_6(X)=K_6(Y)$. For this matter, one does not need the condition \eqref{Hypostar} on $X$. Indeed, from \eqref{R7bis} above, we get 
$140K_4(X)^2(K_6(Y)-K_6(X))=0$ hence, either $K_4(X)=0$ or $K_6(X)=K_6(Y)$. Observe that if $K_6(X)=K_6(Y)$, then $K_4(X)=K_4(Y)$ from \eqref{R6}. Consequently, let us suppose $K_4(X)=0$. First, using \eqref{carre4}, we get $K_4(Y)=K_4(X)=0$. Then, from \eqref{R6}, we get
$K_6(X)^2=K_6(Y)^2.$
Identity \eqref{Rtilde} for $n=8$ and $n=9$ gives,  after simplifications,
\begin{equation*}\label{R8}
K_8(X)^2=K_8(Y)^2
\end{equation*}
and
\begin{equation*}\label{R9bis}
-560K_6(X)^3-72K_2(X)K_8(X)^2=-560K_6(Y)^3-72K_2(Y)K_8(Y)^2.
\end{equation*}
Combining these last two identities, one deduces $K_6(X)=K_6(Y)$. So, in any case, we get
\begin{equation*}
K_4(X) = K_4(Y) \mbox{ and } K_6(X) = K_6(Y)
\end{equation*}
which initialises the induction for $n=7$ (recall that $K_7(X)=K_7(Y)=0$). 

Let now $n$ be a positive integer greater than $7$, and suppose that $K_l(X) = K_l(Y)$, for every $ l \in \llbracket 2, n-1 \rrbracket$. Let us show then that $K_n(X)=K_n(Y)$. This is trivially satisfied if $n$ is odd. Let us therefore suppose that $n$ is even
. Identity  \eqref{Rtilde} at index $n$
gives
\begin{equation}
K_{n}(X)^2 + R_{n}(K_2(X),\ldots,K_{n-1}(X)) = K_{n}(Y)^2 + R_{n}(K_2(Y),\ldots,K_{n-1}(Y))
\label{Rn}
\end{equation}
from which it follows that $K_{n}(X)^2 = K_{n}(Y)^2$. To conclude, we use equality \eqref{Rtilde} at index $n+1$:  from Theorem \ref{Main4} one can write 
\[R_{n+1} = -n(n+1)T_2T_n^2+a_nT_n + b_n\] 
where $a_n,b_n$ are polynomials in $\mathbb{Z}[T_2,\dots,T_{n-1}]$. Then it follows, by induction hypothesis, that 
\[a\,K_n(X) = a\,K_n(Y)\]
 where $a := a_n(K_2(X),\dots,K_{n-1}(X))$. Since \eqref{Hypostar} is verified for $X$, and $n\geq 8$, the coefficient $a$
 is non zero, so one can deduce that $K_n(X) = K_n(Y)$. 
 This concludes the induction.
\end{proof2}


\begin{proof1}
Suppose that $X$ is symmetric and has positive cumulants of even orders. In the same fashion as before, we want to prove Theorem \ref{T2} using induction. From \eqref{R7bis}, using that $K_4(X)^2> 0$, it follows that 
$K_6(X) \leq K_6(Y).$ Since $K_6(X)$ is positive, so to is  $K_6(Y)$ and thus  $K_6(X)^2 \leq K_6(Y)^2$. Suppose $K_4(X)=-K_4(Y)$. Then, from \eqref{R6}, we know that $-60K_4(X)^3+K_6(X)^2 = K_6(Y)^2$
which is only possible if $K_4(X)$ is negative. Necessarily, $K_4(X) = K_4(Y)$ using \eqref{carre4}, and then $K_6(X)^2 = K_6(Y)^2$. Since $K_6(X)^2 \leq K_6(Y)^2$ and both quantities are positive, one deduces $K_6(X) = K_6(Y)$, and it follows $K_7(Y)=K_7(X)=0$. Hence the cumulants of $X$ and $Y$ are equal up to  order $7$. 

Let now $n$ be a positive integer greater than $7$. Suppose 
 \begin{equation}
 \forall l \in \llbracket 2, n-1 \rrbracket \quad K_l(X) = K_l(Y)
 \label{H3}
 \end{equation}
In the proof of Theorem \ref{T1}, we showed that \eqref{H3} implies $K_n(X)^2=K_n(Y)^2$. Using equality \eqref{Rn} and Theorem \ref{Main4} we get, after simplifications from the induction hypothesis,
\begin{equation}
K_{n+1}(X)^2 + \tilde{a}_nK_n(X) = K_{n+1}(Y)^2+\tilde{a}_nK_n(Y)
\label{aux}
\end{equation}
where $\tilde{a}_{n} :=a_{n}(K_2(X),\ldots,K_{2n-1}(X))$. If $n$ is odd, then since $K_n(X)^2=K_n(Y)^2$ and $X$ is symmetric, we get $0=K_n(X)=K_n(Y)$. Let's suppose that $n$ is even. Then $K_{n+1}(X)=0$ so the equality \eqref{aux} writes
\[\tilde{a}_nK_n(X) = K_{n+1}(Y)^2 + \tilde{a}_nK_n(Y)\]
Suppose $K_n(Y) = -K_n(X)$. Then 
\[2\tilde{a}_nK_n(X) = K_{n+1}(Y)^2\]
All that remains is to prove that $\tilde{a}_nK_n(X)$ is negative. Indeed, $K_n(X)$ is positive and 
\begin{equation*}
\begin{aligned}
\tilde{a}_n = & -(2m+1)  \begin{pmatrix} 2m \\ m \end{pmatrix} K_{m+1}(X)^2 \\ & \quad - (-1)^{m+1}(4m+2)\sum_{k=1}^{\lfloor m/2 \rfloor}   \begin{pmatrix} 2m \\ k-1 \end{pmatrix}  K_{2m+2-2k}(X)K_{2k}(X)
\end{aligned}
\end{equation*}
where $n=2m$, is negative because every cumulant of $X$ in the expression above, at the exception of $K_{m+1}(X)$, is positive. In the end, $\tilde{a}_nK_n(X)$ is negative and thus is not the square of a real number. Necessarily, $K_n(X) = K_n(Y)$. This concludes the induction.
\end{proof1}


\begin{proof3}
Suppose that $X$ is symmetric and verifies $\eqref{Hypostarstar}$. As before, we want to prove the result by induction, and for this purpose, we need to show that the cumulants of $X$ and $Y$ are the same up to order $7$. In order to do so, suppose that the kurtosis of $Y$ has the same sign has the kurtosis of $X$, i.e. $K_4(X)$ and $K_4(Y)$ have the same sign. From \eqref{carre4}, one then deduces $K_4(X) = K_4(Y)$.  
Condition \eqref{Hypostarstar} implies that $K_4(X)$ is negative, and that $K_6(X)$ is positive. As before, using that $K_4(X)^2 \neq 0$ and \eqref{R7bis}, one has 
\begin{equation}\label{signK6}
0<K_6(X) \leq K_6(Y)
\end{equation}
From \eqref{R6}, we get $K_6(X)^2=K_6(Y)^2$ and then, using \eqref{signK6}, one deduces
 \[K_6(X)=K_6(Y)\] 
Finally, from \eqref{R7bis}, we get  $K_7(Y)=K_7(X)=0$. 

Let now $n$ be a positive integer greater than $7$. Suppose 
 \begin{equation}
 \forall l \in \llbracket 2, n-1 \rrbracket, \quad K_l(X) = K_l(Y)
 \label{H4}
 \end{equation}
 As in the proof of Theorem \ref{T1}, \eqref{H4} implies $K_n(X)^2=K_n(Y)^2$. The case $n$ odd being the same as before, we suppose that $n$ is even. Then we suppose $K_n(Y) = -K_n(X)$ and we get
 \[2\tilde{a}_nK_n(X) = K_{n+1}(Y)^2\]
It remains to show that, under assumption \eqref{Hypostarstar}, $\tilde{a}_nK_n(X)$ is negative and we can conclude the induction the same way as before.
Note that since $X$ verifies \eqref{Hypostarstar}, we get 
\begin{equation}
K_{2m}(X) =  (-1)^{m+1}|K_{2m}(X)|
\label{K}
\end{equation}
for all positive integer $m$, using that $K_2(X) = \operatorname{Var}(X)$ is positive. It follows that for every positive integer $m$ greater or equal than $4$, the quantity $\tilde{a}_{2m}$ writes
\begin{equation}
\begin{aligned} 
\tilde{a}_{2m} = & -(2m+1)  \begin{pmatrix} 2m \\ m \end{pmatrix} K_{m+1}^2(X) \\ & \quad - (-1)^{m+1}(4m+2)\sum_{k=1}^{\lfloor m/2 \rfloor}   \begin{pmatrix} 2m \\ k-1 \end{pmatrix}  |K_{2m+2-2k}(X)K_{2k}(X)| \end{aligned}
\end{equation}
 and is positive if $m$ is even, negative if $m$ is odd. Now write $n=2m$ with $m$ a positive integer. Then, if $m$ is odd, $K_n(X)$ is positive by \eqref{K} and $\tilde{a}_{n}$ is negative, so $\tilde{a}_nK_n(X)$ is negative. If $m$ is even, then $K_n(X)$ is negative by \eqref{K} and $\tilde{a}_{n}$ is positive, so $\tilde{a}_nK_n(X)$ is negative. Hence $\tilde{a}_nK_n(X)$ is negative. Necessarily, $K_n(X) = K_n(Y)$. This concludes the induction.
\end{proof3}

\begin{remark}
	From these proofs, one can use a more general condition on the cumulants of $X$ , for which one would get the same result on the $\MMSE$ conjecture as stated in  Theorem \ref{T2}:
	\begin{equation}\tag{$\natural$}
		\forall m \geq 3, \quad K_{2m}(X)a_{2m}(K_2(X), \ldots,K_{2m-1}(X)) < 0
	\end{equation}
\end{remark}







\section{Inductive formula for $R_n$}\label{Combi}

In this section, we prove Theorem \ref{Main4}. We start with some notations: we denote by $T_\alpha$, where $\alpha$ is an element of $\N^d$ for some positive integer $d$, the monomial $T_{\alpha_1} \cdots T_{\alpha_d}$. For integers $a$ and $b$, we also denote by $\llbracket a,b \rrbracket$ the set of integers between $a$ and $b$. 

In order to go further into the study of $R_n$, we give a little more details on the context (see \cite{Led} and \cite{LedG} for more). Theorem \ref{entr-deri} in \cite{Led} is stated as 
\begin{theorem}[\cite{Led}, Theorem 1] \label{thom1led}
    Let $X$ be a random variable admitting moments of every order. Then 
     \[\partial_t^k H(X,t) = (-2)^{k-1} \int_\R f_t(x) \tilde{\Gamma}_{k}(v_t)(x) \, \dint x\]
    where $\tilde{\Gamma}_k$ is a multilinear form, called iterated gradient.
\end{theorem}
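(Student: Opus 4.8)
The plan is to prove Theorem~\ref{thom1led} by induction on $k$, the engine being the invariance under $\partial_t$ of a well-chosen class of functionals of the score. Write $v_t = \log f_t$; since $f_t$ is the density of $X + \sqrt{2t}\,N$ it is smooth and strictly positive on $\{t>0\}$, and $f_t' = f_t v_t'$. Combining this with the heat equation $\partial_t f_t = f_t''$ gives $f_t'' = (f_t v_t')' = f_t\big((v_t')^2 + v_t''\big)$, hence the Hopf--Cole equation for the score and its spatial derivatives,
\[\partial_t v_t = v_t'' + (v_t')^2, \qquad \partial_t v_t^{(k)} = v_t^{(k+2)} + \sum_{i=0}^{k} \binom{k}{i}\, v_t^{(i+1)} v_t^{(k-i+1)} .\]
The base case $k=1$ is De~Bruijn's identity: differentiating $H(X,t) = -\int_\R f_t v_t\,\dint x$, using the heat equation, and integrating by parts gives $\partial_t H(X,t) = \int_\R f_t (v_t')^2\,\dint x = \int_\R f_t\,\tilde\Gamma_1(v_t)\,\dint x$, with $\tilde\Gamma_1(v) = (v')^2 = \Gamma(v,v)$ for the carr\'e du champ $\Gamma$ of the generator $L\varphi = \varphi''$.

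For the inductive step I would consider functionals $J(t) = \int_\R f_t(x)\, P\big(v_t'(x),\dots,v_t^{(m)}(x)\big)\,\dint x = \E\big[P\big(v_t'(X_t),\dots,v_t^{(m)}(X_t)\big)\big]$, $P$ a polynomial, and show this class is stable under $\partial_t$. Differentiating under the integral sign, using $\partial_t f_t = f_t\big((v_t')^2 + v_t''\big)$ in the first term and the spatial Hopf--Cole identity for $\partial_t v_t^{(j)}$ in the second, one gets $\partial_t J = \int_\R f_t\, Q\,\dint x$ with $Q$ a differential polynomial in $v_t$ of order $m+2$. The two surplus derivatives are then removed by repeated integration by parts, each step substituting $f_t' = f_t v_t'$: this lowers the top order while producing only further polynomials in the $v_t^{(j)}$. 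Tracking the top monomial $T_m^2$ of $P$, the term $2\,v_t^{(m)} v_t^{(m+2)}$ appears, and one integration by parts converts it into $-2\,(v_t^{(m+1)})^2$ plus lower order; the remaining contributions, of order $\le m+1$, are pushed down by further integrations by parts exploiting identities such as $\partial_x\big((v_t')^3\big) = 3(v_t')^2 v_t''$ and the elimination of isolated factors $v_t' = f_t'/f_t$, leaving exactly a leading square $(v_t^{(m+1)})^2$ with coefficient $-2$ and a polynomial in the $v_t^{(j)}$, $2\le j\le m+1$. Defining $\tilde\Gamma_{k+1}$ as $-\tfrac12$ times the polynomial obtained from $\tilde\Gamma_k(v_t)$ by this operation yields the recursion $\partial_t\!\big(\int_\R f_t\,\tilde\Gamma_k(v_t)\,\dint x\big) = -2\int_\R f_t\,\tilde\Gamma_{k+1}(v_t)\,\dint x$, and iterating from $k=1$ gives the claimed formula with constant $(-2)^{k-1}$. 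That $\tilde\Gamma_k$ is a \emph{multilinear iterated gradient} follows by recognising the map $\tilde\Gamma_k \mapsto \tilde\Gamma_{k+1}$ as the iteration, in the sense of \cite{LedG}, of $\Gamma$ and $L$ acting on $v$: in this flat one-dimensional setting every differential polynomial the recursion produces is a fixed universal combination of iterated gradients of $v$.

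The remaining work is analytic rather than algebraic. One must know that $t\mapsto H(X,t)$ is $C^\infty$ on $(0,\infty)$ (proved in \cite{Ver}); that for each $t>0$ the functions $v_t^{(j)}$ grow at most polynomially while $f_t$ has Gaussian-type tails, so that the expectations $\E[P(v_t'(X_t),\dots)]$ are finite and every boundary term in the integrations by parts vanishes; and that differentiation under the integral sign is legitimate, uniformly for $t$ in compact subsets of $(0,\infty)$. All of this follows from the representation $f_t = f_0 * g_{2t}$ with $g$ Gaussian, together with the standing hypothesis that $X$ has moments of every order, via routine convolution estimates. The step I expect to be the real obstacle is not the lengthy but mechanical cancellation bookkeeping that isolates the factor $(-2)^{k-1}$, but pinning down cleanly that the recursion defines a genuine iterated gradient: one should fix the recursive definition of $\tilde\Gamma_k$ once and for all as in \cite{LedG} and check that the $\partial_t$-operation reproduces it, cross-checking against the explicit low-order formulas for $\partial_t^n H$, $n\le 5$, recorded in \cite{Ver}, and noting that $\tilde\Gamma_k(v_t) = v_t^{(k)\,2} + R_k(v_t^{(2)},\dots,v_t^{(k-1)})$ recovers precisely the form of Theorem~\ref{entr-deri}.
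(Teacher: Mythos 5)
This statement is not proved in the paper at all: it is Theorem 1 of \cite{Led}, recalled verbatim (together with the bracket definition $\tilde\Gamma_{n+1}=[\Delta+\Gamma,\tilde\Gamma_n]$, $\tilde\Gamma_1=\Gamma$), so what you have written is necessarily an attempted reconstruction of Ledoux's $\Gamma$-calculus argument rather than something to be checked against a proof in this text. Your skeleton is the right one: De Bruijn's identity as the base case, the Hopf--Cole equation $\partial_t v_t = v_t'' + (v_t')^2$ and its spatial derivatives, differentiation under the integral, and integration by parts against $f_t\,\dint x$ to produce a recursion $\partial_t\int f_t\,\tilde\Gamma_k(v_t)\,\dint x = -2\int f_t\,\tilde\Gamma_{k+1}(v_t)\,\dint x$, whence the constant $(-2)^{k-1}$.

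The genuine gap is exactly the step you yourself flag as ``the real obstacle'': you \emph{define} $\tilde\Gamma_{k+1}$ as $-\tfrac12$ times whatever polynomial your integration-by-parts bookkeeping produces, and then assert, without argument, that this coincides with the iterated gradient $[\Delta+\Gamma,\tilde\Gamma_k]$ of \cite{LedG} because ``in this flat one-dimensional setting every differential polynomial the recursion produces is a fixed universal combination of iterated gradients of $v$.'' That assertion is unproven and is precisely the content of the theorem: without it you only obtain that $\partial_t^k H$ equals $(-2)^{k-1}$ times the integral of \emph{some} polynomial in $v_t^{(2)},\dots,v_t^{(k)}$ (essentially the weaker form of Theorem \ref{entr-deri}), not the identification with the Lie-bracket-defined $\tilde\Gamma_k$ on which the rest of the paper relies (the proof of Lemma \ref{inductionR} manipulates $[\Delta+\Gamma,\tilde\Gamma_n]$ directly). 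There is also a well-definedness issue hiding in your construction: integration by parts determines your candidate $\tilde\Gamma_{k+1}$ only up to terms of the form $\mathcal{L}_t g=\Delta g+\Gamma(v_t,g)$, which integrate to zero against the invariant measure $f_t\,\dint x$, so a pointwise identity with the bracket cannot be read off from your manipulations; the clean route (Ledoux's) is to compute $\partial_t\int f_t\,\tilde\Gamma_k(v_t)\,\dint x$ directly, split off an exact $\mathcal{L}_t$-term using the invariance of $f_t\,\dint x$, and recognise the remainder as $-2[\Delta+\Gamma,\tilde\Gamma_k](v_t,\dots,v_t)$ straight from the definition of the bracket (the symmetrisation over $\mathfrak{S}_N$ being trivial since all arguments equal $v_t$). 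Until that identification is carried out, the proof is an outline of the correct strategy, not a proof of the stated theorem.
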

For every every smooth function
$u:\mathbb{R} \longrightarrow \mathbb{R}$, one defines the iterated
gradients $(\tilde{\Gamma}_n)_{n\geq 1}$ recursively by
$\tilde{\Gamma}_{n+1} = [ \Delta + \Gamma, \tilde{\Gamma}_n ]$ where
$\tilde{\Gamma}_1 = \Gamma$ with $\Gamma(f,g) = f'g'$, $\Delta$ is the
Laplacian operator and $[.,.]$ is a Lie bracket defined in
\cite[Section 3]{LedG} as follows: if $A$ is a $m$-multilinear
symmetric map and $B$ is a $k$-multilinear symmetric map on $\R$, then
for every $x \in \R^{m+k-1}$
\begin{equation*}
\begin{aligned}
2[A,B](x) = &   \frac{1}{(N)!} \sum_{\sigma \in \mathfrak{S}_{N}} \left ( mA(B(x_{\sigma(1)},\ldots, x_{\sigma(k)}), x_{\sigma(k+1)}, \ldots, x_{\sigma(N)}) \right.  \\ & \quad \quad \quad \quad \quad -  \left. kB(A(x_{\sigma(1)},\ldots, x_{\sigma(m)}), x_{\sigma(m+1)}, \ldots, x_{\sigma(N)}) \right )
\end{aligned}
\end{equation*}
where $N = m+k-1$, $x = (x_1,\ldots,x_N)$ and  $\mathfrak{S}_N$ is the group of permutations of $\llbracket 1,N \rrbracket$.

Then, in \cite{Led}, it is showed that
$$\tilde{\Gamma}_{n}(u) = \tilde{R}_{n}( u^{(2)},\dots,u^{(n)})$$ where
$\tilde{R}_n \in \mathbb{R}[T_2,\ldots,T_n]$ is a polynomial that
decomposes as $\tilde{R}_n = T_n^2 + R_n$ where $R_n$ belongs to
$\mathbb{R}[T_2,\ldots,T_{n-1}]$. This whole process will be helpful
for Lemma \ref{inductionR} below.

\begin{definition}
    Let $n$ be a positive integer greater than $1$. A partition of $2n$ is a tuple $\alpha = (\alpha_1,\ldots, \alpha_r)$ of integers between $2$ and $n-1$, included, such that $\alpha_1 \geq \ldots \geq \alpha_r$ and $\alpha_1 + \cdots + \alpha_r = 2n$. The integer $r$ is called the length of the partition $\alpha$ and is denoted by $\# \alpha$.

    We  denote by $I_{r,n}$ the set of partitions of $2n$ of length $r$ and $I_n$ the set of partitions of $2n$ of lengths between $3$ and $n$, included.
\end{definition}

The polynomials $(R_n)$ have a specific form, given in the following result from \cite{Led}.

\begin{proposition}[\cite{Led}]\label{FormRn}
For every integer $n$ greater than $1$, the polynomial $R_n \in \mathbb{R}[T_2,\ldots,T_{n-1}]$  from \eqref{formule-derivee} is of the form  
\begin{equation}\label{RN}
R_n =  \underset{\alpha \in I_{n}}{\sum} c_{\alpha}^{(n)} T_{\alpha}
\end{equation}
where $c_{\alpha}^{(n)} \in \mathbb{R}$. We use the convention $R_2 = 0$.
\end{proposition}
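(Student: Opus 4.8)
The plan is to extract the structural claim from the definition of the iterated gradients and to track three pieces of data through the recursion $\tilde\Gamma_{n+1}=[\Delta+\Gamma,\tilde\Gamma_n]$: the degree-in-space-derivatives bookkeeping that produces a monomial $T_\alpha=T_{\alpha_1}\cdots T_{\alpha_r}$, the total weight $\alpha_1+\cdots+\alpha_r$, and the length $r=\#\alpha$. First I would recall from \cite{Led} that $\tilde\Gamma_n(u)=\tilde R_n(u^{(2)},\dots,u^{(n)})$ with $\tilde R_n=T_n^2+R_n$ and $R_n\in\mathbb R[T_2,\dots,T_{n-1}]$; this already tells us that every monomial occurring in $R_n$ involves only $T_2,\dots,T_{n-1}$. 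It remains to pin down the weight and the length of the admissible monomials, i.e.\ to show that each monomial $T_\alpha$ appearing in $R_n$ satisfies $\alpha_1+\cdots+\alpha_r=2n$ and $3\le r\le n$, and $2\le\alpha_i\le n-1$.

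The key mechanism is a \emph{weight-homogeneity} property. I would assign to $T_k$ the weight $k$ and observe, by induction on $n$, that $\tilde\Gamma_n(u)$ is weight-homogeneous of weight $2n$ when $u^{(k)}$ is given weight $k$: indeed $\tilde\Gamma_1=\Gamma$, $\Gamma(f,g)=f'g'$, applied to $u$ gives $(u')^2$, which has weight $2=2\cdot 1$ once we note that the relevant variables start at $u''$ — more precisely, after the change of variables used in \cite{Led} (passing to $v_t$ and its derivatives) the lowest surviving term is $u^{(2)}$, and one checks the base case $\tilde\Gamma_2$ directly. The Lie bracket $[A,B]$ on an $m$-linear and a $k$-linear form produces an $(m+k-1)$-linear form, and composing with $\Delta$ (which differentiates twice, adding $2$ to the derivative count) exactly balances the drop by one argument: substituting $u$ into $[\Delta+\Gamma,\tilde\Gamma_n]$ and using that $\tilde\Gamma_n(u)$ has weight $2n$, a term-by-term inspection of the bracket formula shows $\tilde\Gamma_{n+1}(u)$ has weight $2(n+1)$. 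Weight-homogeneity of $\tilde R_n$ forces $\alpha_1+\cdots+\alpha_r=2n$ for every monomial $T_\alpha$; combined with $\alpha_i\le n-1$ (for the $R_n$ part, by the decomposition $\tilde R_n=T_n^2+R_n$) and $\alpha_i\ge 2$, the constraint $\sum\alpha_i=2n$ immediately gives $r\ge 3$ (two terms each $\le n-1$ sum to at most $2n-2<2n$) and $r\le n$ (since each $\alpha_i\ge 2$). Hence each contributing monomial lies in $I_n$, which is exactly \eqref{RN}, and the convention $R_2=0$ matches $\tilde\Gamma_2(u)=(u'')^2$, i.e.\ $\tilde R_2=T_2^2$.

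The main obstacle is making the weight-homogeneity induction rigorous through the Lie bracket: one has to verify that in
\[
2[A,B](x)=\frac{1}{N!}\sum_{\sigma\in\mathfrak S_N}\bigl(mA(B(x_{\sigma(1)},\dots,x_{\sigma(k)}),x_{\sigma(k+1)},\dots,x_{\sigma(N)})-kB(A(\dots),\dots)\bigr),
\]
with $N=m+k-1$, each surviving term after substitution $x_i\mapsto u$ and after applying $\Delta+\Gamma$ carries exactly the right number of spatial derivatives, i.e.\ that differentiating an $m$-linear form evaluated on copies of $u$ and then forming the bracket shifts the total derivative order in a controlled way. This requires carefully tracking how $\tilde\Gamma_n(u)$, as a polynomial in $u^{(2)},\dots,u^{(n)}$, transforms under $[\Delta+\Gamma,\cdot]$; concretely one shows that each application either raises some index $T_j\to T_{j+1}$ or splits the derivatives across a product, always conserving the total weight and increasing the maximal index by at most one (which is what keeps $R_{n+1}$ inside $\mathbb R[T_2,\dots,T_n]$). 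Once this homogeneity lemma is in hand, the passage from "weight $2n$, indices in $\{2,\dots,n-1\}$" to "$\alpha\in I_n$" is the elementary counting argument sketched above, so I expect essentially all the work to be in the homogeneity step.
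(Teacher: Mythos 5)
Your argument is sound, but note that the paper does not actually prove Proposition~\ref{FormRn}: it is imported wholesale from \cite{Led}, so there is no in-paper proof to match. Your two ingredients are exactly the right ones. The weight-homogeneity claim (each monomial $T_{\alpha_1}\cdots T_{\alpha_r}$ of $\tilde R_n$ has $\alpha_1+\cdots+\alpha_r=2n$ when $T_k$ carries weight $k$) is correct --- one can check it on $R_3,\dots,R_6$ listed in the introduction --- and your counting step is airtight: given $2\le\alpha_i\le n-1$ and $\sum_i\alpha_i=2n$, one gets $r\ge 2n/(n-1)>2$ and $r\le n$, hence $\alpha\in I_n$. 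Where you make life harder than necessary is in proving homogeneity directly through the symmetrized Lie bracket; that is doable but requires the careful bookkeeping you yourself flag as the main obstacle, and your base-case discussion of $\tilde\Gamma_1(u)=(u')^2$ versus the disappearance of $T_1$ is the shakiest point (the absence of $T_1$ in $\tilde R_n$ for $n\ge2$ is itself a fact you are importing from \cite{Led}, so you should say so explicitly rather than gesture at a change of variables). A cleaner route using only what this paper already sets up is to induct via Lemma~\ref{inductionR}: $R_{n+1}=A_n+L(R_n)+H(R_n)$, where $A_n=-\sum_k\binom{n}{k}T_{1+k}T_{1+n-k}T_n$ manifestly has weight $(1+k)+(1+n-k)+n=2(n+1)$, length $3$, and indices in $\llbracket 2,n\rrbracket$; $L$ preserves length and raises total weight by exactly $2$ (two indices each bumped by one); and $H$ replaces one factor $T_{\alpha_k}$ by $T_{1+l}T_{1+\alpha_k-l}$, again adding $2$ to the weight, one to the length, and keeping all indices in $\llbracket 2,n\rrbracket$. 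This makes the homogeneity step a two-line verification and the length and index bounds fall out of the same induction, after which your counting argument finishes the proof.
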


By making the first term of each partition explicit in \eqref{RN}, we  get
\begin{equation*}
R_{n+1} = \sum_{k=0}^{n-2} a_{n,k}T_{n-k}
\end{equation*}
where $a_{n,k}$ is a polynomial in $\R[T_2, \ldots,T_{n-k}]$. Since $(n,n,2)$ is the only element of $I_n$ with length $3$ and first twos terms equal to $n$, we can write
\begin{equation*}
R_{n+1} = c^{(n+1)}_{n,n,2}T_2T_n^2 + a_nT_n + \sum_{k=1}^{n-2} a_{n,k}T_{n-k}
\end{equation*}
where $a_n := a_{n,0}-c^{(n+1)}_{n,n,2}T_2T_n$ belongs to $\R [T_2, \ldots, T_{n-1}]$.

In order to prove Theorem \ref{Main4}, we need to further the description of $R_n$, in particular, be able to say something about its coefficients. To that end, we  consider the following lemma, obtained by reformulating the proof of Proposition 9 in \cite{Led}.
\begin{lemma}[\cite{Led}, {Proposition 9}]\label{inductionR}
Let $n$ be an integer greater than $2$. Then  
\begin{equation}
R_{n+1} = A_n + L(R_n) + H(R_{n})
\label{RecFormula}
\end{equation} 
where $A_n = -{{\sum}_{k=1}^{n-1}}  \binom{n}{k} T_{1+k}T_{1+n-k}T_{n}$ (starting with $R_{2} = 0$), and $L$, $H$ are linear operators on multivariate polynomials, defined below in Subsection \ref{def-HL}. 
\end{lemma}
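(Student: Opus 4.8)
The goal is to establish the recursion \eqref{RecFormula} for $R_{n+1}$ by tracking how the iterated-gradient construction $\tilde\Gamma_{n+1}=[\Delta+\Gamma,\tilde\Gamma_n]$ acts at the level of the representing polynomials. The plan is to start from the identity $\tilde\Gamma_n(u)=\tilde R_n(u^{(2)},\dots,u^{(n)})$ with $\tilde R_n=T_n^2+R_n$ recalled from \cite{Led}, and to compute $[\Delta+\Gamma,\tilde\Gamma_n]$ explicitly when $\tilde\Gamma_n$ is a composition of the form $u\mapsto P(u^{(2)},\dots,u^{(n)})$ for a polynomial $P$. Concretely, for such a functional one has the two contributions: the commutator with the Laplacian $\Delta$ produces, via the chain rule and the heat-type cancellations, a purely linear-in-$P$ operator on polynomials that raises the ``weight'' by one (this will be the operator $L$); and the commutator with the carr\'e-du-champ $\Gamma(f,g)=f'g'$ produces a second linear operator coming from how $\Gamma$ pairs first derivatives of the arguments of $P$ (this will be the operator $H$). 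The term $A_n$ arises as the single genuinely new monomial contribution, namely the cross-terms $T_{1+k}T_{1+n-k}T_n$ that appear when differentiating the square $T_n^2$ inside $\tilde R_n$ and pairing with the highest-order argument; the binomial weight $\binom{n}{k}$ and the overall sign $-1$ are exactly the combinatorial cost of symmetrizing over the $\mathfrak S_N$-average in the definition of the Lie bracket $[A,B]$.

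\textbf{Key steps, in order.} First I would fix notation for the action of $[\Delta+\Gamma,\cdot]$ on a functional of the form $u\mapsto T_\alpha(u):=u^{(\alpha_1)}\cdots u^{(\alpha_r)}$ and record the Leibniz-type formula for how $\Delta$ and $\Gamma$ interact with such a product; this reduces everything to a monomial-by-monomial computation, which is legitimate since $R_n$ is a finite linear combination of monomials $T_\alpha$ (Proposition \ref{FormRn}). Second, I would isolate the part of $[\Delta+\Gamma,\tilde\Gamma_n]$ that is linear in the coefficients of $\tilde\Gamma_n$ and does not involve the distinguished square term, packaging it as $L(\cdot)+H(\cdot)$; the split between $L$ and $H$ is dictated by whether the new derivative index comes from differentiating one factor (the $\Delta$-side, operator $L$) or from a $\Gamma$-pairing of two factors (the $\Gamma$-side, operator $H$). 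Third, I would compute separately the contribution of the leading square $T_n^2$ inside $\tilde\Gamma_n$: its bracket against $\Delta+\Gamma$ yields both a new square $T_{n+1}^2$ — which is precisely the square term split off in the passage from $\tilde R_{n+1}$ to $R_{n+1}$ and hence does not appear in \eqref{RecFormula} — and the remainder $A_n$; here I would carefully carry out the $\mathfrak S_N$-averaging to pin down that the coefficient of $T_{1+k}T_{1+n-k}T_n$ is $-\binom{n}{k}$. Finally, I would assemble the pieces and check the base case $R_2=0$, $R_3=-2T_2^3$ against the formula, and confirm that the definitions of $L$ and $H$ to be given in Subsection \ref{def-HL} are exactly those produced by this computation.

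\textbf{Main obstacle.} The delicate point is the bookkeeping in the symmetrized Lie bracket: the definition of $[A,B]$ averages over all permutations in $\mathfrak S_N$ with $N=m+k-1$, and turning this symmetrization into clean binomial coefficients on the polynomial side — while keeping track of the multiplicities coming from repeated indices in a partition $\alpha$ and of the signs from the $mA(B(\cdot),\dots)-kB(A(\cdot),\dots)$ asymmetry — is where errors creep in. I expect the bulk of the work to be verifying that all these combinatorial factors collapse exactly to the stated $-\sum_{k=1}^{n-1}\binom{n}{k}T_{1+k}T_{1+n-k}T_n$ for $A_n$ and to well-defined linear operators $L,H$ independent of $n$ in their action on monomials. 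Everything else — the chain-rule computations, the reduction to monomials, the base case — is routine; the combinatorial identification of the bracket's output is the crux, and it is essentially a careful rereading and reorganization of the proof of Proposition 9 in \cite{Led}.
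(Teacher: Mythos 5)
Your proposal is correct and follows essentially the same route as the paper: decompose $\tilde\Gamma_n$ into the leading square $\Gamma_n(u)=(u^{(n)})^2$ plus the part represented by $R_n$, observe that $[\Delta,\Gamma_n]$ yields the new square $T_{n+1}^2$, that $[\Gamma,\Gamma_n]$ yields $A_n$ via Leibniz on $((u')^2)^{(n)}$, and that $[\Delta,\cdot]$ and $[\Gamma,\cdot]$ acting on the monomials of $R_n$ define $L$ and $H$ respectively. The only remark is that the $\mathfrak S_N$-symmetrization you flag as the main obstacle trivializes here because everything is evaluated on the diagonal (all arguments equal to $u$), so the binomial coefficients in $A_n$ come directly from the one-dimensional Leibniz rule rather than from a delicate permutation count.
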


Lemma \ref{inductionR} indicates relations between the coefficients of $R_{n+1}$ and the coefficients of $A_n$, $L(R_n)$ and $H(R_n)$. By analysing these relations, one can hope to deduce inductions formulas for the coefficients of $R_n$, by drawing a (affine) connection between the coefficients of $R_{n+1}$ and $R_n$. In order to do so, we  have to know how the operators $L$ and $H$ previously mentioned, act on $R_n$. The idea is: for a monomial $M = T_{\alpha}$, the coefficient of $M$ in $R_{n+1}$ is the sum of the coefficients of $M$ in respectively $A_n$, $L(R_n)$ and $H(R_n)$. Hence, we have to know more about the coefficients of $L(R_n)$ and $H(R_n)$, and since $H$ and $L$ are linear operators, it suffices to know how $L$ and $H$ act on monomials, which are a basis of the multivariate polynomials.
\begin{remark}
    Identity  \eqref{RecFormula} allows to write an efficient algorithm from which we can compute in reasonable time the polynomials $R_n$ for moderate values of $n$, see  \cite{AlgoG} and Section \ref{appen-pol} in the appendix. 
\end{remark}

\subsection{Two operators on polynomials}\label{def-HL}
As anticipated, in this subsection we  define  the two linear
operators $H$ and $L$ on the vector space of multivariate polynomials
$E$ which admits as basis the monomials $T_{\alpha_1}\cdots
T_{\alpha_j}$ where $j\in \mathbb{N}$ and $\alpha_1,\ldots, \alpha_j$
are positive integers. Thus, to define a linear operator $J$ on $E$,
we only need to give the image of the monomials $T_{\alpha_1}\cdots
T_{\alpha_j}$ by $J$. If $J$ is a linear operator and $m$ is a
positive integer, we denote by $J^m$ the composition $J \circ
\cdots \circ J$ where $J$ appears $m$ times. 
\begin{definition}
  Let $\alpha$ be an element of $\N^d$ with $d$ a positive integer,
  and let $P$ be a multivariate polynomial. We denote by
  $\varphi_\alpha (P)$ the coefficient of the monomial $T_\alpha$ in
  $P$. For a positive integer $m$, we also denote by
  $\varphi_{T_m}(P)$ the (unique) polynomial $Q$ such that
\[P = QT_m + R\]
where $R$ does not depend on $T_m$.
\end{definition}
Before giving the definitions of $L$ and $H$, we define two auxiliary
operators on polynomials. Although at this stage it may not be very
clear why these preliminary operators are needed, we will see that
they come in handy to compute the coefficients with an algorithm (see
\cite{AlgoG}).
\begin{definition}
We introduce the following linear operators on monomials: 
\begin{align*}
    D_1(T_{\alpha_1}\cdots T_{\alpha_r}) & =   \sum_{k=1}^r T_{\alpha_1}\cdots T_{\alpha_k+1}\cdots T_{\alpha_r}\\ 
    D_2(T_{\alpha_1}\cdots T_{\alpha_r}) & =   \sum_{k=1}^r T_{\alpha_1}\cdots T_{\alpha_k+2}\cdots T_{\alpha_r} 
\end{align*}
for all $r$-uplets $(\alpha_1,\ldots, \alpha_r)$, for all $r \in \mathbb{N}^*$.
\end{definition}
We can now define $L$ and $H$. 
\begin{definition}\label{DefLH}
For all $r$-uplets $(\alpha_1,\ldots, \alpha_r)$, for all $r \in \mathbb{N}^*$, we have 
\begin{equation*}
\begin{aligned}
    H(T_{\alpha_1}\cdots T_{\alpha_r}) & = T_1D_1(T_{\alpha_1}\cdots T_{\alpha_r}) -  \frac{1}{2} \sum_{k=1}^r T_{\alpha_1}\cdots D_1^{\alpha_k}(T_1^2)\cdots T_{\alpha_r} \\ 
    &  =   \ -\frac{1}{2}\sum_{k=1}^r \sum_{l=1}^{\alpha_k-1} \begin{pmatrix} \alpha_k \\ l \end{pmatrix}T_{1+l}T_{1+\alpha_k-l} \prod_{\underset{i \neq k}{i=1}}^{r} T_{\alpha_i}
\end{aligned}
\end{equation*}
 and 
\[L = \frac{1}{2} \left ( D_1^2-D_2 \right ).\]

\end{definition}
\begin{remark}\label{ExpL}
    Note how if $\alpha_1, \dots, \alpha_r$ are positive integers then  \[L(T_{\alpha_1} \cdots T_{\alpha_r}) =   \underset{1 \leq i < j \leq r}{\sum} T_{\alpha_1} \cdots T_{\alpha_i +1} \cdots T_{\alpha_j+1} \cdots T_{\alpha_r}.\]  
    \end{remark}

In order to ease some of the notations, we introduce the next
definition. 

\begin{definition}
  Let $\alpha$ and $\beta$ be tuples of integers. We denotes by
  $l_{\beta,\alpha}$ the quantity $\varphi_\alpha(L(T_\beta))$ and
  denotes by $h_{\beta,\alpha}$ the quantity
  $\varphi_\alpha(H(T_\beta))$.
\end{definition}
\begin{remark}\label{signLH}
    From Definition \ref{DefLH} and Remark \ref{ExpL}, it follows that $l_{\beta,\alpha}$ is non-negative and that $h_{\beta,\alpha}$ is non-positive, for all tuples $\alpha$ and $\beta$.
\end{remark}

\begin{proof}[Proof of Lemma \ref{inductionR}]
Let $u:\R \rightarrow \R$ be a smooth function.  In the proof of Proposition $9$ in \cite{Led}, we have 
\begin{equation*}
\begin{aligned}
\tilde{\Gamma}_{n+1}(u) & = [\Delta + \Gamma, \tilde{\Gamma}_n](u) = [\Delta + \Gamma, \Gamma_n + V_n](u) \\ 
& = [\Delta, \Gamma_n](u) + [\Gamma , \Gamma_n](u) + [\Delta ,V_n](u) + [\Gamma, V_n](u)
\end{aligned}
\end{equation*}
The first term $[\Delta, \Gamma_n](u)$ is just $\Gamma_{n+1}(u) = (u^{(n+1)})^2$ (see $(3.3)$ in \cite{Led}), hence 
\[R_{n+1}(u^{(2)}, \ldots, u^{(n)}) = [\Gamma , \Gamma_n](u) + [\Delta ,V_n](u) + [\Gamma, V_n](u)\]
The second term writes
\begin{align*}
 [\Gamma , \Gamma_n](u) & = u'\left ( u^{(n)^2} \right )' - \left ( (u')^2 \right ) ^{(n)}u^{(n)} \\
& = -  \sum_{k=1}^{n-1} \binom{n}{k} u^{(1+k)}u^{(1+n-k)}u^{(n)} = A_n(u^{(2)},\ldots, u^{(n)}),
 \end{align*}
Now, for the third and fourth terms, one writes the expressions of the quantities $[\Delta, B](u)$ and $[\Gamma, B](u)$, where $B(u) = u^{(\alpha_1)}\cdots u^{(\alpha_j)}$ with $\alpha_1, \ldots, \alpha_j$ positive integers between $1$ and $n+1$. we get 
\begin{equation*}
\begin{aligned}
 2[\Delta, B](u) & = \left ( u^{(\alpha_1)}\cdots u^{(\alpha_j)} \right )'' -   \sum_{i=1}^j u^{(\alpha_1)} \cdots u^{\alpha_i+2} \cdots u^{(\alpha_j)} \\
 & = 2L(T_{\alpha_1} \cdots T_{\alpha_j})(u^{(2)},\ldots, u^{(n)})
 \end{aligned}
 \end{equation*}
 Since $[.,.]$ is bilinear, $L$ is linear and $V_n$ is a linear combination of such multilinear forms $B$, it follows that  
 \[[\Delta, V_n](u) = L(R_n)(u^{(2)},\ldots, u^{(n)}).\]
  Similarly 
  \[[\Gamma, V_n](u) = H(R_n)(u^{(2)},\ldots, u^{(n)}),\]
and the conclusion follows. 
 \end{proof}
 The relation between the coefficient of the monomials in $R_{n+1}$
 with variable $T_n$ can now be described in terms of the monomials in
 $A_n$, $L(R_n)$ and $H(R_n)$.

\subsection{Algorithmic scheme of the computations of coefficients.}\label{AlgoScheme}
In this subsection, we explain our algorithmic approach to the
computation of  the coefficients of $R_{n+1}$ using \eqref{inductionR}. Let $n$ be a positive integer and $\alpha$ an element of $I_{n}$. One wants to compute the coefficient $c^{(n+1)}_\alpha$. There are three mains steps. 

\begin{enumerate}
\item[ \underline{Step 1}] Describe $c_\alpha^{(n+1)}$ in terms of
coefficients of $R_n$. Since $H$ and $L$ are linear operators, we get 
(see Lemma \ref{relcoefRn})
\begin{equation} \label{eqCoef}
    \begin{aligned}
        c^{(n+1)}_\alpha = \varphi_\alpha(A_n) + \sum_{\beta \in B_\alpha^{(n)}} \left ( l_{\beta,\alpha} + h_{\beta,\alpha} \right )c^{(n)}_\beta
    \end{aligned}
\end{equation}
where $B_\alpha^{(n)}$ is the set whose elements are in $I_{n-1}$ such
that $l_{\beta,\alpha} \neq 0$ or $h_{\beta,\alpha} \neq 0$. The
relation \eqref{eqCoef} reads like this : the coefficient
$c^{(n+1)}_{\alpha}$ is obtained as a contribution from $A_n$ plus a
contribution of both $H$ and $L$ that map monomials $T_\beta$ in $R_n$
into polynomials whose coefficients in front of the monomial
$T_\alpha$ is non-zero.
\item[\underline{Step 2}] 
 Determine the set $B_{\alpha}^{(n)}$ of
multi-indices $\beta$ for which the coefficient of the monomial
$T_\alpha$ in either $L(T_\beta)$ or $H(T_\beta)$ is non-zero, and
determine the coefficient $l_{\beta,\alpha} + h_{\beta,\alpha}$
associated.
\item[\underline{Step 3}] From equality \eqref{eqCoef} and the
  information obtained in Step 2, we get an induction-type relation
  between $c_\alpha^{(n+1)}$ and the coefficients $c_\beta^{(n)}$. We  can
  then determine $c^{(n+1)}_\alpha$ with induction techniques.
\end{enumerate}
\subsection{Operations on monomials.}\label{opmonom} In this
subsection, we investigate  how $L$ and $H$ act on monomials. The
main question we are interested in is: 

\medskip

\noindent \textbf{Question:} if $M$ is a monomial, then what are the
monomials - with non-zero coefficient - of the polynomials $L(M)$ and
$H(M)$ ?

\medskip

More precisely, for $\alpha$ in $I_n$, if $T_\alpha$ is a monomial
with non-zero coefficient in either $L(T_\beta)$ or $H(T_\beta)$, for
some tuple $\beta$, what can we say about $\beta$. The goal is to
determine the relations between $\alpha$ and those $\beta$ in order to
properly proceed to an induction proof to determine the coefficient
$c_\alpha^{(n+1)}$.

We start with a definition.

\begin{definition}
The degree $\deg(M)$ of a monomial $M$ is its total degree, meaning that if we can write $M$ as a product of the form 
$cT_{\alpha_1}^{m_1} \cdots T_{\alpha_r}^{m_r}$
where $c$ is a non-zero real number, and $\alpha_1, \ldots, \alpha_r$ and $m_1,\ldots,m_r$ are positive integers, then  $\deg(M) = m_1+\dots+m_r$. 
The  degree $\deg(P)$ of a polynomial $P$ is  the maximum of the degrees of monomials $M$ whose coefficients in $P$ are non-zero.
\end{definition}
The first observation is the following.

\begin{lemma}\label{HLdegre}
  Let $r$ be an integer greater than $2$. The operator $H$ sends
  monomials of degree $r$ onto sums of monomials of degree $r+1$ and
  the operator $L$ sends monomials of degree $r$ onto sums of
  monomials of degree $r$.
\end{lemma}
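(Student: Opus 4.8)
The statement is an immediate consequence of the explicit formulas for $H$ and $L$ given in Definition \ref{DefLH}, so the plan is simply to read off the degree behaviour monomial by monomial and invoke linearity. Since both $H$ and $L$ are defined via their action on the monomial basis $T_{\alpha_1}\cdots T_{\alpha_r}$ of $E$, it suffices to check the claim for a single monomial $M = T_{\alpha_1}\cdots T_{\alpha_r}$ of degree $r$ (here $r \geq 2$, i.e.\ $M$ is a genuine product of $r$ factors, counted with multiplicity).

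For $L$, I would use the closed expression from Remark \ref{ExpL}, namely
\[
L(T_{\alpha_1}\cdots T_{\alpha_r}) = \sum_{1\le i<j\le r} T_{\alpha_1}\cdots T_{\alpha_i+1}\cdots T_{\alpha_j+1}\cdots T_{\alpha_r}.
\]
Each summand on the right is obtained from $M$ by incrementing two of the indices (keeping the number of factors fixed at $r$), hence is again a monomial with exactly $r$ factors, i.e.\ of degree $r$. Thus $L(M)$ is a sum of monomials of degree $r$, as claimed. (If one prefers to argue directly from $L=\tfrac12(D_1^2-D_2)$: both $D_1$ and $D_2$ preserve the number of factors, so $D_1^2$ and $D_2$ send degree-$r$ monomials to sums of degree-$r$ monomials, and so does their difference.)

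For $H$, I would use the second line of the formula in Definition \ref{DefLH},
\[
H(T_{\alpha_1}\cdots T_{\alpha_r}) = -\frac12 \sum_{k=1}^r \sum_{l=1}^{\alpha_k-1}\binom{\alpha_k}{l} T_{1+l}\,T_{1+\alpha_k-l}\prod_{\substack{i=1\\ i\neq k}}^r T_{\alpha_i}.
\]
Each summand is obtained from $M$ by deleting the factor $T_{\alpha_k}$ (removing one factor) and replacing it by the two factors $T_{1+l}T_{1+\alpha_k-l}$ (adding two factors), for a net gain of one factor; it therefore has exactly $r+1$ factors, i.e.\ degree $r+1$. Since every summand has degree $r+1$, so does the whole sum. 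One should note that the hypothesis $r\geq 2$ (equivalently, that $M$ is not the empty monomial) guarantees the outer sum is non-empty whenever some $\alpha_k\geq 2$, so this is the correct regime in which to phrase the statement; the degenerate cases are harmless and irrelevant to the sequel.

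There is no real obstacle here: the only point requiring a word of care is the bookkeeping of "number of factors counted with multiplicity" versus "number of distinct indeterminates," since after incrementing indices two previously distinct factors may coincide or a factor may merge with an existing one; but the notion of degree used in Definition (the total degree $m_1+\cdots+m_r$) is exactly the one that makes the counts above correct regardless of such collisions. Having verified the claim on the basis, the general statement follows by linearity of $H$ and $L$.
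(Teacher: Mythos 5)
Your proposal is correct and follows essentially the same route as the paper: both arguments simply read off from the explicit formulas that each summand of $L(M)$ keeps $r$ factors while each summand of $H(M)$ replaces one factor by two, hence has degree $r+1$. The extra remarks on possible coincidences of factors and on the $D_1,D_2$ formulation are harmless additions but do not change the argument.
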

\begin{proof}
Let $M$ be a monomial of degree $r$. Let $\alpha_1, \ldots, \alpha_r$ be positive integers such that $M = T_{\alpha_1} \cdots T_{\alpha_r}$. Then 
\[L(M) =   \underset{1 \leq i < j \leq r}{\sum} T_{\alpha_1} \cdots T_{\alpha_i +1} \cdots T_{\alpha_j+1} \cdots T_{\alpha_r}\]
where each monomial $T_{\alpha_1} \cdots T_{\alpha_i +1} \cdots T_{\alpha_j+1} \cdots T_{\alpha_r}$ has degree $r$. Hence $L(M)$ is a sum of monomials of degree $r$. Similarly
\[H(M) =   -\frac{1}{2}\sum_{k=1}^r \sum_{l=1}^{\alpha_k-1} \begin{pmatrix} \alpha_k \\ l \end{pmatrix}T_{1+l}T_{1+\alpha_k-l} \prod_{{i=1}, {i \neq k}}^{r} T_{\alpha_i} \]
where each monomial $T_{1+l}T_{1+\alpha_k-l} \prod_{{i=1},{i \neq k}}^{r} T_{\alpha_i}$ has degree $r+1$. Hence $H(M)$ is a sum of monomials of degree $r+1$.
\end{proof}
\begin{remark}\label{HLlength}
     Let $\alpha$, $\beta$ be two tuples. Then, using Lemma \ref{HLdegre}, we get $l_{\beta,\alpha} = 0$ whenever $\# \beta \neq \# \alpha$, and $h_{\beta,\alpha} = 0$ whenever $\# \beta \neq \# \alpha -1$.
\end{remark}

By looking carefully at how the monomials in $H(T_\beta)$ and $L(T_\beta)$ (for a $\beta$ in $I_n$) are obtained, we get the two following results.

\begin{lemma}\label{indiceH}
Let $\alpha$, $\beta$ be elements of $I_{n}$ such that $h_{\beta,\alpha} \neq 0$. Then  $\beta_1 \geq \alpha_1$.
\end{lemma}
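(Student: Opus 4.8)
The plan is to unwind the definitions of $L$ and $H$ on the monomial $T_\beta$ and track which index can become the largest in a resulting monomial $T_\alpha$. Write $\beta = (\beta_1, \ldots, \beta_r)$ with $\beta_1 \geq \cdots \geq \beta_r$; since $\alpha, \beta \in I_n$ and $h_{\beta,\alpha} \neq 0$, Remark \ref{HLlength} forces $\#\beta = \#\alpha - 1$, so $\alpha$ has exactly $r+1$ parts. Recall from Definition \ref{DefLH} that
\[
H(T_{\beta_1}\cdots T_{\beta_r}) = -\frac12 \sum_{k=1}^r \sum_{l=1}^{\beta_k-1} \binom{\beta_k}{l} T_{1+l} T_{1+\beta_k-l} \prod_{i\neq k} T_{\beta_i},
\]
so every monomial appearing in $H(T_\beta)$ is obtained by selecting one part $\beta_k$, deleting it, and inserting two new parts $1+l$ and $1+\beta_k-l$ with $1 \le l \le \beta_k-1$. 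Each new part is at most $\beta_k \le \beta_1$, and the untouched parts $\beta_i$ with $i \ne k$ are of course $\le \beta_1$. Hence every part of such a monomial is $\le \beta_1$, and therefore when its parts are sorted into decreasing order to form $\alpha$, we get $\alpha_1 \le \beta_1$.

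First I would make this precise: fix $k$ and $l$, let $\gamma$ be the multiset $\{1+l,\,1+\beta_k-l\} \cup \{\beta_i : i \ne k\}$, and observe $\max \gamma \le \beta_1$ using $1+l \le \beta_k \le \beta_1$, $1+\beta_k-l \le \beta_k \le \beta_1$, and $\beta_i \le \beta_1$ for all $i$. Since $\alpha$ is (some rearrangement of) such a $\gamma$ sorted decreasingly, $\alpha_1 = \max\gamma \le \beta_1$. Because $h_{\beta,\alpha}$ is a sum (with sign) of the coefficients $-\frac12\binom{\beta_k}{l}$ over exactly those pairs $(k,l)$ producing the multiset underlying $\alpha$, and all these contributions have the same sign (Remark \ref{signLH}), $h_{\beta,\alpha}\neq 0$ indeed guarantees that at least one such pair exists, which is all the argument needs.

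I do not anticipate a genuine obstacle here — the statement is essentially a bookkeeping consequence of the explicit formula for $H$. The only mild care needed is the sorting step: one must note that $\alpha_1$, being the largest part of $\alpha$, equals the maximum of the multiset of parts produced by the $H$-operation, so bounding each individual produced part by $\beta_1$ suffices. A secondary point worth stating explicitly is why no cancellation can make $h_{\beta,\alpha}$ vanish for a multiset that is genuinely produced: this follows because all terms $-\frac12\binom{\beta_k}{l}$ are strictly negative, so the sum over the (nonempty) set of pairs $(k,l)$ yielding $\alpha$ is strictly negative, hence nonzero; conversely if no pair yields $\alpha$ then $h_{\beta,\alpha}=0$. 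Thus $h_{\beta,\alpha}\neq 0$ is equivalent to $\alpha$ arising from the $H$-operation, and for every such $\alpha$ we have shown $\beta_1 \ge \alpha_1$.
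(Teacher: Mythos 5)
Your proof is correct and follows essentially the same route as the paper: both expand $H(T_\beta)$ via its defining formula, note that $h_{\beta,\alpha}\neq 0$ forces $T_\alpha$ to occur among the produced monomials (no cancellation since all contributions share the same sign), and bound every part of such a monomial—whether an untouched $\beta_i$ or one of the new parts $1+l$, $1+\beta_k-l$—by $\beta_1$, giving $\alpha_1\le\beta_1$.
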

\begin{proof}
Denote by $r$ the quantity $\# \beta$. Consider the equality
\begin{equation}\label{auxH}
H(T_\beta) = \sum_{k=1}^{r} \sum_{l=1}^{\beta_k-1} \binom{\beta_k}{l} T_{1+l}T_{1+\beta_k-l} \prod_{\underset{j \neq k}{j=1}}^{r} T_{\beta_j}
\end{equation}
Since $h_{\beta,\alpha} = \varphi_\alpha(H(T_\beta)) \neq 0$, at least one of the monomial on the right-hand side of \eqref{auxH} is $T_\alpha$, by noticing that the coefficients of the polynomial above are all non-zero. By identifying the indices of the monomials, we get that $\alpha_1$ is equal to one of the $\beta_j$ for $j$ between $1$ and $r$ (which yields the result) or that there exists $k$ in $\llbracket 1, r \rrbracket$ and $l$ in $\llbracket 1, \beta_k-1 \rrbracket$ such that $\alpha_1$ is equal to either $1+l$ or $1+\beta_k-l$. In the latter case, we get  $\alpha_1 \leq \beta_k \leq \beta_1$.
\end{proof}

\begin{lemma}\label{indiceL}
Let $\alpha$, $\beta$ be elements of $I_{n}$ such that $l_{\beta,\alpha} \neq 0$. Then  $\beta_1 \leq \alpha_1$.
\end{lemma}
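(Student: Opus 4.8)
The plan is to mirror the proof of Lemma \ref{indiceH}, exploiting the explicit description of $L$ on monomials given in Remark \ref{ExpL}. Write $r = \#\beta = \#\alpha$ (the equality of lengths being forced by Lemma \ref{HLdegre} and Remark \ref{HLlength}, since $l_{\beta,\alpha}\neq 0$). From Remark \ref{ExpL} we have
\[
L(T_\beta) = \sum_{1\leq i < j \leq r} T_{\beta_1}\cdots T_{\beta_i+1}\cdots T_{\beta_j+1}\cdots T_{\beta_r},
\]
and all coefficients appearing on the right-hand side are non-negative (indeed positive, once collected), so the hypothesis $l_{\beta,\alpha} = \varphi_\alpha(L(T_\beta)) \neq 0$ forces $T_\alpha$ to be one of the monomials in this sum.

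First I would fix a pair $i<j$ such that $T_\alpha = T_{\beta_1}\cdots T_{\beta_i+1}\cdots T_{\beta_j+1}\cdots T_{\beta_r}$ as multisets of indices; that is, the multiset $\{\alpha_1,\dots,\alpha_r\}$ equals the multiset obtained from $\{\beta_1,\dots,\beta_r\}$ by incrementing two of its entries (the ones in positions $i$ and $j$) by $1$. Since incrementing entries can only increase the maximum, the largest entry of the $\alpha$-multiset is at least the largest entry of the $\beta$-multiset; but $\alpha_1 = \max_k \alpha_k$ and $\beta_1 = \max_k \beta_k$ because partitions are written in non-increasing order. Hence $\alpha_1 \geq \beta_1$, which is the claim. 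The argument is genuinely the ``dual'' of the one for $H$: there, passing from $T_{\beta_k}$ to $T_{1+l}T_{1+\beta_k-l}$ splits an index into two strictly smaller ones, so the maximum can only drop, giving $\alpha_1 \leq \beta_1$; here, $L$ only increments indices, so the maximum can only grow.

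There is no real obstacle here; the only point requiring a line of care is justifying that the maximum index of $\alpha$ is attained by $\alpha_1$ and similarly for $\beta$ — this is immediate from the definition of a partition as a non-increasing tuple — and that the coefficients in the displayed expansion of $L(T_\beta)$ do not cancel, which holds since they are all of the same (positive) sign by Remark \ref{signLH}. One should also note the edge case $r < 2$: if $\#\beta \leq 1$ then $L(T_\beta) = 0$, so the hypothesis $l_{\beta,\alpha}\neq 0$ is vacuous and there is nothing to prove; thus we may assume $r \geq 2$ throughout, which is in any case guaranteed since $\alpha,\beta \in I_n$ have length at least $3$.

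\begin{proof}
Set $r = \#\beta$. By Lemma \ref{HLdegre} and Remark \ref{HLlength}, $l_{\beta,\alpha}\neq 0$ forces $\#\alpha = \#\beta = r$, and since $\alpha,\beta \in I_n$ we have $r\geq 3$. By Remark \ref{ExpL},
\[
L(T_\beta) = \underset{1\leq i < j \leq r}{\sum} T_{\beta_1}\cdots T_{\beta_i+1}\cdots T_{\beta_j+1}\cdots T_{\beta_r},
\]
and by Remark \ref{signLH} every coefficient of this polynomial is non-negative, so no cancellation occurs. Since $l_{\beta,\alpha} = \varphi_\alpha(L(T_\beta)) \neq 0$, the monomial $T_\alpha$ must coincide with one of the monomials on the right-hand side: there exist $1\leq i < j \leq r$ such that, as multisets,
\[
\{\alpha_1,\dots,\alpha_r\} = \{\beta_1,\dots,\beta_{i-1},\beta_i+1,\beta_{i+1},\dots,\beta_{j-1},\beta_j+1,\beta_{j+1},\dots,\beta_r\}.
\]
The right-hand multiset is obtained from $\{\beta_1,\dots,\beta_r\}$ by increasing two of its entries by $1$, so its maximum is at least $\max_{1\leq k\leq r}\beta_k = \beta_1$. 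But its maximum equals $\max_{1\leq k\leq r}\alpha_k = \alpha_1$, since $\alpha$ is a partition, hence written in non-increasing order. Therefore $\alpha_1 \geq \beta_1$, i.e.\ $\beta_1 \leq \alpha_1$.
\end{proof}
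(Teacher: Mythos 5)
Your proof is correct and follows essentially the same route as the paper: expand $L(T_\beta)$ as in Remark \ref{ExpL}, use the non-negativity of the coefficients to conclude that $T_\alpha$ must be one of the displayed monomials, and then observe that since $L$ only increments indices the maximal index cannot decrease, so $\alpha_1 \geq \beta_1$. The extra care about cancellation and the edge case $r<2$ is harmless but not needed beyond what the paper already does.
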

\begin{proof}
Denote by $r$ the quantity $\# \beta$. Consider the equality 
\begin{equation}\label{auxL}
L(T_\beta) = \sum_{1\leq i<j \leq r} T_{\beta_1} \cdots T_{\beta_i+1}\cdots T_{\beta_j+1} \cdots T_{\beta_r}
\end{equation}
Since $l_{\beta,\alpha} = \varphi_\alpha(L(T_\beta)) \neq 0$, at least one of the monomial $T_{\beta_1} \cdots T_{\beta_i+1}\cdots T_{\beta_j+1} \cdots T_{\beta_r}$ above on the right-hand side of \eqref{auxL} is equal to $T_\alpha$. That implies that there exists $i$ in $\llbracket 1,r \rrbracket$ such that $\alpha_i$ is equal to $\beta_1$ or $\beta_1+1$. Since $\alpha_1$ is the maximum of the $(\alpha_i)$, we get $\alpha_1\geq \beta_1$.
\end{proof}

The following lemma is just a property of calculus for $L$, used in the following section in the proof of Proposition \ref{degank}.

\begin{lemma}\label{ippL}
Let $P$ be a polynomial and $m$ be a positive integer. Then 
\[L(PT_m) = L(P)T_m+D(P)T_{m+1}\]
\end{lemma}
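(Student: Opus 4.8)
The plan is to prove the identity $L(PT_m) = L(P)T_m + D(P)T_{m+1}$ by linearity, reducing to the case where $P = T_\beta$ is a single monomial, and then applying the explicit formula for $L$ on monomials from Remark~\ref{ExpL}. Since $L$, and the operator $D$ (here $D$ should be read as $D_1$ from Definition~\ref{def-HL}), are linear operators on the space $E$ spanned by the monomials $T_{\alpha_1}\cdots T_{\alpha_r}$, and since the map $P \mapsto PT_m$ is also linear, it suffices to check the identity when $P = T_{\beta_1}\cdots T_{\beta_r}$ runs over this basis.

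So first I would fix a monomial $P = T_{\beta_1}\cdots T_{\beta_r}$ and write $PT_m = T_{\beta_1}\cdots T_{\beta_r}T_m$, a monomial of length $r+1$. Applying Remark~\ref{ExpL} to this length-$(r+1)$ monomial, $L(PT_m)$ is the sum over all pairs of distinct positions $1 \le i < j \le r+1$ of the monomial obtained from $T_{\beta_1}\cdots T_{\beta_r}T_m$ by incrementing the two chosen indices by one. The key step is to split this sum according to whether the last position (the one carrying $T_m$) is among the two chosen indices. If $j \le r$, i.e.\ both incremented positions lie among the $\beta$-indices, the contribution is exactly $\left(\sum_{1\le i<j\le r} T_{\beta_1}\cdots T_{\beta_i+1}\cdots T_{\beta_j+1}\cdots T_{\beta_r}\right)T_m = L(P)T_m$, again by Remark~\ref{ExpL}. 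If $j = r+1$, then the position carrying $T_m$ is incremented to $T_{m+1}$ and exactly one $\beta$-index, say $\beta_i$, is incremented; summing over $i \in \llbracket 1, r\rrbracket$ gives $\left(\sum_{i=1}^{r} T_{\beta_1}\cdots T_{\beta_i+1}\cdots T_{\beta_r}\right)T_{m+1} = D_1(P)T_{m+1}$ by the definition of $D_1$. Adding the two contributions yields $L(PT_m) = L(P)T_m + D_1(P)T_{m+1}$, and linearity finishes the proof.

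There is essentially no hard obstacle here; it is a bookkeeping argument. The only mild subtlety is cosmetic: Remark~\ref{ExpL} is stated for monomials written with an explicit ordered list of indices, so one must be careful that incrementing $\beta_i$ or $m$ is well-defined as an operation on the formal product $T_{\beta_1}\cdots T_{\beta_r}T_m$ regardless of the ordering of the factors, which is immediate since the polynomial ring is commutative and $L$, $D_1$ are defined on monomials independently of how their factors are listed. One should also note in passing that the identity is used only as a computational device (it feeds into the proof of Proposition~\ref{degank}), so it is enough to have it for the relevant polynomials $P = R_n$ or products thereof; but the clean statement for arbitrary $P$ costs nothing extra.
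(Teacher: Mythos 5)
Your proof is correct and is essentially identical to the paper's: both reduce by linearity to the monomial case, apply the explicit formula for $L$ on the length-$(r+1)$ monomial $T_{\beta_1}\cdots T_{\beta_r}T_m$, and split the sum according to whether the slot carrying $T_m$ is one of the two incremented positions, recovering $L(P)T_m$ and $D_1(P)T_{m+1}$ respectively. Your reading of $D$ as the operator $D_1$ matches the paper's (implicit) usage.
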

\begin{proof}
Since $L$ is a linear operator, one only needs to prove the equality for $P = T_\alpha$ for some $\alpha$ in $(\N^*)^d$ with $d$ a positive integer. Denoting $m$ by $\alpha_{d+1}$, one has 
\begin{displaymath}
\begin{aligned}
L(T_\alpha T_m) & = \sum_{1\leq i<j \leq d+1} T_{\alpha_1} \cdots T_{\alpha_i+1}\cdots T_{\alpha_j+1} \cdots T_{\alpha_{d+1}} \\ 
& = \sum_{1\leq i,j \leq d} T_{\alpha_1} \cdots T_{\alpha_i+1}\cdots T_{\alpha_j+1} \cdots T_{\alpha_d}T_{\alpha_{d+1}} \\ & \qquad \makebox[3cm]{\hfill} + \sum_{1\leq i \leq d} T_{\alpha_1} \cdots T_{\alpha_i+1}\cdots T_{\alpha_d}T_{\alpha_{d+1}+1}
\\ 
& = \left (\sum_{1\leq i,j \leq d} T_{\alpha_1} \cdots T_{\alpha_i+1}\cdots T_{\alpha_j+1} \cdots T_{\alpha_d} \right )T_{\alpha_{d+1}} \\ & \qquad \makebox[3cm]{\hfill} + \left ( \sum_{1\leq i \leq d} T_{\alpha_1} \cdots T_{\alpha_i+1}\cdots T_{\alpha_d} \right )T_{\alpha_{d+1}+1} \\
& = L(T_\alpha)T_m + D(T_\alpha)T_{m+1}
\end{aligned}
\end{displaymath}
\end{proof}

The next Proposition yields the induction formula between the coefficients of $R_{n+1}$ and $R_n$.

\begin{proposition}\label{relcoefRn}
Let $n$ be a positive integer and $\alpha$ an element of $I_{r,n+1}$, where $r$ is an integer such that $3 \leq r \leq n+1$. Then 
\begin{equation*}
\begin{aligned}
c_{\alpha}^{(n+1)} & = \sum_{\beta\in I_{r,n}} l_{\beta, \alpha} c_{\beta}^{(n)} + \sum_{\beta \in I_{r-1,n}} h_{\beta, \alpha} c^{(n)}_{\beta} + \varphi_{\alpha} (A_n)
\end{aligned}
\end{equation*}
\end{proposition}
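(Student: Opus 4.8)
The plan is to unwind the recursion $R_{n+1} = A_n + L(R_n) + H(R_n)$ from Lemma~\ref{inductionR} at the level of individual coefficients, using linearity of $L$ and $H$ together with the definitions $l_{\beta,\alpha} = \varphi_\alpha(L(T_\beta))$ and $h_{\beta,\alpha} = \varphi_\alpha(H(T_\beta))$. Write $R_n = \sum_{\beta \in I_n} c^{(n)}_\beta T_\beta$ as in Proposition~\ref{FormRn}. Then, applying $\varphi_\alpha$ to both sides of \eqref{RecFormula} and using that $\varphi_\alpha$ is linear, we get
\[
c^{(n+1)}_\alpha = \varphi_\alpha(A_n) + \varphi_\alpha\!\Big(\sum_{\beta \in I_n} c^{(n)}_\beta L(T_\beta)\Big) + \varphi_\alpha\!\Big(\sum_{\beta \in I_n} c^{(n)}_\beta H(T_\beta)\Big) = \varphi_\alpha(A_n) + \sum_{\beta \in I_n} l_{\beta,\alpha}\, c^{(n)}_\beta + \sum_{\beta \in I_n} h_{\beta,\alpha}\, c^{(n)}_\beta .
\]
So the only work is to identify, among all $\beta \in I_n$, which actually contribute to the $L$-sum and which to the $H$-sum, and to show these are exactly $I_{r,n}$ and $I_{r-1,n}$ respectively when $\alpha \in I_{r,n+1}$.

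First I would fix $\alpha \in I_{r,n+1}$, so $\alpha$ is a partition of $2(n+1)$ of length $r$ with $3 \le r \le n+1$. For the $L$-term: by Lemma~\ref{HLdegre}, $L$ preserves degree, i.e. $L$ sends a monomial of length $\#\beta$ to a sum of monomials of length $\#\beta$; hence $l_{\beta,\alpha} = 0$ unless $\#\beta = \#\alpha = r$ (this is recorded in Remark~\ref{HLlength}). Moreover, from Remark~\ref{ExpL}, every monomial occurring in $L(T_\beta)$ has the form $T_{\beta_1}\cdots T_{\beta_i+1}\cdots T_{\beta_j+1}\cdots T_{\beta_r}$, whose index sum is $(\beta_1+\cdots+\beta_r)+2$. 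For such a monomial to equal $T_\alpha$ we therefore need $\sum \beta_i + 2 = 2(n+1)$, i.e. $\sum \beta_i = 2n$, which together with the constraint on entry sizes (entries between $2$ and $n-1$, needed so that $T_\beta$ genuinely appears in $R_n$) places $\beta \in I_{r,n}$. Conversely any $\beta \in I_{r,n}$ is a legitimate index in $R_n$, and $l_{\beta,\alpha}$ is simply whatever (non-negative, by Remark~\ref{signLH}) integer it happens to be, possibly zero. So $\sum_{\beta \in I_n} l_{\beta,\alpha} c^{(n)}_\beta = \sum_{\beta \in I_{r,n}} l_{\beta,\alpha} c^{(n)}_\beta$.

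Next I would treat the $H$-term identically: by Lemma~\ref{HLdegre}, $H$ raises length by exactly one, so $h_{\beta,\alpha} = 0$ unless $\#\beta = r-1$ (again Remark~\ref{HLlength}). From Definition~\ref{DefLH}, each monomial in $H(T_\beta)$ is of the form $T_{1+l}T_{1+\beta_k-l}\prod_{i\neq k}T_{\beta_i}$, with index sum $(1+l)+(1+\beta_k-l)+\sum_{i\neq k}\beta_i = \sum_i \beta_i + 2$; matching with $T_\alpha$ forces $\sum_i \beta_i = 2n$ with $\#\beta = r-1$, i.e. $\beta \in I_{r-1,n}$. Hence $\sum_{\beta \in I_n} h_{\beta,\alpha} c^{(n)}_\beta = \sum_{\beta \in I_{r-1,n}} h_{\beta,\alpha} c^{(n)}_\beta$. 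Combining the three pieces gives exactly the claimed formula. The only mildly delicate point — and the place I would be most careful — is bookkeeping around the index ranges: checking that the ``entries between $2$ and $n-1$'' constraint defining $I_{r,n}$ is consistent with a monomial $T_\beta$ actually appearing in $R_n$ (so that the sum over $\beta \in I_n$ in the expansion of $R_n$ really is the sum that shows up), and making sure no boundary cases (e.g. $\beta$ having an entry equal to $1$, which $H$ can create but which cannot be a $\beta$-index in $R_n$ since $R_n \in \mathbb{R}[T_2,\dots,T_{n-1}]$) slip through; all of these are handled by Proposition~\ref{FormRn} and the degree/length lemmas, so there is no real obstacle, just care.
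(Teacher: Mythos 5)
Your proof is correct and follows essentially the same route as the paper: apply $\varphi_\alpha$ to the recursion of Lemma~\ref{inductionR}, expand $R_n$ over $I_n$ via Proposition~\ref{FormRn}, use linearity of $L$, $H$ and $\varphi_\alpha$, and then restrict the sums by the length conditions of Remark~\ref{HLlength}. The extra index-sum bookkeeping you add is harmless but redundant, since every $\beta\in I_n$ already satisfies $\sum_i\beta_i=2n$.
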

\begin{proof}
Recall the equality 
\[R_{n+1} = L(R_n) + H(R_n) + A_n\]
from Lemma \ref{inductionR}. By applying $\varphi_\alpha$ on both sides, we get
\begin{equation*}
\begin{aligned}
c_\alpha^{(n+1)} & = \varphi_\alpha(R_{n+1}) = \varphi_\alpha(L(R_n)) + \varphi_\alpha(H(R_n)) + \varphi_\alpha(A_n) \\
& = \sum_{l=3}^n \sum_{\beta\in I_{l,n}} c_\beta^{(n)}\varphi_\alpha(L(T_\beta)) + \sum_{l=3}^n \sum_{\beta\in I_{l,n}} c_\beta^{(n)}\varphi_\alpha(H(T_\beta))+ \varphi_\alpha(A_n) \\
& = \sum_{l=3}^n \sum_{\beta\in I_{l,n}}l_{\beta,\alpha} c_\beta^{(n)}+ \sum_{l=3}^n \sum_{\beta\in I_{l,n}} h_{\beta,\alpha}c_\beta^{(n)}+ \varphi_\alpha(A_n) 
\end{aligned}
\end{equation*}
Since $l_{\beta,\alpha} = 0$ whenever $\# \beta \neq \# \alpha=r$, and since $h_{\beta,\alpha} = 0$ whenever $\# \beta \neq \# \alpha -1 = r-1$. With Remark \ref{HLlength}, we get the stated identity. 
\end{proof}

\begin{remark}\label{coefinteger}
    Since $R_2$ has integer coefficients, noticing that $l_{\beta,\alpha}$ and $h_{\beta,\alpha}$ are integers as well, induction shows that $R_{n+1}$ has also integer coefficients.  
\end{remark}

From Proposition \ref{relcoefRn}, we can determine the sign of the
coefficients of $R_n$.

\begin{corollary}\label{coefsign}
Let $n$ be an integer greater than $2$ and let $j$ be an integer such that $3\leq j \leq n$. Then for every $(\alpha_1, \ldots, \alpha_j)$ in $I_{j,n}$, the coefficient $c^{(n)}_{\alpha_1, \ldots, \alpha_j}$ is non-negative (resp. non-positive) if $j$ is even (resp. odd).
\end{corollary}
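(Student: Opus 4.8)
The plan is to argue by induction on $n$, feeding the recurrence of Proposition~\ref{relcoefRn} into the sign information already gathered: $l_{\beta,\alpha}\ge 0$ and $h_{\beta,\alpha}\le 0$ for all tuples $\alpha,\beta$ (Remark~\ref{signLH}), together with the fact that every monomial occurring in $A_n=-\sum_{k=1}^{n-1}\binom{n}{k}T_{1+k}T_{1+n-k}T_n$ has length exactly $3$ (its indices sum to $2(n+1)$ and lie between $2$ and $n$) and negative coefficient. It is convenient to phrase the statement uniformly as: for every $n\ge 2$ and every $\alpha\in I_n$ one has $(-1)^{\#\alpha}c^{(n)}_\alpha\ge 0$. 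For $n=2$ this is vacuous since $I_2=\varnothing$ and $R_2=0$; specialising to $n=3$ (where $R_3=-2T_2^3$, the only $\alpha$ is $(2,2,2)$ of length $3$, and $c^{(3)}_{2,2,2}=-2\le 0$) gives the first non-trivial case and is consistent with the list of small $R_n$ in the introduction.

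For the inductive step, I would assume the claim at level $n$ and fix $r$ with $3\le r\le n+1$ and $\alpha\in I_{r,n+1}$, then split $c^{(n+1)}_\alpha$ according to Proposition~\ref{relcoefRn} into three contributions and show each has sign $(-1)^r$ (or is zero). In $\sum_{\beta\in I_{r,n}}l_{\beta,\alpha}c^{(n)}_\beta$ each factor $l_{\beta,\alpha}$ is non-negative and, by the induction hypothesis applied at length $r$, $(-1)^r c^{(n)}_\beta\ge 0$; when $r=n+1$ the set $I_{n+1,n}$ is empty (a partition of $2n$ into $n+1$ parts each $\ge 2$ would have sum $\ge 2n+2$), so this sum is $0$. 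In $\sum_{\beta\in I_{r-1,n}}h_{\beta,\alpha}c^{(n)}_\beta$ each $h_{\beta,\alpha}$ is non-positive while $(-1)^{r-1}c^{(n)}_\beta\ge 0$ by induction, so each term carries the sign $(-1)\cdot(-1)^{r-1}=(-1)^r$; when $r=3$ the index set consists of length-$2$ partitions, for which $c^{(n)}_\beta=0$, so again the sum is $0$. Finally $\varphi_\alpha(A_n)$ vanishes unless $\#\alpha=3$, in which case it is a sum of terms $-\binom{n}{k}<0$, matching $(-1)^3=-1$. Adding the three pieces yields $(-1)^r c^{(n+1)}_\alpha\ge 0$, i.e.\ the claim at level $n+1$, and the induction closes.

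The step that deserves the most care — though it is only a mild obstacle — is the handling of the two boundary values $r=3$ and $r=n+1$ of the recurrence, where respectively the $H$-contribution degenerates onto monomials of length $2$ that are absent from $R_n$, and the $L$-contribution becomes an empty sum; in both cases one must check the term simply vanishes rather than produces the wrong sign. The other point to keep straight is the parity bookkeeping: $L$ preserves the length of monomials whereas $H$ decreases it by one, and it is precisely the non-positivity of the coefficients of $H$ (Remark~\ref{signLH}) that supplies the extra sign needed to reconcile the change of parity when $c^{(n)}_\beta$ with $\#\beta=r-1$ contributes to $c^{(n+1)}_\alpha$ with $\#\alpha=r$. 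Everything else is a direct unwinding of Proposition~\ref{relcoefRn}.
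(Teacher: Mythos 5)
Your proposal is correct and follows essentially the same route as the paper: induction on $n$ with base case $R_3=-2T_2^3$, then the recurrence of Proposition~\ref{relcoefRn} combined with the sign facts $l_{\beta,\alpha}\ge 0$, $h_{\beta,\alpha}\le 0$ from Remark~\ref{signLH} and the observation that $A_n$ only contributes negative coefficients to length-$3$ monomials. Your explicit treatment of the boundary cases $r=3$ (empty set $I_{2,n}$) and $r=n+1$ (empty set $I_{n+1,n}$) is slightly more careful than the paper's write-up but does not change the argument.
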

\begin{proof}
We prove this result by induction on $n$. The claim for  $n = 3$ follows because 
$R_3 = -2T_2^3$. 
 Let $n$ be an integer greater than $2$ and suppose that for every integer $j$ such that $3\leq j \leq n$ and for every $(\beta_1, \ldots, \beta_j)$ element of $I_{j,n}$, the coefficient $c^{(n)}_{\alpha_1, \ldots, \alpha_j}$ is non-negative (resp. non-positive) if $j$ is even (resp. odd). Let $j$ be an integer such that $3\leq j \leq n$ and let $\alpha = (\alpha_1, \ldots, \alpha_j)$ be an element of $I_{j,n+1}$. By Proposition \ref{relcoefRn}, we get
 \begin{equation*}
c^{(n+1)}_{\alpha} = \varphi_{\alpha}(A_n) +   \sum_{\beta \in I_{j,n}} l_{\beta,\alpha}c_{\beta}^{(n)}+\sum_{\beta \in I_{j-1,n}} h_{\beta,\alpha}c_{\beta}^{(n)}
\end{equation*}
Now, if $j \geq 4$, then $\varphi_\alpha(A_n) = 0$ since $A_n$ has degree $3$ and $\deg(T_\alpha) = \# \alpha = j > 3$. In the light of Remark \ref{signLH}, $l_{\beta,\alpha}$ is non-negative and $h_{\beta,\alpha}$ is non-positive. Using the induction hypothesis, we get the result wanted. \\
If $j=3$, then using the induction hypothesis like above and the fact that the coefficient of $A_n$ are non-positive, one concludes that $c_\alpha^{(n+1)}$ is also non-positive.
\end{proof}

\subsection{The degree of the polynomial ${a_{n,k}}$.}

Using the results from Subsection~\ref{opmonom}, we are able to
determine the degree of the polynomial $a_{n,k}$ defined right after
Proposition \ref{formRn}.

\begin{proposition}\label{degree2pk1}
Let $k$ be an integer between $0$ and $n-2$ and let $\alpha$ be an element of $I_{r,n+1}$ where $r>3+k$, such that $\alpha_1 = n-k$. Then $c_{\alpha}^{(n+1)} = 0$.
\end{proposition}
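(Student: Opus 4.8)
The plan is to prove this by induction on $n$, exactly along the lines of the proof of Corollary~\ref{coefsign}, using the recursion from Proposition~\ref{relcoefRn} together with the structural Lemmas \ref{HLdegre}, \ref{indiceH}, and \ref{indiceL}. The key observation is that the quantity $\deg(\alpha) - \alpha_1$ (equivalently, $\#\alpha - \alpha_1$, since for a partition monomial the degree equals the length) is \emph{non-increasing} along the recursion, so if $R_{n+1}$ had a monomial $T_\alpha$ with $\alpha_1 = n-k$ and $\#\alpha = r > 3+k$, i.e.\ with $\#\alpha - \alpha_1 > 3 + k - (n-k) = 3 + 2k - n$, then this "excess length" would have to have been present already in a monomial of $R_n$, and by induction one can push this contradiction down to small $n$ where it can be checked directly.

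More precisely, here is how I would carry out the induction step. Fix $\alpha \in I_{r,n+1}$ with $r > 3+k$ and $\alpha_1 = n-k$, and apply Proposition~\ref{relcoefRn}:
\[
c^{(n+1)}_\alpha = \sum_{\beta \in I_{r,n}} l_{\beta,\alpha}\, c^{(n)}_\beta + \sum_{\beta \in I_{r-1,n}} h_{\beta,\alpha}\, c^{(n)}_\beta + \varphi_\alpha(A_n).
\]
Since $r > 3+k \geq 3$, the term $\varphi_\alpha(A_n)$ vanishes because $A_n$ has degree $3$ while $\deg T_\alpha = \#\alpha = r > 3$. For the $L$-sum: by Lemma~\ref{indiceL}, any $\beta \in I_{r,n}$ with $l_{\beta,\alpha} \neq 0$ satisfies $\beta_1 \leq \alpha_1 = n-k$; since $\beta$ is a partition of $2n$ and not of $2(n+1)$, and $\beta_1 \le n-k = (n-1) - (k-1)$ — wait, I must be careful: $\beta \in I_{n}$ means $\beta$ partitions $2n$, and $n-k = (n-1)-(k-1)$, so writing $n' = n-1$ and $k' = k-1$ we have $\beta_1 \le n' - k'$ with $\#\beta = r > 3+k = 3 + k' + 1 \ge 3 + k'$; if $\beta_1 < n'-k'$ the induction hypothesis applied with a smaller value of $\alpha_1$ (and correspondingly adjusted $k$) must be invoked, while if $\beta_1 = n'-k'$ exactly then $c^{(n)}_\beta = 0$ directly by induction since $\#\beta = r > 3 + k' $. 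The cleanest formulation is to induct on the statement "for all $j, k$ with $k \ge 0$ and $j > 3+k$, every $\alpha \in I_{j,n}$ with $\alpha_1 \le n - 1 - k$ has $c^{(n)}_\alpha = 0$", which folds the $\beta_1 \le \alpha_1$ slack into the hypothesis and makes the $L$-sum vanish term by term. For the $H$-sum: by Lemma~\ref{indiceH}, any $\beta \in I_{r-1,n}$ with $h_{\beta,\alpha} \neq 0$ has $\beta_1 \geq \alpha_1 = n-k$; but $\beta$ partitions $2n$ with length $r-1 > 2+k$, and $\beta_1 \ge n-k$ forces the remaining $r-2$ entries (each at least $2$) to sum to at most $2n - (n-k) = n+k$, which combined with $r - 2 > 1 + k$ and each entry $\ge 2$ gives $2(r-2) \le n+k$; one then checks that the reformulated induction hypothesis still applies to such $\beta$ (with $\beta_1$ possibly as large as allowed, but $\#\beta$ correspondingly constrained), so $c^{(n)}_\beta = 0$.

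The main obstacle — and the part requiring genuine care rather than routine bookkeeping — is getting the induction statement phrased so that the slack $\beta_1 \le \alpha_1$ in Lemma~\ref{indiceL} and the slack $\beta_1 \ge \alpha_1$ in Lemma~\ref{indiceH} are both absorbed cleanly, and verifying the base case. For the base case one takes $n$ small enough (e.g.\ $n \le 7$ or so) and either checks the explicit expressions for $R_3, \dots, R_7$ listed after Theorem~\ref{Main4}, or observes that for such $n$ the constraint $j > 3+k$ together with $\alpha_1 \le n-1-k$ and $\alpha \in I_{j,n}$ forces $2n = \sum \alpha_i \ge 2j > 6 + 2k$ while $\alpha_1 \le n-1-k$ bounds things incompatibly, leaving no admissible $\alpha$. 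Once the statement is correctly set up, the $H$-sum analysis is where one must be most attentive: one needs that $\beta_1$ being large does not let $\beta$ escape the induction hypothesis, which is exactly where the length bound $\#\beta = r-1$ being large (strictly bigger than $2+k$) does the work, since a long partition of $2n$ with a large first part is heavily constrained. I expect that, modulo choosing the right bookkeeping, everything else is a direct consequence of Proposition~\ref{relcoefRn}, Remark~\ref{signLH} (sign information is not even needed here, only vanishing), and the two index lemmas.
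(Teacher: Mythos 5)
Your overall skeleton is the paper's: induct on $n$, apply Proposition \ref{relcoefRn}, kill the $\varphi_\alpha(A_n)$ term by degree, and handle the $L$- and $H$-sums via Lemmas \ref{indiceL} and \ref{indiceH}. But the one ingredient you yourself flag as ``the part requiring genuine care'' --- the formulation of the induction hypothesis --- is proposed incorrectly, and as stated the argument does not go through. The strengthened statement you want to induct on, ``for all $j,k$ with $j>3+k$, every $\alpha\in I_{j,n}$ with $\alpha_1\le n-1-k$ has $c^{(n)}_\alpha=0$'', is false: with $k=0$ it asserts that every coefficient of a monomial of length greater than $3$ vanishes, contradicting for instance the term $120\,T_2^6$ of $R_6$ (or $-1200\,T_2^3T_3^2$ in $R_6$ with $k=1$, $j=5$), and more generally $a_{n,n-2}=(-1)^{n+1}n!\,T_2^{n}$. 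So the induction you describe never starts: the base case is already wrong. The guiding heuristic is also wrong: $\#\alpha-\alpha_1$ is \emph{not} non-increasing along the recursion, since by Lemmas \ref{HLdegre} and \ref{indiceH} an $H$-step raises the length by one while not raising the first part, so $\#\alpha-\alpha_1$ strictly increases there. Lowering $\alpha_1$ must be compensated by demanding a \emph{longer} monomial, not excused; the correct coupling is $\#\alpha+\alpha_1$ measured against the order (vanishing exactly in the region $\#\alpha+\alpha_1\ge n+4$ for $c^{(n+1)}_\alpha$), which is what the paper encodes by pinning $\alpha_1=m-k$ \emph{exactly} in its predicate $P(n)$ and letting the admissible length grow with $k$.

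With the correctly coupled hypothesis the two sums close as follows, and this is where your slack-absorption should have gone. For the $H$-sum, Lemma \ref{indiceH} gives $\beta_1\ge\alpha_1=n-k$ while $\beta_1\le n-1$, so $\beta_1=(n-1)-k'$ with $k'\le k-1$ and $\#\beta=r-1\ge 3+k\ge 4+k'$: the hypothesis applies directly, and none of the counting with $2(r-2)\le n+k$ is needed. For the $L$-sum the relevant fact is not the weakening $\beta_1\le\alpha_1$ that you tried to build into the hypothesis, but the lower bound $\beta_1\ge\alpha_1-1$ (each part rises by at most one under $L$, see Remark \ref{ExpL}); hence $k'=(n-1)-\beta_1\in\{k-1,k\}$ and $r\ge 4+k\ge 4+k'$, so again the hypothesis applies term by term. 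Without this lower bound your ``adjust $k$ upward'' move can raise the required length threshold $j>3+k'$ beyond what is known, and with your weakened hypothesis the needed statement is simply false. Repair the induction statement (equality $\alpha_1=n-k$ as in the paper, or equivalently the condition $\#\alpha+\alpha_1\ge n+3$ for coefficients of $R_n$), add the $\beta_1\ge\alpha_1-1$ observation for the $L$-step, and the remainder of your outline does match the paper's proof.
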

\begin{proof}
  We proceed by induction. Consider the predicate
\begin{align*}
P(n): & \quad \forall m \in \llbracket 5, n \rrbracket, \quad \forall k \in \llbracket 0, m-3 \llbracket, \quad \forall r \in \llbracket 4+k, m+1\rrbracket, \quad \forall \beta \in I_{r,m+1},\\ 
&  \quad \beta_1 = m-k \Rightarrow \quad c_\beta^{(m+1)}=0.
\end{align*}
It is verified for $n=5$. Indeed, 
\begin{equation*}
\begin{aligned}
R_6 = & \quad 120T_2^6 - 1200T_2^3T_3^2 + 210T_3^4 + 900T_2T_3^2T_4 + 300T_2^2T_4^2 -
30T_4^3 - 120T_3T_4T_5 \\ 
& - 30T_2T_5^2 \\ 
= & -30T_2T_5^2-120T_3T_4T_5 + (900T_2T_3^2 + 300T_2^2T_4 -
30T_4^2)T_4 \\
& +(-1200T_2^3T_3 + 210T_3^3)T_3 + 120T_2^6
\end{aligned}
\end{equation*}
Let $n$ be an integer greater than $5$, and suppose $P(n-1)$ is verified. Let $m$ be an integer between $0$ and $n/2$, $r$ an integer between $4+k\leq r\leq n+1$ and consider $\alpha$  in $I_{r,n+1}$ such that $\alpha_1 = n-k$. If $r = n+1$, then since $I_{n+1,n+1} =\{(2,\ldots,2)\}$, we get  $\alpha_1 = 2 = n-k$ i.e. $k = n-2$, which is not possible. Hence $r\leq n$. By Proposition \ref{relcoefRn}, it follows that 
\begin{equation}\label{aux12}
\begin{aligned}
c_{\alpha}^{(n+1)} & = \sum_{\beta\in I_{r,n}} l_{\beta, \alpha} c_{\beta}^{(n)} + \sum_{\beta \in I_{r-1,n}} h_{\beta, \alpha} c^{(n)}_{\beta} + \varphi_{\alpha} (A_n)
\end{aligned}
\end{equation}
where $l_{\beta, \alpha} = \varphi_\alpha ( L(T_\beta) )$ and $h_{\beta, \alpha} = \varphi_\alpha( H(T_\beta))$. One then proves that every term in the right-hand side of \eqref{aux12} is zero.

\underline{Step 1}: terms coming from $L$. \\
Let $\beta$ be an element of $I_{r,n}$ such that $l_{\beta,\alpha} \neq 0$. Then  $\# \beta = r \geq 4$. In particular,  $\beta_1 \leq n-k$ by Lemma \ref{indiceL}. By definition of $I_{r,n}$, one has $\beta_1 \leq n-1$ since $r>2$, hence $\beta_1 = n-1-k'$ for some $k' = k-1$ such that $0\leq k' \leq n-3-1 = (n-1)-3$. Furthermore, $r\leq n$. Using the induction hypothesis, we get $c_\beta^{(n)} = 0$.

\underline{Step 2}: terms coming from $H$.
 \\
 Let $\beta$ be an element of $I_{r-1,n}$ such that $h_{\beta,\alpha} \neq 0$. Then $\# \beta =  r-1 = r' \geq 3+k $. Also  $\beta_1 \geq n-k$ by Lemma \ref{indiceH}, and $\beta_1 \leq n-1$ by definition of $I_{r-1,n}$. Hence $\beta_1 = n-1-k'$ for some integer $k'=k-1$ such that $0\leq k' \leq (n-1)-3$ and $n \geq r' \geq 3+k = 4+k'$. By induction's hypothesis, one get $c^{(n)}_{\beta} = 0$.

\underline{Step 3}: terms coming from $A_n$. \\
The monomials in $A_n$ are of degree $3$. Since $\# \alpha = r > 3$, one has $\varphi_{\alpha}(A_n) = 0$.

It the follows that  $c_{\alpha}^{(n+1)} = 0$. Induction concludes the proof.

\end{proof}

Proposition \ref{degree2pk1} tells us that polynomial $a_{n,k}$ has
degree at most $2+k$. Indeed, the coefficient
$\varphi_\alpha(a_{n,k})$, where $\alpha_1 \leq n-k$, is the
coefficient $c^{(n+1)}_{\tilde{\alpha}}$ where
$\tilde{\alpha} = (n-k,\alpha_1, \ldots,\alpha_r)$ if $r = \#
\alpha$. Hence $\varphi_\alpha(a_{n,k})$ is zero as soon as
$\# \tilde{\alpha} = r+1 > 3 + k$, i.e. $r> 2+k$.

\begin{proposition}\label{degank}
Let $n$ be a positive integer, and $k$ an integer such that $0\leq k \leq n-2$. Then $\deg(a_{n,k}) = 2 + k$.
\end{proposition}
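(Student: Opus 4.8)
The plan is to pin $\deg(a_{n,k})$ from both sides. The upper bound $\deg(a_{n,k})\le 2+k$ is already available: Proposition~\ref{degree2pk1} together with the paragraph following it says that a monomial $T_{n-k}T_\gamma$ of $R_{n+1}$ contributes $c^{(n+1)}_{(n-k,\gamma)}T_\gamma$ to $a_{n,k}$, and this coefficient vanishes as soon as the tuple $(n-k,\gamma)$ has more than $3+k$ entries, i.e. as soon as $\deg(T_\gamma)>2+k$. So it remains to produce, for each admissible $k$, a single monomial of degree exactly $2+k$ appearing in $a_{n,k}$ with non-zero coefficient.

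My candidate is the monomial $T_{n-k}T_2^{k+1}$, corresponding to the partition $\alpha:=(n-k,n-k,\underbrace{2,\dots,2}_{k+1})$. Since $0\le k\le n-2$, one checks at once that $\alpha\in I_{k+3,n+1}$ (its $k+3$ parts sum to $2(n+1)$ and lie in $\llbracket 2,n\rrbracket$), that $\alpha$ is the unique partition of $2(n+1)$ with largest part $n-k$ whose remaining parts are $(n-k,2,\dots,2)$, and that $\deg(T_{n-k}T_2^{k+1})=k+2$. Hence the coefficient of $T_{n-k}T_2^{k+1}$ in $a_{n,k}$ equals $c^{(n+1)}_\alpha$, and everything reduces to showing $c^{(n+1)}_\alpha\neq 0$. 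I would prove this by induction on $n$; the base case $n=2$ forces $k=0$ and reads $c^{(3)}_{(2,2,2)}=-2\neq 0$ off $R_3=-2T_2^3$.

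For the inductive step three regimes appear. If $k=n-2$, then $\alpha=(2,\dots,2)$ and $c^{(n+1)}_\alpha=(-1)^{n+1}n!\neq 0$ by the identity $a_{n,n-2}=(-1)^{n+1}n!\,T_2^n$ recalled after Theorem~\ref{Main4}. If $k=0$, then $\alpha=(n,n,2)$ has length $3$; in Proposition~\ref{relcoefRn} the $H$-contribution is then void (there is no length-$2$ partition of $2n$ with parts in $\llbracket 2,n-1\rrbracket$, since two such parts sum to at most $2n-2$), the $L$-contribution is $\le 0$ by Corollary~\ref{coefsign} and Remark~\ref{signLH}, and $\varphi_{(n,n,2)}(A_n)<0$ by direct inspection of $A_n$, so $c^{(n+1)}_{(n,n,2)}<0$. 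The main case is $1\le k\le n-3$: here $\#\alpha=k+3>3$, hence $\varphi_\alpha(A_n)=0$, and Proposition~\ref{relcoefRn} writes $c^{(n+1)}_\alpha$ as a sum of terms $l_{\beta,\alpha}c^{(n)}_\beta$ over $\beta\in I_{k+3,n}$ and $h_{\beta,\alpha}c^{(n)}_\beta$ over $\beta\in I_{k+2,n}$. The crucial point is that every one of these terms carries the same sign: $l_{\beta,\alpha}\ge 0$ and $h_{\beta,\alpha}\le 0$ by Remark~\ref{signLH}, while $c^{(n)}_\beta$ has sign $(-1)^{\#\beta}$ by Corollary~\ref{coefsign}, so both kinds of summand have sign $(-1)^{k+1}$ or vanish. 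There is therefore no cancellation, and it suffices to exhibit one non-zero summand. Taking $\beta:=(n-k-1,n-k-1,2^{k+1})$ — which lies in $I_{k+3,n}$ precisely because $1\le k\le n-3$ — one gets $l_{\beta,\alpha}\ge 1$ (add $1$ to each copy of $n-k-1$, using Remark~\ref{ExpL}), and $c^{(n)}_\beta\neq 0$ by the induction hypothesis applied with the same $k$, which is legitimate since $k\le (n-1)-2$. This produces a non-zero summand, closes the induction, and yields $\deg(a_{n,k})=2+k$.

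The main obstacle is exactly the risk of cancellation in the recursion of Proposition~\ref{relcoefRn}: a priori $c^{(n+1)}_\alpha$ is a signed sum of many contributions. What makes the argument go through is the observation that the sign data of Corollary~\ref{coefsign}, combined with the signs of the operators $L$ and $H$, force all those contributions to point the same way, so that tracking one explicit witness partition $\beta$ along the induction is enough. The two boundary values $k=0$ and $k=n-2$, where this witness degenerates (to a length-$3$ tuple or an all-twos tuple), are dealt with separately using the explicit form of $A_n$ and the already-known leading coefficient $a_{n,n-2}$.
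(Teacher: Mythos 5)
Your proof is correct, and it rests on the same two pillars as the paper's: the upper bound $\deg(a_{n,k})\le 2+k$ coming from Proposition~\ref{degree2pk1}, and the absence of cancellation in the inductive recursion, forced by the sign coherence of Corollary~\ref{coefsign} together with Remark~\ref{signLH}. The only real difference is how the inductive step is organized. The paper works at the level of the polynomials $a_{n,k}$, using the identity $L(PT_m)=L(P)T_m+D(P)T_{m+1}$ (Lemma~\ref{ippL}) to show that the image under $D$ of the \emph{abstract} leading monomial of $a_{n-1,k}$ survives in $a_{n,k}$; you instead work directly with the coefficient recursion of Proposition~\ref{relcoefRn} and track the explicit witness partition $(n-k,n-k,2,\dots,2)$, showing it is reached under $L$ from $(n-1-k,n-1-k,2,\dots,2)$, whose coefficient is nonzero by the induction hypothesis. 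Both arguments handle the boundary values separately: for $k=n-2$ via $a_{n,n-2}=(-1)^{n+1}n!\,T_2^{n}$, and for $k=0$ the paper invokes Lemma~\ref{P1} while you give a self-contained sign count for $(n,n,2)$ using the emptiness of $I_{2,n}$ and $\varphi_{(n,n,2)}(A_n)=-2n<0$, which is consistent with the computation in the proof of Lemma~\ref{P1}. Your variant bypasses Lemma~\ref{ippL} and has the small bonus of exhibiting a concrete monomial, $T_{n-k}T_2^{k+1}$, that always realizes the degree; otherwise the two arguments are interchangeable.
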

\begin{proof}
One proves this result by induction. Indeed, by Lemma \ref{inductionR}, we know 
\begin{equation*}
R_{n+1} = A_n + L(R_n)+ H(R_n)
\end{equation*}
Suppose $1\leq k < n-2$ and let $T_\alpha$ be the leading monomial in $a_{n-1,k}$. Then $c_\alpha^{(n)}$ is non-zero. We have 
\begin{equation*}
\begin{aligned}
L(R_n) & = L \left ( \sum_{l=0}^{n-3} a_{n-1,l}T_{n-1-l} \right ) =\sum_{l=0}^{n-3}  L \left (  a_{n-1,l}T_{n-1-l} \right ) \\
& = \sum_{l=0}^{n-3} L(a_{n-1,l})T_{n-1-l} + \sum_{l=0}^{n-3} D(a_{n-1,l})T_{n-l} \\
& = \sum_{l=1}^{n-2} L(a_{n-1,l-1})T_{n-l} + \sum_{l=0}^{n-3} D(a_{n-1,l})T_{n-l}
\end{aligned}
\end{equation*}
which leads to 
\begin{equation*}
\begin{aligned}
a_{n,k} & = L(a_{n-1,k-1}) + D(a_{n-1,k}) + \varphi_{T_{n-k}}(A_n + H(R_n)) \\ 
& = c_\alpha^{(n)}D(T_\alpha) + L(a_{n-1,k-1}) + D(a_{n-1,k}-c_\alpha^{(n)}T_\alpha) + \varphi_{T_{n-k}}(A_n + H(R_n))
\end{aligned}
\end{equation*}
Since by induction, $a_{n-1,k}$ has degree $2+k$, $T_\alpha$ has degree $2+k$, hence also $D(T_\alpha)$. Since the signs of the coefficients in $R_n$ only depend on their degree (see Corollary \ref{coefsign}), since $a_{n,k}$ has degree at most $2+k$ (see Lemma \ref{degree2pk1}), and since $c_\alpha^{(n)}$ is non-zero, one deduces that $a_{n,k}$ has degree $2+k$. \\

If $k = n-2$, then $a_{n,n-2} = (-1)^{n+1}n! \, T_2^{n}$ which has degree $n = k+2$. \\

If $k=0$, then using Lemma \ref{P1} below, which gives $c_{n,n,2}^{(n+1)}$, we get that $\varphi_{n,2}(a_{n,0})$ is not zero, hence $\deg(a_{n,0}) = 2 = 2+k$.

\end{proof}


\subsection{The form $a_{n}$.}
In this subsection, one determines explicitly the quadratic form
defined by $a_n := a_{n,0}-c^{(n+1)}_{n,n,2}T_2T_n$. In order to do
so, one only needs the coefficients of monomials of degree $3$ in
$R_{n+1}$ with $T_n$ as one of the variables.

First, we need to apply Proposition \ref{relcoefRn} in the context of
$a_n$.

\begin{lemma}\label{relcoefRnn}
    Let $\alpha$ be an element of $I_n$ such that $\# \alpha = 3$. Then 
    \begin{equation}
        c_{\alpha}^{(n+1)} = \varphi_{\alpha} (A_n) + \sum_{\beta\in I_{3,n}} l_{\beta, \alpha} c_{\beta}^{(n)} 
    \end{equation}
\end{lemma}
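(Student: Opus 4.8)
The plan is to specialise Proposition~\ref{relcoefRn} to the case $r=3$, so the only point requiring justification is that the contribution coming from the operator $H$ vanishes. First I would recall that Proposition~\ref{relcoefRn} gives, for $\alpha \in I_{r,n+1}$ with $3 \leq r \leq n+1$,
\[
c_{\alpha}^{(n+1)} = \sum_{\beta\in I_{r,n}} l_{\beta, \alpha}\, c_{\beta}^{(n)} + \sum_{\beta \in I_{r-1,n}} h_{\beta, \alpha}\, c^{(n)}_{\beta} + \varphi_{\alpha} (A_n),
\]
and set $r = 3$, since $\#\alpha = 3$ by hypothesis. This immediately yields the $l$-sum over $\beta \in I_{3,n}$ and the $\varphi_\alpha(A_n)$ term as in the statement. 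It remains to kill the middle sum, which runs over $\beta \in I_{2,n}$.

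The key step is the observation that $I_{2,n}$ is empty (or at least contributes nothing). By the Definition of partitions preceding Proposition~\ref{FormRn}, $I_{r,n}$ consists of partitions of $2n$ of length $r$ into parts lying between $2$ and $n-1$. A length-$2$ partition $(\beta_1,\beta_2)$ of $2n$ with $\beta_1 \geq \beta_2$ and both parts at most $n-1$ would require $\beta_1 + \beta_2 = 2n$ with $\beta_1 \leq n-1$, forcing $\beta_2 = 2n - \beta_1 \geq n+1 > n-1$, a contradiction. Hence $I_{2,n} = \emptyset$, so the sum $\sum_{\beta \in I_{2,n}} h_{\beta,\alpha} c^{(n)}_\beta$ is empty and equals $0$. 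Alternatively, one may invoke Remark~\ref{HLlength}: $h_{\beta,\alpha} = 0$ unless $\#\beta = \#\alpha - 1 = 2$, and combine this with the emptiness of $I_{2,n}$; either way the $H$-contribution drops out.

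Putting these together gives exactly
\[
c_{\alpha}^{(n+1)} = \varphi_{\alpha}(A_n) + \sum_{\beta\in I_{3,n}} l_{\beta, \alpha}\, c_{\beta}^{(n)},
\]
which is the claimed identity. I do not expect any real obstacle here — the only thing to be careful about is the bookkeeping of the index sets $I_{r,n}$ and making explicit why no partition of $2n$ into two admissible parts exists; this is a one-line arithmetic check. The lemma is essentially a direct corollary of Proposition~\ref{relcoefRn} once the degenerate length-$2$ case is cleared away.
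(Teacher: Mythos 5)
Your proposal is correct and follows exactly the paper's own route: apply Proposition \ref{relcoefRn} with $r=3$ and observe that $I_{2,n}$ is empty, so the $H$-contribution vanishes. Your explicit arithmetic check that no length-$2$ partition of $2n$ with parts at most $n-1$ exists is the same (one-line) observation the paper leaves implicit.
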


\begin{proof}
    It is just a matter of writing \eqref{relcoefRn} for $\alpha$ and noticing that $I_{2,n}$ is the empty set. 
\end{proof}

Set $\alpha = (n,b,c)$ and denote by $d^{(n+1)}_{n,b,c}$ the quantity $\varphi_{\alpha}(A_n)$. Then, by Lemma \ref{relcoefRnn}, 
\begin{equation}\label{coef4}
c_{n,b,c}^{(n+1)} = d^{(n+1)}_{n,b,c} + \sum_{\beta\in I_{3,n}} l_{\beta, (n,b,c)} c_{\beta}^{(n)} 
\end{equation}
Let's describe the elements $\beta$ of $I_{3,n}$ such that $l_{\beta,(n,b,c)} \neq 0$. Let $(\beta_1, \beta_2,\beta_3)$ be a triplet of positive integers ; one has 
\begin{equation}
L(T_{\beta_1}T_{\beta_2}T_{\beta_3}) = T_{\beta_1+1}T_{\beta_2+1}T_{\beta_3} + T_{\beta_1+1}T_{\beta_2}T_{\beta_3+1} + T_{\beta_1}T_{\beta_2+1}T_{\beta_3+1}.
\label{Lon3}
\end{equation}
From \eqref{Lon3}, one deduces that, if $\beta$ is an element of $I_{3,n}$,  the condition $l_{\beta,(n,b,c)} \neq 0$ is equivalent to that at least one of the equalities below is verified
\begin{enumerate}
\item[$\bullet$] $\{\beta_1+1,\beta_2+1, \beta_3\} = \{n,b,c\}$
\item[$\bullet$] $\{\beta_1+1,\beta_2, \beta_3+1\} = \{n,b,c\}$
\item[$\bullet$] $\{\beta_1,\beta_2+1, \beta_3+1\} = \{n,b,c\}$
\end{enumerate}
and, in this case, $l_{\beta,(n,b,c)}$ is equal to the number of those conditions verified. Since $\beta$ is an element of $I_{3,n}$, one has $\beta_1 \leq n-1$. Using the fact that $\beta_1 \geq \beta_2 \geq \beta_3$, and the equalities of sets above, one must have $\beta_1 = n-1$, if $l_{\beta,(n,b,c)} \neq 0$. \\

We will now compute the coefficients of monomials of degree $3$ in $R_{n+1}$ with variable $T_n$, using \eqref{coef4} to get induction formulas between coefficients of $R_{n+1}$ and of $R_n$.
\begin{lemma}\label{P1}
Let $n$ be an integer greater than $2$. Then $c_{n,n,2}^{(n+1)} = -n(n+1)$. Therefore 
\[ R_{n+1} = -n(n+1)T_2T_n^2 + a_nT_n + b_n.\]
\end{lemma}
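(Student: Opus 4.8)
The plan is to compute the coefficient $c_{n,n,2}^{(n+1)}$ directly from the reduced recurrence \eqref{coef4}, namely
\[
c_{n,n,2}^{(n+1)} = d^{(n+1)}_{n,n,2} + \sum_{\beta \in I_{3,n}} l_{\beta,(n,n,2)}\, c_\beta^{(n)},
\]
by isolating exactly which partitions $\beta \in I_{3,n}$ contribute to the monomial $T_n^2 T_2$ under $L$, and then setting up and solving the resulting scalar recurrence in $n$. The target monomial is $T_\alpha$ with $\alpha = (n,n,2)$, so $\{\beta_1+1,\beta_2+1,\beta_3\}$, $\{\beta_1+1,\beta_2,\beta_3+1\}$ or $\{\beta_1,\beta_2+1,\beta_3+1\}$ must equal $\{n,n,2\}$ as a multiset, with $\beta_1 \ge \beta_2 \ge \beta_3 \ge 2$ and $\beta_1 \le n-1$. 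Since we already noted $\beta_1 = n-1$ is forced, a short case analysis shows the only candidates are $\beta = (n-1,n-1,2)$ (raising the two copies of $n-1$) and $\beta = (n-1,2,1)$ — but the latter is excluded because $\beta_3 \ge 2$ in a partition of $I_{3,n}$, or more carefully one checks the multiset conditions only admit $\beta = (n-1,n-1,2)$ for $n$ large, with $l_{\beta,(n,n,2)} = 1$ (exactly one of the three ways produces $T_n^2T_2$, namely raising $\beta_1$ and $\beta_2$). One must also handle the small-$n$ boundary where $\beta=(n-1,n-1,2)$ might degenerate or coincide with other partitions (e.g. $n=4$ where $(3,3,2)$ and the needed indices collide), and the extra contribution from $\beta$ of the form where $\beta_3$ gets raised, i.e. partitions $(n-1,n-1,2)$ versus a potential $(n-1,?,?)$ summing to $2n$.

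Next I would compute the inhomogeneous term $d^{(n+1)}_{n,n,2} = \varphi_{(n,n,2)}(A_n)$. Recall $A_n = -\sum_{k=1}^{n-1}\binom{n}{k} T_{1+k}T_{1+n-k}T_n$, so a monomial $T_{1+k}T_{1+n-k}T_n$ equals $T_nT_nT_2$ precisely when $\{1+k,1+n-k\} = \{n,2\}$, i.e. $k=1$ or $k=n-1$, each giving coefficient $-\binom{n}{1} = -n$; these two choices give the same monomial, so $d^{(n+1)}_{n,n,2} = -2n$ (for $n \ge 3$; the case $n=2$, $k=1$ only, must be checked separately). Substituting into the recurrence yields
\[
c_{n,n,2}^{(n+1)} = -2n + c_{n-1,n-1,2}^{(n)},
\]
a first-order linear recurrence. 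With the base case read off from the explicit list (e.g. $R_4 = -12T_2T_3^2 + 6T_2^4$ gives $c_{3,3,2}^{(4)} = -12 = -3\cdot 4$, or one starts from $R_3 = -2T_2^3$ with the convention that the relevant coefficient vanishes), solving gives $c_{n,n,2}^{(n+1)} = -\sum_{j=3}^{n} 2j + (\text{base}) = -(n(n+1) - 6) - 6 = -n(n+1)$, matching the claim; I would double-check the telescoping constants against $R_4, R_5, R_6$.

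The second assertion, $R_{n+1} = -n(n+1)T_2T_n^2 + a_nT_n + b_n$ with $a_n, b_n \in \mathbb{Z}[T_2,\dots,T_{n-1}]$, is then immediate: by Proposition \ref{FormRn} and the decomposition preceding Lemma \ref{relcoefRnn}, $R_{n+1} = c_{n,n,2}^{(n+1)} T_2 T_n^2 + a_n T_n + \sum_{k\ge 1} a_{n,k} T_{n-k}$ where $a_n = a_{n,0} - c_{n,n,2}^{(n+1)} T_2 T_n$ and $(n,n,2)$ is the unique partition in $I_n$ with first two parts equal to $n$; substituting $c_{n,n,2}^{(n+1)} = -n(n+1)$ and setting $b_n := \sum_{k=1}^{n-2} a_{n,k} T_{n-k}$ finishes it, with integrality coming from Remark \ref{coefinteger}. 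The main obstacle is the careful bookkeeping in the first step: correctly enumerating the (very few) partitions $\beta$ that feed into $T_n^2 T_2$ and getting the multiplicity $l_{\beta,(n,n,2)}$ and the $A_n$-contribution right, including the low-order values of $n$ where partitions degenerate; the recurrence itself is trivial once that is pinned down.
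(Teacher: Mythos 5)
Your proposal is correct and follows essentially the same route as the paper: you use the reduction \eqref{coef4}, identify that for $n\ge 4$ the only contributing partition is $\beta=(n-1,n-1,2)$ with $l_{\beta,(n,n,2)}=1$, compute $\varphi_{(n,n,2)}(A_n)=-2n$ from the two symmetric choices $k=1$ and $k=n-1$, and solve the first-order recurrence $c^{(n+1)}_{n,n,2}=-2n+c^{(n)}_{n-1,n-1,2}$ with base case $c^{(4)}_{3,3,2}=-12$ read off from $R_4$, exactly as the paper does (which also treats the degenerate case $\beta_3=n-1$, i.e.\ $n=3$, via $R_4$). The concluding decomposition $R_{n+1}=-n(n+1)T_2T_n^2+a_nT_n+b_n$ is likewise obtained as in the paper from Proposition \ref{FormRn} and the uniqueness of $(n,n,2)$ among length-$3$ partitions with two parts equal to $n$.
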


\begin{proof}
Let $n$ be an integer greater than $2$. From  \eqref{coef4} we get 
\begin{equation}
\begin{aligned}
c_{n,n,2}^{(n+1)} & = d^{(n+1)}_{n,n,2} +   \sum_{\beta\in I_{3,n}} l_{\beta, (n,n,2)} c_{\beta}^{(n)}  \\
& = -2n +  \sum_{\beta\in I_{3,n}} l_{\beta, (n,n,2)} c_{\beta}^{(n)}.
\end{aligned}
\label{coefnn2}
\end{equation}
In order to describe more precisely the terms appearing in
\eqref{coefnn2}, we need to determine integers $\beta_2 ,\beta_3$ in
$\llbracket 2,n-1 \rrbracket$ such that $\beta_2 \geq \beta_3$,  
\begin{equation}
\beta_2 + \beta_3 = n+1
\label{partnn2}
\end{equation}
and such that at least one of these conditions
\begin{enumerate}
\item[$\bullet$] $\{n,\beta_2+1, \beta_3\} = \{n,2\}$
\item[$\bullet$] $\{n,\beta_2, \beta_3+1\} = \{n,2\}$
\item[$\bullet$] $\{n-1,\beta_2+1, \beta_3+1\} = \{n,2\}$
\end{enumerate}
is verified. Let $(n,\beta_2,\beta_3)$ be an element of $I_{3,n}$ such that $l_{\beta,(n,n,2)} \neq 0$. If $\beta_3 = n-1$ then $\beta_2=\beta_3=n-1$ and by \eqref{partnn2} we get $n=3$. In this case, 
\[R_4 = - 12T_2T_3^2 + 6T_2^4\]
where $c^{(4)}_{3,3,2} = -3\times4$. Let's suppose $n\geq 4$, therefore $\beta_3$ is not equal to $n-1$. Since  either $\beta_3 \in \{n,2\}$ or $\beta_3+1 \in \{n,2\}$, and $\beta_3 \neq n-1$, we deduce that $\beta_3 = 2$, because $\beta_3 \geq 2$. By \eqref{partnn2}, we get $\beta_2 = n-1$, and only 
\begin{enumerate}
    \item[$\bullet$] $\{n,\beta_2+1, \beta_3\} = \{n,2\}$
\end{enumerate}
is verified, i.e. $\varepsilon_{n-1,n-1,2}^{(n,n,2)} = 1$. Hence \eqref{coefnn2} yields 
\[c_{n,n,2}^{(n+1)} = -2n + c^{(n)}_{n-1,n-1,2}\]
for all $n$ integers greater than $3$. By induction, one deduces
\[c^{(n+1)}_{n,n,2} = -n(n+1)\] 
for all $n$ greater than $2$, by using the fact that $c^{(4)}_{3,3,2} = -12 = -3\times4$. This concludes the proof.
\end{proof}
The next $3$ lemmas are proved using the same ideas as for Lemma \ref{P1} but in a more involved way ; their proofs are also relegated to the appendix.
\begin{lemma}\label{P2}
Let $n$ be an integer greater than $4$. Then 
\[c^{(n+1)}_{n,n-1,3} = -(n-1)n(n+1)\] 
\end{lemma}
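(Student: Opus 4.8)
The plan is to mimic the proof of Lemma \ref{P1}, applying the coefficient recursion \eqref{coef4} specialized to the degree-three monomial $T_n T_{n-1} T_3$, i.e. to the multi-index $\alpha = (n, n-1, 3)$. Writing $c^{(n+1)}_{n,n-1,3} = d^{(n+1)}_{n,n-1,3} + \sum_{\beta \in I_{3,n}} l_{\beta,(n,n-1,3)} c^{(n)}_\beta$, I first compute the contribution $d^{(n+1)}_{n,n-1,3} = \varphi_{(n,n-1,3)}(A_n)$ coming from $A_n = -\sum_{k=1}^{n-1}\binom{n}{k}T_{1+k}T_{1+n-k}T_n$: the monomial $T_n T_{n-1}T_3$ arises from the term with $\{1+k,1+n-k\} = \{n-1,3\}$, that is $k = 2$ (and $k = n-2$), contributing $-\binom{n}{2}$ (the two choices $k=2$ and $k=n-2$ give the same monomial but $A_n$ is written as a single sum over $k$, so one must be careful: the monomial $T_3 T_{n-1} T_n$ is picked up once as $k=2$ and once as $k=n-2$, giving $-2\binom{n}{2} = -n(n-1)$, unless $n-1 = 3$). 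So for $n \geq 5$ one expects $d^{(n+1)}_{n,n-1,3} = -n(n-1)$.

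Next I would determine exactly which $\beta \in I_{3,n}$ satisfy $l_{\beta,(n,n-1,3)} \neq 0$. As established in the discussion preceding Lemma \ref{P1}, any such $\beta = (\beta_1,\beta_2,\beta_3)$ must have $\beta_1 = n-1$ and $\beta_2 + \beta_3 = n+2$ (since $\beta_1+\beta_2+\beta_3 = 2n$), with $\beta_2,\beta_3 \in \llbracket 2, n-1\rrbracket$ and $\beta_2 \geq \beta_3$; moreover one of the three set-equalities $\{\beta_1+1,\beta_2+1,\beta_3\} = \{n,n-1,3\}$, $\{\beta_1+1,\beta_2,\beta_3+1\} = \{n,n-1,3\}$, $\{\beta_1,\beta_2+1,\beta_3+1\} = \{n,n-1,3\}$ must hold. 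Since $\beta_1 + 1 = n$, the first two cases force $\{\beta_2+1,\beta_3\} = \{n-1,3\}$ and $\{\beta_2,\beta_3+1\} = \{n-1,3\}$ respectively; the third case has $\beta_1 = n-1$ already in the set, so it forces $\{\beta_2+1,\beta_3+1\} = \{n,3\}$, i.e. $\beta_2 = n-1$, $\beta_3 = 2$. I will enumerate the resulting candidates — expecting something like $\beta = (n-1,n-2,4)$, $\beta = (n-1,n-3,5)$ or similar small list — carefully tracking the multiplicity $l_{\beta,(n,n-1,3)}$ (the number of the three conditions that hold, possibly with repetition when the triple has equal entries). The upshot should be a recursion of the shape $c^{(n+1)}_{n,n-1,3} = -n(n-1) + c^{(n)}_{n-1,n-2,3} + (\text{corrections})$, where the correction terms involve coefficients like $c^{(n)}_{n-1,n-1,2}$ (known $= -(n-1)n$ by Lemma \ref{P1}) and possibly one genuinely new degree-three coefficient; if a new coefficient appears one resolves it by an auxiliary identity or by noting it vanishes by degree/support constraints.

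Having assembled the linear recurrence in $n$ with an explicit inhomogeneous term, I would solve it by telescoping, checking the base case against the explicit small-$n$ expressions for $R_6$ (where $c^{(6)}_{5,4,3}$ can be read off as $-120$, matching $-(5-1)\cdot 5 \cdot 6 = -120$) and against $R_5$ for the boundary behaviour. Summing the telescoped recursion from the base case up to $n$ should yield $c^{(n+1)}_{n,n-1,3} = -(n-1)n(n+1)$, completing the proof. The main obstacle I anticipate is the bookkeeping in the enumeration step: correctly identifying all $\beta \in I_{3,n}$ with $l_{\beta,(n,n-1,3)} \neq 0$ together with their exact multiplicities, and in particular handling the degenerate small values of $n$ (around $n = 4,5,6$) where partitions collapse to fewer distinct forms and where the $k=2$ versus $k=n-2$ coincidence in $A_n$ must be treated separately — these edge cases are exactly where an off-by-one or a missed multiplicity would corrupt the constant in the final formula. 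Since the statement restricts to $n \geq 5$, I would isolate the generic regime and verify the small cases by hand against the listed $R_n$.
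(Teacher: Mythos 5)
Your strategy is exactly the paper's: specialise the recursion \eqref{coef4} to $\alpha=(n,n-1,3)$, read off the $A_n$-contribution $d^{(n+1)}_{n,n-1,3}=-n(n-1)$ (your $k=2$ versus $k=n-2$ bookkeeping is correct for $n\geq 5$), enumerate the $\beta\in I_{3,n}$ with $l_{\beta,(n,n-1,3)}\neq 0$, and telescope from the base case $c^{(6)}_{5,4,3}=-120$. However, the enumeration step — which is the whole content of the lemma — is left undone, and the part of it you do write down contains an arithmetic slip that would derail it if followed literally: since $\beta\in I_{3,n}$ sums to $2n$ and $\beta_1=n-1$, the constraint is $\beta_2+\beta_3=n+1$, not $n+2$. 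Your guessed candidates $(n-1,n-2,4)$ and $(n-1,n-3,5)$ sum to $2n+1$ and so are not partitions of $2n$ at all; with your constraint as stated you would either find no contributing $\beta$ or the wrong ones, and the recursion would come out false.

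The correct enumeration (for $n\geq 6$; $n=5$ is a degenerate case handled separately, as you anticipated) gives exactly two contributing triples: $\beta=(n-1,n-1,2)$ with $l_{\beta,(n,n-1,3)}=2$ (two of the three set-conditions hold) and $\beta=(n-1,n-2,3)$ with $l_{\beta,(n,n-1,3)}=1$. The multiplicity $2$ is not a cosmetic detail: using Lemma \ref{P1} it turns the recursion into $c^{(n+1)}_{n,n-1,3}=-3n(n-1)+c^{(n)}_{n-1,n-2,3}$, and the factor $3$ is precisely what makes the telescoped solution the cubic $-(n-1)n(n+1)$ (with multiplicity $1$ you would get $-n^2(n-1)$ instead). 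You flag "tracking multiplicities" as the anticipated obstacle, and indeed your proposed recursion shape $-n(n-1)+c^{(n)}_{n-1,n-2,3}+(\text{corrections involving }c^{(n)}_{n-1,n-1,2})$ is consistent with the truth, but as written the proposal neither fixes the correction's coefficient nor survives the $n+2$ error, so the decisive computation remains to be done. Once $\beta_2+\beta_3=n+1$ is restored and the two triples with their multiplicities are identified, your plan coincides with the paper's proof.
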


\begin{lemma}\label{P4}
Let $n$ be an integer greater than $7$. Then
\[c^{(n+1)}_{n,n+1-k,k+1} = -2(n+1) \binom{n}{k}\] 
for all $k \in \llbracket 2, \lfloor \frac{n}{2} \rfloor -1 \rrbracket$.
\end{lemma}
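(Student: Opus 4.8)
The plan is to follow the exact same strategy as in the proofs of Lemmas \ref{P1} and \ref{P2}, namely to use the induction formula \eqref{coef4} specialised to the degree-$3$ monomial $\alpha = (n, n+1-k, k+1)$, and to solve the resulting linear recursion in $n$. First I would apply Lemma \ref{relcoefRnn} to write
\[
c^{(n+1)}_{n,n+1-k,k+1} = d^{(n+1)}_{n,n+1-k,k+1} + \sum_{\beta \in I_{3,n}} l_{\beta,(n,n+1-k,k+1)}\, c^{(n)}_\beta,
\]
where $d^{(n+1)}_{n,n+1-k,k+1} = \varphi_{(n,n+1-k,k+1)}(A_n)$. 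Since $A_n = -\sum_{j=1}^{n-1}\binom{n}{j}T_{1+j}T_{1+n-j}T_n$, the term $T_{n+1-k}T_{k+1}T_n$ arises from $j = n-k$ and from $j = k$, so $d^{(n+1)}_{n,n+1-k,k+1} = -\bigl(\binom{n}{k} + \binom{n}{n-k}\bigr) = -2\binom{n}{k}$ (using $2 \le k \le \lfloor n/2\rfloor - 1$, which forces $n-k \ne k$ and keeps all indices in range).

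The heart of the argument is the enumeration of the triples $\beta = (\beta_1,\beta_2,\beta_3) \in I_{3,n}$ with $l_{\beta,(n,n+1-k,k+1)} \ne 0$. As recalled before Lemma \ref{P1}, such $\beta$ must satisfy $\beta_1 = n-1$ and one of the three set-equalities $\{\beta_1+1,\beta_2+1,\beta_3\}$, $\{\beta_1+1,\beta_2,\beta_3+1\}$, $\{\beta_1,\beta_2+1,\beta_3+1\}$ equal to $\{n, n+1-k, k+1\}$; moreover $l_{\beta,(n,n+1-k,k+1)}$ counts how many of the three hold. Matching $n$ off against $\beta_1+1 = n$ in the first two patterns and against $\beta_1 = n-1$ being impossible (it would need $n-1 \in \{n,n+1-k,k+1\}$, forcing $k+1 = n-1$, i.e. $k = n-2$, excluded by $k \le \lfloor n/2\rfloor-1$ once $n \ge 5$), one finds the dominant contribution comes from $\beta = (n-1, n-k, k+1)$ via the pattern $\{\beta_1+1,\beta_2+1,\beta_3\} = \{n, n+1-k, k+1\}$, contributing $c^{(n)}_{n-1,n-k,k+1}$; this is precisely the degree-$3$ coefficient of $R_n$ of the same shape with $n$ replaced by $n-1$. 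A careful check — distinguishing whether $k+1$ can coincide with $n-k$ or with $n-1-k$, which is where the hypothesis $n \ge 8$ and $2 \le k \le \lfloor n/2\rfloor - 1$ gets used to rule out degenerate overlaps — should show that all other $\beta$ either are not genuine partitions in $I_{3,n}$ (some index falls below $2$ or above $n-1$) or carry a coefficient that vanishes, so that no other term survives. This yields the clean recursion
\[
c^{(n+1)}_{n,n+1-k,k+1} = -2\binom{n}{k} + c^{(n)}_{n-1,n-k,k+1}.
\]

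Finally I would solve the recursion. Writing $u_n := c^{(n+1)}_{n,n+1-k,k+1}$, the relation is $u_n = u_{n-1} - 2\binom{n}{k}$, which telescopes; using the hockey-stick identity $\sum_{m} \binom{m}{k} = \binom{n+1}{k+1}$ and an appropriate base case (the first $n$ for which the formula is claimed, $n = 2k+2$ roughly, checked directly from the known small polynomials $R_m$ or from Lemma \ref{P2} in the edge case), one gets $u_n = -2(n+1)\binom{n}{k}$ after simplifying $\binom{n+1}{k+1} = \frac{n+1}{k+1}\binom{n}{k}$ together with the boundary contributions. The main obstacle is Step 2: the combinatorial bookkeeping of exactly which triples $\beta \in I_{3,n}$ contribute and with what multiplicity $l_{\beta,\alpha} \in \{1,2,3\}$, since several patterns can in principle match and one must carefully exclude the spurious ones using the constraints on $k$ and $n$; once that enumeration is pinned down, the rest is a routine telescoping computation, which is why (as the text says) the detailed proof is deferred to the appendix.
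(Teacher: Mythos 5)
Your setup (formula \eqref{coef4}, the value $d^{(n+1)}_{n,n+1-k,k+1}=-2\binom{n}{k}$, and the restriction $\beta_1=n-1$) matches the paper, but the heart of your argument — the enumeration of the partitions $\beta\in I_{3,n}$ with $l_{\beta,(n,n+1-k,k+1)}\neq 0$ — is wrong, and this breaks the rest. Besides $\beta=(n-1,n-k,k+1)$ (pattern $\beta_2+1=n+1-k$, $\beta_3=k+1$), the pattern with $\beta_2=n+1-k$ and $\beta_3+1=k+1$ also survives: $\beta=(n-1,n+1-k,k)$ is a genuine element of $I_{3,n}$ (it sums to $2n$ and all entries lie in $\llbracket 2,n-1\rrbracket$ for $3\le k\le\lfloor n/2\rfloor-1$), and its coefficient $c^{(n)}_{n-1,n+1-k,k}$ is nonzero — it is exactly the quantity the lemma computes with $(n,k)$ replaced by $(n-1,k-1)$. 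So the correct recursion is the two-variable one,
\begin{equation*}
u_{n,k} = -2\binom{n}{k} + u_{n-1,k} + u_{n-1,k-1}, \qquad u_{n,k}:=c^{(n+1)}_{n,n+1-k,k+1},
\end{equation*}
not your one-term relation $u_n=u_{n-1}-2\binom{n}{k}$. Your relation is in fact incompatible with the statement you are trying to prove: plugging $u_{n,k}=-2(n+1)\binom{n}{k}$ into it would force $\binom{n}{k}=\binom{n-1}{k}$, which is false, so the hockey-stick telescoping in $n$ alone cannot produce the claimed closed form (for fixed $k$ it yields a polynomial in $n$ with the wrong leading coefficient). By contrast, $-2(n+1)\binom{n}{k}$ does satisfy the two-term recursion, by Pascal's rule.

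Two further points where your bookkeeping needs repair. First, in ruling out the pattern $\{\beta_1,\beta_2+1,\beta_3+1\}=\{n,n+1-k,k+1\}$ you only excluded $k+1=n-1$; the other way $n-1$ can appear is $n+1-k=n-1$, i.e.\ $k=2$, which is inside your stated range — this is precisely why the paper runs the recursion only for $k\ge 3$ and feeds in the case $k=2$ separately via Lemma \ref{P2} (where that pattern does contribute and produces a multiplicity $2$). One must also check $k\neq\frac{n-1}{2}$ and $k\neq\frac{n+1}{2}$ so that both surviving multiplicities equal $1$ rather than $2$; this is where the bound $k\le\lfloor n/2\rfloor-1$ is really used. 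Second, because the recursion couples $k$ with $k-1$, the induction needs base data along two edges: the row $k=2$ from Lemma \ref{P2} (i.e.\ $u_{n,2}=-(n-1)n(n+1)$) and one explicit seed such as $u_{8,3}=-1008$ read off from $R_9$; a single base case at $n\approx 2k+2$, as you propose, does not suffice to propagate the two-dimensional recursion.
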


\begin{lemma}\label{an}
Let $m$ be an integer greater than $4$. Then 
\begin{enumerate}
\item $c^{(2m+2)}_{2m+1,m+2,m+1}  =  -2m \begin{pmatrix} 2m \\ m \end{pmatrix}$

\item $c^{(2m+1)}_{2m,m+1,m+1} = -(2m+1) \begin{pmatrix} 2m \\ m \end{pmatrix}$

\end{enumerate}
\end{lemma}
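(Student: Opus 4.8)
\textbf{Plan of proof for Lemma \ref{an}.}

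The plan is to compute the two coefficients $c^{(2m+2)}_{2m+1,m+2,m+1}$ and $c^{(2m+1)}_{2m,m+1,m+1}$ by unwinding the recursion \eqref{coef4} of Lemma \ref{relcoefRnn}, exactly in the spirit of the proofs of Lemmas \ref{P1}, \ref{P2} and \ref{P4}. Fix $n$ and a degree-$3$ target multi-index $\alpha=(n,b,c)$ with $b+c=n+1$ and $b\ge c$. Formula \eqref{coef4} reads $c^{(n+1)}_{n,b,c}=d^{(n+1)}_{n,b,c}+\sum_{\beta\in I_{3,n}}l_{\beta,(n,b,c)}c^{(n)}_\beta$, and by the analysis following \eqref{Lon3} any $\beta$ contributing must have $\beta_1=n-1$; writing $\beta=(n-1,\beta_2,\beta_3)$ with $\beta_2+\beta_3=n$, the three set-equalities in \eqref{Lon3} then constrain $(\beta_2,\beta_3)$ to at most two possibilities, namely those obtained by splitting the "excess index" between the two smaller slots of $\alpha$. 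So the recursion \eqref{coef4} becomes, for each of our two families, a linear recurrence in $n$ with at most two predecessor terms, plus the explicit additive term $d^{(n+1)}_{n,b,c}=\varphi_{(n,b,c)}(A_n)$ read off from $A_n=-\sum_{k=1}^{n-1}\binom{n}{k}T_{1+k}T_{1+n-k}T_n$.

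First I would treat part (2), the diagonal case $\alpha=(2m,m+1,m+1)$, i.e.\ $n=2m$, $b=c=m+1$. The candidate predecessors $\beta=(n-1,\beta_2,\beta_3)$ with $\beta_2+\beta_3=n=2m$ satisfying one of the three conditions of \eqref{Lon3} relative to $\{2m,m+1,m+1\}$ are found by checking which of $\{\beta_1+1,\beta_2+1,\beta_3\}$, $\{\beta_1+1,\beta_2,\beta_3+1\}$, $\{\beta_1,\beta_2+1,\beta_3+1\}$ can equal $\{2m,m+1,m+1\}$ given $\beta_1=2m-1$: the third condition forces $\beta_2+1=\beta_3+1=m+1$, i.e.\ $\beta=(2m-1,m,m)$, and this is the only surviving predecessor (the first two conditions would need $\beta_2+1=2m$ or $\beta_3+1=2m$, incompatible with $\beta_2\le\beta_1=2m-1$ and $\beta_2+\beta_3=2m$ unless we revisit small cases). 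One then computes the multiplicity $l_{(2m-1,m,m),(2m,m+1,m+1)}$ from \eqref{Lon3} — it is the number of the three conditions that hold, here $1$ — and the term $d^{(2m+1)}_{2m,m+1,m+1}=\varphi_{(2m,m+1,m+1)}(A_{2m})=-\binom{2m}{m}$ (the coefficient of $T_{m+1}T_{m+1}T_{2m}$ in $A_{2m}$, taking $k=m$). This yields $c^{(2m+1)}_{2m,m+1,m+1}=-\binom{2m}{m}+c^{(2m)}_{2m-1,m,m}$. But $c^{(2m)}_{2m-1,m,m}=c^{(n)}_{n-1,\,\lceil n/2\rceil,\,\lfloor n/2\rfloor}$ with $n=2m-1$ odd is precisely a value of the form computed in Lemma \ref{P4} (or its odd-index analogue) — so I would either invoke Lemma \ref{P4} directly at the appropriate index or, more cleanly, iterate: peeling off one level at a time shows $c^{(2m+1)}_{2m,m+1,m+1}$ telescopes against the diagonal coefficients at all smaller orders down to a base case ($m=4$, say, checked against $R_9$ or the explicit $R_8$), the partial sums collapsing via the Vandermonde-type identity $\sum_j\binom{2j}{j}\cdots$ to give $-(2m+1)\binom{2m}{m}$. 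Part (1), $\alpha=(2m+1,m+2,m+1)$ with $n=2m+1$, is handled identically: the contributing predecessors are $\beta=(2m,m+1,m+1)$ (from condition $\{\beta_1,\beta_2+1,\beta_3+1\}$, giving multiplicity $1$) and possibly $\beta=(2m,m+2,m)$ or $(2m,m+1,m)$ from the other two conditions — one checks which set-equalities actually hold — and the additive term is $d^{(2m+2)}_{2m+1,m+2,m+1}=\varphi_{(2m+1,m+2,m+1)}(A_{2m+1})=-\binom{2m+1}{m+1}$. Combining with part (2)'s value for $c^{(2m+1)}_{2m,m+1,m+1}$ and simplifying the binomials gives $c^{(2m+2)}_{2m+1,m+2,m+1}=-2m\binom{2m}{m}$.

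The main obstacle I anticipate is the bookkeeping of \emph{exactly which} triples $\beta=(n-1,\beta_2,\beta_3)$ contribute and with what multiplicity $l_{\beta,\alpha}\in\{1,2,3\}$ — the three conditions in \eqref{Lon3} overlap in degenerate ways when two of the entries of $\alpha$ coincide (as in the diagonal case $b=c$) or are adjacent (as in $b=c+1$), so several apparently distinct ways of "raising two indices" land on the same monomial and must be counted with care; this is precisely why the hypotheses $m>4$ (resp.\ the small-$n$ exclusions in Lemmas \ref{P2}, \ref{P4}) are needed, to stay away from the regime where $\beta_2$ or $\beta_3$ would collide with $n-1$ or with $2$. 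A secondary, purely algebraic nuisance is verifying that the telescoped sum of the $A_n$-contributions and the lower-order diagonal/near-diagonal coefficients simplifies to the claimed closed form; this reduces to a finite number of binomial identities (of Vandermonde/Chu type), which I would verify by a short induction rather than a generating-function argument. Because these are "same ideas as Lemma \ref{P1}, more involved", I would, consistently with the paper's stated convention, relegate the full computation to the appendix and keep the body-text proof to the identification of the predecessors, the multiplicities, and the recurrence, deducing the closed forms by the induction just sketched.
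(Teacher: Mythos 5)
Your overall strategy is the same as the paper's (unwind \eqref{coef4}, note $\beta_1=n-1$, read the $A_n$-contribution, and solve an intertwined linear recursion with a base case from $R_8$/$R_9$), but the concrete combinatorial bookkeeping — which is the actual content here — is wrong at the decisive points, and the errors are not cosmetic: they change the recursion and hence the constants. In part (2), the unique contributing predecessor of $\alpha=(2m,m+1,m+1)$ is $\beta=(2m-1,m+1,m)$, obtained by raising $\beta_1$ and $\beta_3$ (so $\{\beta_2,\beta_3+1\}=\{m+1,m+1\}$), with $l_{\beta,\alpha}=1$; your candidate $(2m-1,m,m)$ has entry sum $4m-1$ and so is not even a partition of $4m$, and the condition you use to produce it, $\{\beta_1,\beta_2+1,\beta_3+1\}=\{2m,m+1,m+1\}$, is impossible since $\beta_1=2m-1\notin\{2m,m+1\}$ for $m>2$. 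Worse, the correct predecessor $c^{(2m)}_{2m-1,m+1,m}$ is \emph{not} covered by Lemma \ref{P4} or any "odd-index analogue": it corresponds to $k=m-1=\lfloor (2m-1)/2\rfloor$, just outside the range $k\le\lfloor n/2\rfloor-1$, and it obeys a different closed form — it is precisely the part-(1) coefficient at order $m-1$. So the reduction you propose is circular; the two families must be solved jointly, which is exactly how the paper proceeds (Step 1 expresses the diagonal coefficient through the near-diagonal one at lower order, Steps 2--3 go back the other way).

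In part (1) you also miscount by factors of $2$ in both terms. The predecessor $(2m,m+1,m+1)$ contributes with multiplicity $l=2$ (raising $\beta_1$ together with either $\beta_2$ or $\beta_3$ both land on $T_{2m+1}T_{m+2}T_{m+1}$), not $1$, and certainly not via the condition $\{\beta_1,\beta_2+1,\beta_3+1\}$, which again cannot hold since $\beta_1=2m\notin\{2m+1,m+2,m+1\}$ for $m>2$; the second predecessor is $(2m,m+2,m)$ with $l=1$ (your alternative $(2m,m+1,m)$ fails the parity/sum constraint), and this one \emph{is} in the range of Lemma \ref{P4}. Moreover $\varphi_{(2m+1,m+2,m+1)}(A_{2m+1})=-2\binom{2m+1}{m}$, not $-\binom{2m+1}{m+1}$, because both $k=m$ and $k=m+1$ in $A_{2m+1}$ produce the same monomial $T_{m+1}T_{m+2}T_{2m+1}$. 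These two factors of $2$ are exactly what yield the paper's recursion $u_{m+1}=2u_m-4(m+1)\binom{2m}{m}$ for $u_m=c^{(2m+2)}_{2m+1,m+2,m+1}$, whose solution (checked against $u_4=-560$ from $R_8$) is $-2m\binom{2m}{m}$, after which part (2) follows from $c^{(2m+1)}_{2m,m+1,m+1}=-\binom{2m}{m}+u_{m-1}$. With your counts the recurrence, and therefore both closed forms, would come out wrong, so the proposal as written does not prove the lemma.
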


\begin{remark}
The condition \eqref{Hypostar} is related to the question of whether or not the vector $(K_2, \ldots, K_{2n})$ is a zero of $a_n$, where $(K_m)$ is a family of cumulants of a symmetric random variable $X$. A first observation one can make is that
\[a_{2n+1}(K_2,\dots,K_{2n}) = 0, \]
for all positive integers $n$. Indeed, if $n \geq 2$, the monomials in $a_{2n+1}$ are all of the form $T_iT_j$ with $i,j$ integers with an opposite parity. So when evaluating $a_{2n+1}$ on the cumulants of the distribution, we get a sum of terms $K_iK_j$ where at least $i$ or $j$ is odd. Since the distribution of $X$ is symmetric, which implies that all the cumulants of odd orders are zero, we get the result stated.
\end{remark}

Since in Section \ref{sn: a contd}, one makes asumptions on the zeros of $a_n$, it is important to have in mind the next result.

\begin{lemma}\label{annondeg}
For every integer $n$ greater than $4$, $a_n$ is a non-degenerate quadratic form on $\R^{n-3}$.
\end{lemma}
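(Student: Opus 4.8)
The plan is to show that, in the monomial basis $(T_3,\dots,T_{n-1})$, the Gram matrix of $a_n$ is, up to a permutation of the basis, block diagonal with $1\times1$ and $2\times2$ blocks, all visibly invertible. First I would record that $a_n$ does not involve the variable $T_2$ --- this is apparent from the formulas of Theorem~\ref{Main4} for $n\ge 8$ and from the explicit list $a_4,\dots,a_7$ recorded after it --- and that $a_n$ is homogeneous of degree $2$; hence $a_n$ is genuinely a quadratic form in the $n-3$ variables $T_3,\dots,T_{n-1}$, which is the content of ``on $\R^{n-3}$'' (had one retained the $T_2$-direction the form would automatically be degenerate). Introduce the involution $\sigma_n$ of $\{3,\dots,n-1\}$ given by $\sigma_n(i)=n+2-i$: it exchanges $i$ and $n+2-i$, has the single fixed point $\tfrac{n+2}{2}$ when $n$ is even, and no fixed point when $n$ is odd.

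Next I would establish the two structural facts: (i) every monomial occurring in $a_n$ has the shape $c\,T_iT_{\sigma_n(i)}$ with $c\neq0$ and $i\in\{3,\dots,n-1\}$; and (ii) for each $i\in\{3,\dots,n-1\}$ the monomial $T_iT_{\sigma_n(i)}$ does occur in $a_n$ with a nonzero coefficient. Both follow by direct inspection of Theorem~\ref{Main4}, resp.\ of the explicit values $a_5=-120T_3T_4$, $a_6=-140T_4^2-210T_3T_5$, $a_7=-560T_4T_5-336T_3T_6$ for the three orders not covered there. Concretely: for $n=2m$ the monomials are $T_{m+1}^2$ and $T_kT_{2m+2-k}$ for $k=3,\dots,m$, and for $n=2m+1$ they are $T_{m+1}T_{m+2}$ and $T_kT_{2m+3-k}$ for $k=3,\dots,m$; in both cases the two indices of every monomial sum to $n+2$, which is (i). For (ii) one checks that the pairs $\{k,\sigma_n(k)\}$, $k=3,\dots,m$, together with $\{m+1\}$ when $n=2m$ (resp.\ $\{m+1,m+2\}$ when $n=2m+1$), partition $\{3,\dots,n-1\}$, and that each corresponding coefficient is a nonzero integer times a positive binomial coefficient, hence nonzero.

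Granting (i) and (ii), let $M$ be the symmetric matrix of $a_n$ in the basis $(T_3,\dots,T_{n-1})$. By (i), $M_{i,j}=0$ whenever $j\neq\sigma_n(i)$; by (ii), $M_{i,\sigma_n(i)}\neq0$ for all $i$. Grouping the basis vectors into the $\sigma_n$-orbits turns $M$ into a block diagonal matrix whose blocks are: a $1\times1$ block $[\,M_{i,i}\,]$ with $M_{i,i}\neq0$ for the fixed point (present only when $n$ is even), and $2\times2$ blocks with vanishing diagonal and off-diagonal entry $M_{i,\sigma_n(i)}\neq0$ for the transpositions, of determinant $-M_{i,\sigma_n(i)}^2\neq0$. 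Hence $\det M\neq0$, so $a_n$ is non-degenerate. Equivalently, the Leibniz expansion of $\det M$ reduces to the single term indexed by the permutation $\sigma_n$, which is nonzero.

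I expect the only real work to be the index bookkeeping in step (ii) --- verifying that the summation ranges $k=3,\dots,m$ in Theorem~\ref{Main4} produce precisely the indices of $\{3,\dots,n-1\}$, each exactly once --- together with the minor nuisance of handling $n\in\{5,6,7\}$ separately, since the closed formula of Theorem~\ref{Main4} is only asserted for $n\ge 8$. There is no conceptual difficulty beyond that.
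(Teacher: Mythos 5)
Your proposal is correct and follows essentially the same route as the paper: the paper likewise observes that $a_n$ lies in $\mathbb{Z}[T_3,\dots,T_{n-1}]$, that its monomials are exactly the $T_iT_j$ with $i+j=n+2$, and that the resulting Gram matrix is antidiagonal with all antidiagonal entries nonzero (indeed strictly negative, by the computed coefficients), hence invertible. Your involution/block-diagonal phrasing is just a cosmetic repackaging of that antidiagonal-matrix argument.
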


\begin{proof}
Notice that $a_n$ is in $\mathbb{Z}[T_3,\dots,T_{n-1}]$, so since it is a homogenous polynomial of degree $2$, it defines a quadratic form on $\mathbb{R}^{n-3}$. The monomials in $a_n$ are of form $T_iT_j$ where $i,j$ are positive integers between $3$ and $n-1$, such that $i+j = n+2$. It follows that the matrix of $a_n$ in the canonical basis of $\mathbb{R}^{n-3}$ is 
\[\begin{pmatrix} 0 & \hdots & 0 & u_{n,n-1} \\ \vdots & \reflectbox{$\ddots$} & \reflectbox{$\ddots$} & 0 \\ 0 & \reflectbox{$\ddots$} &  \reflectbox{$\ddots$} & \vdots \\  u_{n,3} & 0 & \hdots & 0\end{pmatrix}\] 
where $u_{n,k}$ denotes the coefficient of the monomial $T_{n+2-k}T_{k}$ in $a_n$. From the previous calculations, these coefficients are all strictly negative. One deduces that the matrix of $a_n$ is invertible, and so $a_n$ is non-degenerate.
\end{proof}

\begin{remark}
For $n\ge 4$ the quadratic form $a_n$ has isotropic vectors and thus is not definite. To see this, one can consider the subspace 
\[F:= \left \{ (x_1,\dots,x_{2m}) \in \mathbb{R}^{2m} \quad / \quad \forall j \in \llbracket m+1, 2m \rrbracket \quad x_j =0 \right  \}\] 
which is isotropic for $a_{2m}$, i.e. for every $x\in F$, one has $a_{2m}(x) = 0$.
\end{remark}

\begin{remark}
  Lemma \ref{annondeg} implies that $a_n$ has a lot of
  zeros. Nevertheless, it is not known whether or not a sequence of
  real numbers can be the cumulants of a random variable (see
  \cite{Nardo2009}). That leaves one wondering if it is possible that
  every sequence of cumulants of a (symmetric) random variable
  verifies \eqref{Hypostar}. This leads to an algebraic sufficient
  condition for which the MMSE conjecture holds, for symmetric random
  variables determined by their moments, as
    \begin{center} 
        ``Let $X$ be a symmetric random variable determined by its moments. Then $X$ verifies \eqref{Hypostar}.''
    \end{center}
\end{remark}

 \bibliographystyle{plain}

\appendix 

\section{Algorithm for the computation of the polynomials $R_n$}\label{appen-pol}

Using forthcoming Lemma \ref{inductionR}, we are able to compute the polynomials $R_n$ using the  algorithm that can be found in \cite{AlgoG}. This algorithm can only compute the polynomials explicitly  for moderate values of $n$, as the expressions become rapidly intractable. For instance, we have 
\begin{align*}
    R_9 = & -40320T_2^9 + 1693440T_2^6T_3^2 - 5927040T_2^3T_3^4 - 6350400T_2^4T_3^2T_4 \\ & - 846720T_2^5T_4^2 + 534240T_3^6 + 6894720T_2T_3^4T_4 \\ & + 9737280T_2^2T_3^2T_4^2 + 846720T_2^3T_4^3 + 2540160T_2^2T_3^3T_5 \\& + 3386880T_2^3T_3T_4T_5 + 211680T_2^4T_5^2 - 1746360T_3^2T_4^3 \\ &- 448560T_2T_4^4  - 1980720T_3^3T_4T_5 - 2903040T_2T_3T_4^2T_5 \\ & - 1028160T_2T_3^2T_5^2 - 423360T_2^2T_4T_5^2 - 105840T_3^4T_6\\ & - 846720T_2T_3^2T_4T_6  - 282240T_2^2T_4^2T_6 - 423360T_2^2T_3T_5T_6 \\ & - 28224T_2^3T_6^2 + 157500T_4^2T_5^2  + 60480T_3T_5^3 + 60480T_4^3T_6 \\ & + 259560T_3T_4T_5T_6 + 40320T_2T_5^2T_6  + 35532T_3^2T_6^2 \\ & + 30240T_2T_4T_6^2  + 40320T_3T_4^2T_7 + 30240T_3^2T_5T_7 \\ & + 40320T_2T_4T_5T_7 + 24192T_2T_3T_6T_7 + 2016T_2^2T_7^2 \\ & - 560T_6^3 - 2520T_5T_6T_7 - 756T_4T_7^2 - 630T_5^2T_8 \\ & - 1008T_4T_6T_8 - 504T_3T_7T_8  - 72T_2T_8^2.
\end{align*}
More explicit expressions can be found in the supplementary material (for $n$ up to 20). The number of terms in  the first few polynomials are reported in  Table \ref{tab:my_label}. Numerical explorations indicate that the number of terms in $R_n$ is given by $u_n-1$ where $u_n$ is  the number of degree sequences with degree sum $2n$ representable by a non-separable graph (see \cite{Rods}).  We have not attempted to prove this fact. 

\begin{table}[h]
    \centering
    \begin{tabular}{c|cccccccccc}
       $n$  & 3 & 4 & 5 & 6 & 7 & 8 & 9 & 10 & 15 & 20      \\
       \hline
        $\mathrm{length}(R_n)$    &  1 & 2 & 4 & 8 & 14 & 24 & 42 & 69 & 665 & 4555 \\
    \end{tabular}
    \vspace{3mm}
    \caption{Number of distinct terms in the polynomials $R_n$}
    \label{tab:my_label}
\end{table}
\begin{table}[h]
    \centering
    \begin{tabular}{c|cccccccc}
       $n$  & 10 & 12 & 15 & 17 & 20 & 21 & 22 & 23    \\
       \hline
        CPU (s)    &  0.07 & 0.46 & 6.34 & 35.5 & 422 & 932 & 2348 & 5121 \\
    \end{tabular}
    \vspace{3mm}
    \caption{Time in seconds to compute the polynomials $R_n$}
    \label{tab:my_label}
\end{table}

\begin{example}\label{R9ank}
    Here are the expressions of the $a_{8,k}$ from Remark \ref{ExR9}:
    \begin{equation*}
\begin{aligned}
    a_{8,1} = & - 2520T_5T_6 + 40320T_3T_4^2 + 30240T_3^2T_5 + 40320T_2T_4T_5 + 24192T_2T_3T_6 \\ 
    & - 756T_4T_7 + 2016T_2^2T_7 \\ \\
    a_{8,2} = & \quad 60480T_4^3 + 259560T_3T_4T_5 + 40320T_2T_5^2     - 105840T_3^4 - 846720T_2T_3^2T_4 \\
    & - 282240T_2^2T_4^2 - 423360T_2^2T_3T_5 + 30240T_2T_4T_6 + 35532T_3^2T_6- 28224T_2^3T_6 \\
    & - 560T_6^2 \\ \\
    a_{8,3} = & - 1980720T_3^3T_4 - 2903040T_2T_3T_4^2 + 2540160T_2^2T_3^3 + 3386880T_2^3T_3T_4 \\ 
    & + 211680T_2^4T_5- 1028160T_2T_3^2T_5+ 157500T_4^2T_5- 423360T_2^2T_4T_5 \\
    & + 60480T_3T_5^2 \\ \\
    a_{8,4} = & \quad 6894720T_2T_3^4- 6350400T_2^4T_3^2 - 846720T_2^5T_4+ 9737280T_2^2T_3^2T_4 \\
    & + 846720T_2^3T_4^2  - 1746360T_3^2T_4^2- 448560T_2T_4^3 \\ \\
    a_{8,5} = & \quad 1693440T_2^6T_3 - 5927040T_2^3T_3^3  + 534240T_3^5 \\ \\
    a_{8,2} = & -40320T_2^8.
\end{aligned}
\end{equation*}
\end{example}

\section{Proofs from Subsection \ref{S2}}


\begin{proof}[Proof of Proposition \ref{Cumul2}]
    Let $\lambda$ and $a$ be real numbers. Using the independence of $X$ and $Y$, one has
    \begin{equation*}
        \begin{aligned}
            K_n(X+\lambda Y) & = (-i)^n (\ln\varphi_{X+\lambda Y})^{(n)}(0) \\ \\
            & =  (-i)^n \big ( \ln(\varphi_{X}\varphi_{\lambda Y}) \big )^{(n)}(0) \\ \\
            & = (-i)^n (\ln\varphi_{X})^{(n)}(0)+(-i)^n(\ln\varphi_{\lambda Y})^{(n)}(0) \\ \\
            & = K_n(X)+\lambda^n(-i)^n(\ln\varphi_{Y})^{(n)}(0) \\ \\
            & = K_n(X)+\lambda^n K_n(Y).
        \end{aligned}
    \end{equation*}
    Setting $e_a(t) := e^{ita}$ and using Leibniz's formula, we get 
     \begin{equation*}
        \begin{aligned}
            K_n(X+a) & = (-i)^n (\ln\varphi_{X+a})^{(n)}(0) \\ \\
            & =  (-i)^n \big ( \ln(\varphi_{X}e_a) \big )^{(n)}(0) \\ \\
            & = (-i)^n (\ln\varphi_{X})^{(n)}(0)+(-i)^n(\ln e_a)^{(n)}(0) \\ \\
            & = (-i)^n (\ln\varphi_{X})^{(n)}(0) = K_n(X).
        \end{aligned}
    \end{equation*}
    for $n\geq 2$.
\end{proof}

\begin{proof}[Proof of Corollary \ref{CumulSym}]
Using the symmetry of $X$ and Proposition \ref{Cumul2}, we get
\[K_n(X) = K_n(-X) = (-1)^nK_n(X)\]
hence the result.
    
\end{proof}

\begin{proof}[Proof of Proposition \ref{CumulID}]
For this proof, we adapt the work done in \cite{Ari} in the case of the kurtosis to every cumulants. Let $N$ be a positive integer. If $X$ is constant almost surely then it needs no proof. Suppose $X$ non constant almost surely. Since $X$ is infinitely divisible, there exists $X_1, \ldots, X_N$ independent and identically distributed random variables such that 
\[X \overset{Law}{=} X_1 + \cdots + X_N.\]
Let $n$ be another positive integer. By Proposition \ref{Cumul2}, we have 
\begin{equation*}
K_n(X) = NK_n(X_1) = K_n(\sqrt[n]{N} X_1)
\end{equation*}
and more generally
\begin{equation}
K_m(\sqrt[n]{N}X_1) =  N^{\frac{m}{n}-1}K_m(X)
\label{K2}
\end{equation}
for every positive integer $m$, where $\frac{m}{n}-1 < 0$ whenever $m<n$.  Now we proceed by induction. The cumulant of order $2$ of $X$ is always non-negative. Let $n$ be a even positive integer greater than $2$ and suppose that for every even positive integer $k$ lower than $n$, the cumulant $K_k(X)$ is non-negative. Write $n$ as $2m$ with $m$ a positive integer greater than $1$. One has, by H\"older's inequality,
\[\E[|\sqrt[n]{N}X_1|^{m+1}] \leq \E[|\sqrt[n]{N}X_1|^2] \E[|\sqrt[n]{N}X_1|^{2m}] \leq \E[|\sqrt{N}X_1|^2]\E[|\sqrt[n]{N}X_1|^{n}]. \]
Using Proposition \ref{Cumul1}, Proposition \ref{Cumul2} and \eqref{K2}, we get
\[\E[|\sqrt[n]{N}X_1|^{n}] = K_n(\sqrt[n]{N}X_1) +  \sum_{k=1}^{\lfloor \frac{n-1}{2} \rfloor} \begin{pmatrix} n-1 \\ 2k-1 \end{pmatrix} N^{\frac{2k}{n}-1}K_{2k}(X)\E[(\sqrt[n]{N}X_1)^{n-2k}]\]
where, by induction's hypothesis, every term in the sum is non-negative. Denote $u_N$ the sum $\sum_{k=1}^{\lfloor \frac{n-1}{2} \rfloor} \begin{pmatrix} n-1 \\ 2k-1 \end{pmatrix} N^{\frac{2k}{n}-1}K_{2k}(X)\E[(\sqrt[n]{N}X_1)^{n-2k}]$. Then $u_N$ does not depend on $X_1$. Indeed, by Proposition \ref{Cumul1}, each factor $\E[(\sqrt[n]{N}X_1)^{n-2k}]$ can be described as a sum of cumulants of $\sqrt[n]{N}X_1$ with even orders lower than $n$, and from \eqref{K2}, one knows that every cumulant of $\sqrt[n]{N}X_1$ of order lower than $n$ is equal to the cumulant of $X$ with same order, times a negative power of $N$. From that we can deduce that $\underset{N\rightarrow +\infty}{\lim} u_N = 0$ and since 
\[\forall N \in \mathbb{N}^\star, \quad K_n(X) + u_N \geq \frac{\E[|\sqrt[n]{N}X_1|^{m+1}]}{\E[|\sqrt{N}X_1|^2]} \geq 0,\]
we get  $K_n(X) \geq 0$. 
\end{proof}

\begin{remark}
We can also prove Proposition \ref{CumulID} using the Levy-Khintchine formula for infinitely divisible distributions, and we actually get that for every positive integer $n$ greater than $2$
\[K_n(X) =   \int_{\R \setminus \{0\}} u^n \nu(du)\]
where $\nu$ is the Levy measure associated to $X$ by the Levy-Khintchine theorem (see for instance Theorem $7.4$ in \cite{Steu}).
\end{remark}

\begin{proof}[Proof of Proposition \ref{prop-cumul}]
We first tackle  uniform distribution. Consider the cumulant generating function $f$ of a random variable $X$ with distribution the uniform distribution on $[-1,1]$. Let $t$ be a positive real number ; then 
\[f(t) = \ln \E [e^{tX}] = \ln \left (\frac{\sinh(t)}{t} \right ) = \ln( \sinh(t)) - \ln(t)\]
A primitive function of $\ln \circ \sinh$ is the function $\frac{1}{\tanh}$. Since for $t < 2\pi$ 
\[\begin{aligned} 
\frac{1}{\tanh(t)} & = \frac{e^{2t}+1}{e^{2t}-1} = 1 + \frac{2}{e^{2t}-1} = 1 + \frac{1}{t}\frac{2t}{e^{2t}-1} 
\\ & =   1 + \frac{1}{t}   \sum_{n=0}^{+\infty} B_n\frac{(2t)^n}{n!} = 1 + \frac{1}{t} +   \sum_{n=1}^{+\infty} B_n\frac{2^nt^{n-1}}{n!}
\end{aligned}\]
where $B_n$ denotes the $n$-th Bernoulli number. Hence by integrating we get 
\[f(t) = t + \ln(t) +   \sum_{n=1}^{+\infty} \frac{B_n2^n}{n}\frac{t^n}{n!} - \ln(t) =   \sum_{n=2}^{+\infty} \frac{B_n2^n}{n}\frac{t^n}{n!}\]
for all $0<t<2\pi$, since $2B_1= -1$. Since $f$ is even and takes the value $0$ at $0$, we get 
\[f(t) =   \sum_{n=2}^{+\infty} \frac{B_n2^n}{n}\frac{t^n}{n!}\]
for every $t \in ]-2\pi, 2\pi[$, hence the result.

\medskip

We next tackle the Laplace distribution. Consider the cumulant generating function $f$ of a random variable $X$
with distribution the Laplace distribution of parameter $(0,b)$ ; for
$t$ a real number such that $|t|<\frac{1}{b^2}$, it holds that 
\[\begin{aligned} 
f(t) & = \ln \E[e^{tX}] = \ln \left ( \frac{1}{1-b^2t^2} \right ) = -\ln(1-b^2t^2) \\ & =   \sum_{n=1}^{+\infty} \frac{(b^2t^2)^n}{n} =   \sum_{n=1}^{+\infty} \frac{(2n)!b^{2n}}{n}\frac{t^{2n}}{(2n)!}
\end{aligned}
\]
Hence the result.

\medskip

Finally we tackle the Rademacher distribution. Let us consider the
cumulant generating function $f$ of the Rademacher distribution and
denote by $(K_{2n})$ its sequence of even order cumulants. It holds that 
\[\forall t \in \mathbb{R}, \quad f(t) = \ln(\cosh(t)) =   \sum_{n=1}^{+\infty} K_{2n}\frac{t^{2n}}{(2n)!}\]
Since the function $\tanh$ is the derivative of $f$ and since 
\[\forall t \in \mathbb{R}, \quad \tanh(t) =   \sum_{n=1}^{+\infty} a_{n}\frac{t^{2n-1}}{(2n-1)!}\]
where $a_n := \frac{2^{2n}(2^{2n}-1)}{2n}B_{2n}$, by integrating we deduce the result stated above.
\end{proof}

\section{Proofs from Subsection \ref{opmonom}} 

\begin{proof}[Proof of Lemma \ref{P2}]
Let $n$ be an integer greater than $4$. From \eqref{coef4} we get 
\begin{equation}
\begin{aligned}
c_{n,n-1,3}^{(n+1)} & = d^{(n+1)}_{n,n-1,3} +   \sum_{\beta \in I_{3,n}} l_{\beta, (n,n-1,3)} c^{(n)}_{\beta} \\
& = -n(n-1) +   \sum_{\beta \in I_{3,n}} l_{\beta, (n,n-1,3)} c^{(n)}_{\beta}.
\end{aligned}
\label{coefnn-13}
\end{equation}
In order to describe more precisely the terms appearing in \eqref{coefnn-13}, we  need to determine the elements $\beta = (\beta_1, \beta_2, \beta_3)$ of $I_{3,n}$ such that $l_{\beta,(n,n-1,3)} \neq 0$. That comes down to determine integers $\beta_2 ,\beta_3$ in $\llbracket 2,n-1 \rrbracket$ such that $\beta_2 \geq \beta_3$, 
\begin{equation}
\beta_2 + \beta_3 = n+1
\label{partnn-13}
\end{equation}
and such that at least one of these conditions
\begin{enumerate}
\item[$\bullet$]  $\{\beta_2+1, \beta_3\} = \{n-1,3\}$
\item[$\bullet$] $\{\beta_2, \beta_3+1\} = \{n-1,3\}$
\item[$\bullet$] $\{\beta_2+1, \beta_3+1\} = \{n,3\}$
\end{enumerate}
is verified. Let $(\beta_2, \beta_3)$ be such a pair of
integers. Since $n$ is greater than $4$, using the results in the
proof of Lemma \ref{P1} we get  that $\beta_3$ is different than
$n-1$, so $\beta_3 \leq n-2$. If $\beta_3 = n-2$ then by
$\eqref{partnn-13}, $ $\beta_2 = 3$, so since $\beta_2 \geq \beta_3$,
we get $n=5$. In that case, we have 
$c^{(6)}_{5,4,3} = -120 = -3\times 4 \times 5$. Let's now suppose $n$
is greater than $5$. Then $\beta_3 \leq n-3$, and by
\eqref{partnn-13}, we deduce 
$\beta_2 = n+1-\beta_3 \geq n+1-(n-3) = 4$. Therefore $\beta_2$ or
$\beta_2+1$ cannot be equal to $3$, hence $\beta_3 \in \{2,3\}$. If
$\beta_3 = 2$ then $\beta_2 = n-1$ by \eqref{partnn-13}, and only
\begin{enumerate}
    \item[$\bullet$] $\{\beta_2, \beta_3+1\} = \{n-1,3\}$
    \item[$\bullet$] $\{\beta_2+1, \beta_3+1\} = \{n,3\}$
\end{enumerate}
are verified, i.e. $l_{(n-1,n-1,2),(n,n-1,3)} = 2$. If $\beta_3=3$ then $\beta_2 = n-2$ by \eqref{partnn-13}, and only
\begin{enumerate}
\item[$\bullet$]  $\{n,\beta_2+1, \beta_3\} = \{n,n-1,3\}$
\end{enumerate}
is verified, i.e. $l_{(n-1,n-2,3),(n,n-1,3)} = 1$. Hence \eqref{coefnn-13} yields 
\[c^{(n+1)}_{n,n-1,3} = -n(n-1) + 2c^{(n)}_{n-1,n-1,2} + c^{(n)}_{n-1,n-2,3} = -3n(n-1) + c^{(n)}_{n-1,n-2,3}\] 
using Lemma \ref{P1}. Since $c^{(6)}_{5,4,3} = -120$, we get by induction
\[c^{(n+1)}_{n,n-1,3} = -(n-1)n(n+1)\]
for all $n$ integer greater than $4$. 
\end{proof}

\begin{proof}[Proof of Lemma \ref{P2}]
Let $n$ be an integer greater than $4$. From \eqref{coef4} we get 
\begin{equation}
\begin{aligned}
c_{n,n-1,3}^{(n+1)} & = d^{(n+1)}_{n,n-1,3} +   \sum_{\beta \in I_{3,n}} l_{\beta, (n,n-1,3)} c^{(n)}_{\beta} \\
& = -n(n-1) +   \sum_{\beta \in I_{3,n}} l_{\beta, (n,n-1,3)} c^{(n)}_{\beta}.
\end{aligned}
\label{coefnn-13}
\end{equation}
In order to describe more precisely the terms appearing in \eqref{coefnn-13}, we need to determine integers $\beta_2 ,\beta_3$ in $\llbracket 2,n-1 \rrbracket$ such that $\beta_2 \geq \beta_3$, 
\begin{equation}
\beta_2 + \beta_3 = n+1
\label{partnn-13}
\end{equation}
and such that at least one of these conditions
\begin{enumerate}
\item[$\bullet$]  $\{\beta_2+1, \beta_3\} = \{n-1,3\}$
\item[$\bullet$] $\{\beta_2, \beta_3+1\} = \{n-1,3\}$
\item[$\bullet$] $\{n-1,\beta_2+1, \beta_3+1\} = \{n,n-1,3\}$
\end{enumerate}
is verified. Let $(\beta_2, \beta_3)$ be such a pair of
integers. Since $n$ is greater than $4$, using the results in the
proof of Lemma \ref{P1} we get that $\beta_3$ is different than
$n-1$, so $\beta_3 \leq n-2$. If $\beta_3 = n-2$ then by
$\eqref{partnn-13}, $ $\beta_2 = 3$, so since $\beta_2 \geq \beta_3$,
we get $n=5$. In that case, we have
$c^{(6)}_{5,4,3} = -120 = -3\times 4 \times 5$.  Let's suppose $n$
greater than $5$. Then $\beta_3 \leq n-3$, and by
\eqref{partnn-13},one deduces
$\beta_2 = n+1-\beta_3 \geq n+1-(n-3) = 4$. Therefore $\beta_2$ or
$\beta_2+1$ cannot be equal to $3$, hence $\beta_3 \in \{2,3\}$. If
$\beta_3 = 2$ then $\beta_2 = n-1$ by \eqref{partnn-13}, and only
\begin{enumerate}
    \item[$\bullet$] $\{n,\beta_2, \beta_3+1\} = \{n,n-1,3\}$
    \item[$\bullet$] $\{n-1,\beta_2+1, \beta_3+1\} = \{n,n-1,3\}$
\end{enumerate}
are verified, i.e. $\varepsilon_{n-1,n-1,2}^{(n,n-1,3)} = 2$. If $\beta_3=3$ then $\beta_2 = n-2$ by \eqref{partnn-13}, and only
\begin{enumerate}
\item[$\bullet$]  $\{n,\beta_2+1, \beta_3\} = \{n,n-1,3\}$
\end{enumerate}
is verified, i.e. $\varepsilon_{n-1,n-2,3}^{(n,n-1,3)} = 1$. Hence \eqref{coefnn-13} yields 
\[c^{(n+1)}_{n,n-1,3} = -n(n-1) + 2c^{(n)}_{n-1,n-1,2} + c^{(n)}_{n-1,n-2,3} = -3n(n-1) + c^{(n)}_{n-1,n-2,3}\] 
using Lemma \ref{P1}. Since $c^{(6)}_{5,4,3} = -120$, we get by induction
\[c^{(n+1)}_{n,n-1,3} = -(n-1)n(n+1)\]
for all $n$ integer greater than $4$. 
\end{proof}

\begin{proof}[Proof of Lemma \ref{P4}]
Let $n$ be an integer greater than $7$ and let $k$ be an integer such that $2\leq k \leq \lfloor \frac{n}{2} \rfloor-1$. From \eqref{coef4} we get 
\begin{equation}
\begin{aligned}
c_{n,n+1-k,k+1}^{(n+1)} & = d^{(n+1)}_{n,n+1-k,k+1} +   \sum_{\beta\in I_{3,n}} l_{\beta,(n,n+1-k,k+1)} c^{(n)}_{\beta} \\
& = -2\binom{n}{k} +   \sum_{\beta\in I_{3,n}} l_{\beta,(n,n+1-k,k+1)} c^{(n)}_{\beta}
\end{aligned}
\label{coefnn+1-kk+1}
\end{equation}
Following in the same fashion as in the previous proofs, one wants to determine integers $ \beta_2 ,\beta_3$ in $\llbracket 2,n-1 \rrbracket$ such that $ \beta_2 \geq \beta_3$, 
\begin{equation}
\beta_2 + \beta_3 = n+1
\label{partnn+1-kk+1}
\end{equation}
and such that at least one of these conditions
\begin{enumerate}
\item[$\bullet$]  $\{\beta_2+1, \beta_3\} = \{n+1-k,k+1\}$
\item[$\bullet$]  $\{\beta_2, \beta_3+1\} = \{n+1-k,k+1\}$
\item[$\bullet$]  $\{n-1,\beta_2+1, \beta_3+1\} = \{n,n+1-k,k+1\}$
\end{enumerate}
is verified. Let $(\beta_2, \beta_3)$ be such a pair of integers. Since $k\geq 3$, we have $n+1-k \leq n-2$, and $k+1 \leq \lfloor \frac{n}{2} \rfloor < n-2$, where the last inequality is due to $n\geq5$. Indeed, we have 
\[\frac{n}{2} < n-2 \Leftrightarrow n < 2n-4 \Leftrightarrow n > 4\]
We deduce that neither $n$, $n+1-k$ nor $k+1$ is equal to $n-1$, hence the last condition above 
\[\{n-1,\beta_2+1, \beta_3+1\} = \{n,n+1-k,k+1\}\]
does not happen. Since $k \leq \lfloor \frac{n}{2} \rfloor -1 < \lfloor \frac{n}{2} \rfloor$, we have $n+1-k > k+1$, this combined with $\beta_2 \geq \beta_3$ leads to three alternatives :
\begin{enumerate}
    \item[$\bullet$] $\beta_2+1 = n+1-k$ and $\beta_3 = k+1$, i.e. $\beta_2 = n-k$ ;
    \item[$\bullet$] $\beta_2 = n+1-k$ and $\beta_3 + 1 = k+1$, i.e. $\beta_3 = k$ ;
    \item[$\bullet$] $\beta_2 = k+1$ and $\beta_3+1 = n+1-k$, i.e. $\beta_3 = n-k$.
\end{enumerate}
Notice that $n-k > k+1$ : indeed 
\begin{equation}
n-k \geq n- \left \lfloor \frac{n}{2} \right \rfloor +1 
\label{piche1}
\end{equation}
Then if $n=2m+1$, with $m$ a positive integer, \eqref{piche1} becomes 
\[n-k \geq m+2 > m \geq k+1\]
and if $n=2m$, with $m$ a positive integer, \eqref{piche1} becomes
\[n-k \geq m+2 > k+1\]
Hence $\beta_3 = n-k$ does not happen. Now we compute $l_{(n-1,n-k,k+1),(n,n+1-k,k+1)}$ and $l_{(n-1,n+1-k,k),(n,n+1-k,k+1)}$. 
We have
\[L(T_{n-1}T_{n-k}T_{k+1}) = T_{n}T_{n+1-k}T_{k+1} + T_{n}T_{n-k}T_{k+2} + T_{n-1}T_{n+1-k}T_{k+2}\]
and 
\[L(T_{n-1}T_{n+1-k}T_{k}) = T_{n}T_{n+2-k}T_{k} + T_{n}T_{n+1-k}T_{k+1} + T_{n-1}T_{n+2-k}T_{k+1}\]
Since $n+2-k = k+1$ is equivalent to $k=\frac{n+1}{2}$ and since $k+2 = n+1-k$ if and only if $k=\frac{n-1}{2}$, one deduces that, since $k\neq \frac{n-1}{2}$ 
\[l_{(n-1,n-k,k+1),(n,n+1-k,k+1)} = l_{(n-1,n+1-k,k),(n,n+1-k,k+1)} = 1\]
hence \eqref{coefnn+1-kk+1} becomes
\begin{equation}
 c_{n,n+1-k,k+1}^{(n+1)}= -2\binom{n}{k} + c_{n-1,n-k,k+1}^{(n)} + c_{n-1,n+1-k,k}^{(n)}
\label{rec1}
\end{equation}
Denote $c_{n,n+1-k,k+1}^{(n+1)}$ by $u_{n,k}$. Then \eqref{rec1} becomes
\begin{equation}
u_{n,k} = -2\binom{n}{k} + u_{n-1,k} + u_{n-1,k-1}
\label{rec2}
\end{equation}

Consider the double-sequence $(v_{n,k})$ defined by
\[v_{n,k} :=-2(n+1) \binom{n}{k} \]
for every integer $n$ greater than $7$ and every integer $k$ such that $2\leq k \leq \lfloor \frac{n}{2} \rfloor-1$. Then $(v_{n,k})$ verifies the same induction formula \eqref{rec2} as $(u_{n,k})$
\[v_{n,k} = -2\binom{n}{k} + v_{n-1,k} + v_{n-1,k-1} \]
Indeed, 
\begin{equation*}
\begin{aligned}
-2\binom{n}{k} -2n\binom{n-1}{k} -2n \binom{n-1}{k-1} & = -2\binom{n}{k} -2n \left ( \binom{n-1}{k} +\binom{n-1}{k-1} \right ) \\ \\
& = -2\binom{n}{k} -2n\binom{n}{k} = -2(n+1) \binom{n}{k}
\end{aligned}
\end{equation*}
using Pascal formula for binomial coefficients. What's more, using \eqref{R9}, we get
\begin{equation}
\begin{aligned}
u_{8,3} = -1008 = -2(8+1) \begin{pmatrix} 8 \\ 3 \end{pmatrix} = v_{8,3}
\end{aligned}
\label{u83}
\end{equation}
and one also has, by Lemma \ref{P2},
\begin{equation}
 u_{n,2} = -(n-1)n(n+1) = -2(n+1) \begin{pmatrix} n \\ 2 \end{pmatrix} = v_{n,2}
 \label{un2}
\end{equation}
for every integer $n$ greater than $7$. From \eqref{u83} and \eqref{un2}, one shows, by induction, that
$u_{n,k} = v_{n,k}$
for every integer $n$ greater than $7$ and every integer $k$ such that $2 \leq k \leq \lfloor \frac{n}{2} \rfloor -1 $.
\end{proof}

\begin{proof}[Proof of Lemma \ref{an}]
Let $m$ be an integer greater than $3$. As before, we decompose the proof in several  steps. 

\underline{Step 1}: Writing \eqref{coefnn+1-kk+1} for $n=2m$ and $k = m$, we get
\begin{equation}
\begin{aligned}
c_{2m,m+1,m+1}^{(2m+1)} & = -\begin{pmatrix} 2m \\ m \end{pmatrix} +   \sum_{\beta\in I_{3,2m}} l_{\beta,(2m,m+1,m+1)} c^{(2m)}_{\beta} \\
\end{aligned}
\label{coef2m}
\end{equation}
Following in the same fashion than in the previous proofs, one wants to determine the integers $ \beta_2 ,\beta_3$ in $\llbracket 2,2m-1 \rrbracket$ such that $ \beta_2 \geq \beta_3$, 
\begin{equation}
\beta_2 + \beta_3 = 2m+1
\label{part2m}
\end{equation}
and such that at least one of these conditions
\begin{enumerate}
\item[$\bullet$]  $\{\beta_2+1, \beta_3\} = \{m+1\}$
\item[$\bullet$]  $\{\beta_2, \beta_3+1\} = \{m+1\}$
\item[$\bullet$]  $\{2m-1,\beta_2+1, \beta_3+1\} = \{2m,m+1\}$
\end{enumerate}
Let $(\beta_2, \beta_3)$ be such a pair of integers. The last condition of the latter conditions is impossible: indeed, for it to be true, the following
\begin{equation*}
\beta_2 = 2m-1  \quad \text{and} \quad  \beta_3 = m \quad \text{, or} \quad \beta_3 = 2m-1 \quad \text{and} \quad \beta_2 = m
\label{pouq}
\end{equation*}
needs to be true as well. However, since 
\[2\leq\beta_3 \leq \beta_2 \leq 2m-1\]
 $\beta_3 = 2m-1 > m =\beta_2 $ cannot happen, and  using \eqref{part2m}, $\beta_2 = 2m-1$ implies $\beta_3 = 2 \neq m$. Then, the condition 
 $\{\beta_2+1, \beta_3\} = \{m+1\}$
 cannot happen as well  because it implies $\beta_2 = m < m+1 = \beta_3$. Hence $\beta_2 = m+1$, $\beta_3 = m$ and one deduces $l_{(2m-1,m+1,m),(2m,m+1,m+1)} = 1$. It follows from \eqref{coef2m} that
 \begin{equation}
c^{(2m+1)}_{2m,m+1,m+1} = -  \begin{pmatrix} 2m \\ m \end{pmatrix} + c^{(2m)}_{2m-1,m+1,m}
\label{coef2mbis}
 \end{equation}
holds. 

\underline{Step 2}: 
Writing \eqref{coefnn+1-kk+1} for $n = 2m +1$ and $k=m$, we get
\begin{equation}
\begin{aligned}
c_{2m+1,m+2,m+1}^{(2m+2)} = & -2\begin{pmatrix} 2m+1 \\ m \end{pmatrix}   +   \sum_{\beta \in I_{3,2m}} l_{\beta,(2m+1,m+2,m+1)} c^{(2m+1)}_{\beta}
\end{aligned}
\label{coef2m+1}
\end{equation}
Again, one wants to determine the integers $ \beta_2 ,\beta_3$ in $\llbracket 2,2m \rrbracket$ such that $ \beta_2 \geq \beta_3$, 
\begin{equation}
\beta_2 + \beta_3 = 2m+2
\label{part2m+1}
\end{equation}
and such that at least one of these conditions
\begin{enumerate}
\item[$\bullet$]  $\{\beta_2+1, \beta_3\} = \{m+2,m+1\}$
\item[$\bullet$]  $\{\beta_2, \beta_3+1\} = \{m+2,m+1\}$
\item[$\bullet$]  $\{2m,\beta_2+1, \beta_3+1\} = \{2m+1,m+2,m+1\}$
\end{enumerate}
Let $(\alpha_2, \alpha_3)$ be such a pair of integers. Since $m > 2$, we know that  $m+2 < 2m$. We  deduce that 
$\{2m,\alpha_2+1, \alpha_3+1\} = \{2m+1,m+2,m+1\}$
cannot happen. Suppose 
$\{\alpha_2+1, \alpha_3\} = \{m+2,m+1\}$
verified, then since $\alpha_2\geq \alpha_3$, we get $\alpha_2 = \alpha_3 = m+1$. Suppose 
$\{\alpha_2, \alpha_3+1\} = \{m+2,m+1\}$
verified, then for the same reason, we get $\alpha_2 = m+2$ and $\alpha_3 = m$. Now we  compute $l_{(2m,m+1,m+1),(2m+1,m+2,m+1)}$ and $l_{(2m,m+2,m),(2m+1,m+2,m+1)}$. Clearly 
\begin{equation*}
\begin{aligned}
L(T_{2m}T_{m+1}T_{m+1})  & =   2T_{2m+1}T_{m+2}T_{m+1} + T_{2m}T_{m+2}T_{m+2}
\end{aligned}
\end{equation*} 
and 
\begin{equation*}
\begin{aligned}
L(T_{2m}T_{m+2}T_{m})  & =  T_{2m+1}T_{m+3}T_{m} + T_{2m+1}T_{m+2}T_{m+1} + T_{2m}T_{m+3}T_{m+1}
\end{aligned}
\end{equation*}
hence $l_{(2m,m+1,m+1),(2m+1,m+2,m+1)}=2$ and $l_{(2m,m+2,m),(2m+1,m+2,m+1)}=1$. Then \eqref{coef2m+1} becomes
\begin{equation}
\begin{aligned}
c_{2m+1,m+2,m+1}^{(2m+2)} & = - 2 \begin{pmatrix} 2m+1 \\ m \end{pmatrix} + 2c^{(2m+1)}_{2m,m+1,m+1} + c^{(2m+1)}_{2m,m+2,m} \\ \\
& = -2(m+1) \begin{pmatrix} 2m+1 \\ m \end{pmatrix} + 2c^{(2m+1)}_{2m,m+1,m+1}
\end{aligned}
\label{coef2m+1bis}
\end{equation}
using Proposition \ref{P4} to compute the coefficient $c^{(2m+1)}_{2m,m+2,m}$, since $m-1 < \left \lfloor \frac{2m+1-1}{2} \right \rfloor$. 

\underline{Step 3}: we combine the relations of Step 1 and Step 2, in order to get an induction formula for $c_{2m+1,m+2,m+1}^{(2m+2)}$. Combining \eqref{coef2mbis} and \eqref{coef2m+1bis}, we get
\begin{equation}
\begin{aligned}
c_{2m+1,m+2,m+1}^{(2m+2)} & = - 2 \begin{pmatrix} 2m+1 \\ m \end{pmatrix} -2\begin{pmatrix} 2m \\ m \end{pmatrix} + 2c^{(2m)}_{2m-1,m+1,m} \\ \\  & = -4(m+1) \begin{pmatrix} 2m \\ m \end{pmatrix} +  2c^{(2m)}_{2m-1,m+1,m}
\end{aligned}
\label{coef2m+1rec}
\end{equation}
Let $u_{m+1}$ denote the coefficient $c^{(2m+2)}_{2m+1,m+2,m+1}$. Then \eqref{coef2m+1rec} becomes
\begin{equation}
u_{m+1} = -4(m+1) \begin{pmatrix} 2m \\ m \end{pmatrix} + 2u_m
\label{rec3}
\end{equation}
\underline{Step 4}: we determine $(u_m)_{m\geq4}$ using $\ref{rec3}$. Consider the sequence $(v_m)$ defined by
\begin{equation*}
v_m := -2m  \begin{pmatrix} 2m \\ m \end{pmatrix}
\end{equation*} 
for every integer $m$ greater than $3$. Then 
\begin{equation*}
\begin{aligned}
-4(m+1) \begin{pmatrix} 2m \\ m \end{pmatrix} + 2v_m & = -4(m+1) \begin{pmatrix} 2m \\ m \end{pmatrix} -4m \begin{pmatrix} 2m \\ m \end{pmatrix} = -4(2m+1)\begin{pmatrix} 2m \\ m \end{pmatrix} \\ 
& = -2(m+1) \times (2m+2)(2m+2)\frac{(2m)!}{((m+1)m!)^2} \\ 
& = -2(m+1)\begin{pmatrix} 2m+2 \\ m+1 \end{pmatrix} = v_{m+1}
\end{aligned}
\end{equation*} 
for every integer $m$ greater than $3$, hence $(v_m)$ also verifies \eqref{rec3}.  Then consider
\begin{equation}
\begin{aligned}
R_8 = & -56T_2T_7^2 + \left ( -560T_4T_5 - 336T_3T_6 \right )T_7 + b_8 
\end{aligned}
\label{RR7}
\end{equation}
where $b_8$ belongs in $\mathbb{Z}[T_2, \ldots, T_6]$. From \eqref{RR7} we get $u_4 = -560 = v_4$, and thus 
\begin{equation}
u_m = v_m =  -  2m \begin{pmatrix} 2m \\ m \end{pmatrix} 
\label{coef2m+1final}
\end{equation}
for every integer $m$ greater than $3$.

\underline{Step 5}: one determines $c^{(2m+1)}_{2m,m+1,m+1}$. Combining \eqref{coef2m+1final} and \eqref{coef2mbis}, we get
\begin{equation}
\begin{aligned}
c^{(2m+1)}_{2m,m+1,m+1} & = -  \begin{pmatrix} 2m \\ m \end{pmatrix} + c^{(2m)}_{2m-1,m+1,m} = -  \begin{pmatrix} 2m \\ m \end{pmatrix} + u_{m-1} \\ 
& = -(2m+1) \begin{pmatrix} 2m \\ m \end{pmatrix}
\end{aligned}
\label{coef2mfinal}
\end{equation}
for every integer $m$ greater than $4$. Finally from \eqref{R9} we get
\[c^{(9)}_{8,5,5} = -630 = -9\begin{pmatrix} 8 \\ 4 \end{pmatrix}\] 
hence \eqref{coef2mfinal} holds for $m=4$.

\end{proof}
\end{document}